\newcommand{\PP}{\mathrm{P}}
\newcommand{\NP}{\mathrm{NP}}
\newcommand{\coNP}{\mathrm{coNP}}
\newcommand{\co}{\mathrm{co}\mbox{-}}
\newcommand{\DELTA}[1]{\Delta_{#1}\mathrm{P}}
\newcommand{\THETA}[1]{\Theta_{#1}\mathrm{P}}
\newcommand{\DELTAlog}[1]{\Delta_{#1}\mathrm{P}[O(\log n)]}
\newcommand{\SIGMA}[1]{\Sigma_{#1}\mathrm{P}}
\newcommand{\PI}[1]{\Pi_{#1}\mathrm{P}}
\newcommand{\OPT}{\mathrm{OptP}}
\newcommand{\OPTlog}{\mathrm{OptP}[O(\log n)]}
\renewcommand{\phi}{\varphi}
\renewcommand{\theta}{\vartheta}
\newcommand{\cal}[1]{\mathcal{#1}}
\newcommand{\nop}[1]{}
\newcommand{\var}{\operatorname{var}}
\newcommand{\npmax}{{\sc NP-Max}\xspace}
\newcommand{\lognpmax}{{\sc LogNP-Max}\xspace}
\newcommand{\nsat}{N_{\rm SAT}}
\newcommand{\lexmax}{{\sc LexMaxSat}\xspace}
\newcommand{\loglexmax}{{\sc LogLexMaxSat}\xspace}
\newcommand{\cardmax}{{\sc CardMaxSat}\xspace}
\newcommand{\cardmin}{{\sc CardMinSat}\xspace}
\newcommand{\indsetWithBound}{{\sc MaxIndependentSet}\xspace}
\newcommand{\revdal}{\circ_{D}}
\newcommand{\revsat}{\circ_{S}}
\newcommand{\revR}{\circ_{r}}
\newcommand{\mmod}{\mathrm{mod}}
\newcommand{\implication}{{\sc BR}-{\sc Implication}\xspace}
\newcommand{\mc}{{\sc BR}-{\sc Model Checking}\xspace}
\newcommand{\Var}{\mathit{Var}}
\newcommand{\relevance}{$\preceq$-{\sc Relevance}\xspace}
\newcommand{\relevanceplain}{{\sc Relevance}\xspace}
\newcommand{\pap}{\mathcal{P}}
\newcommand{\tuple}[1]{\langle #1 \rangle}
\newcommand{\s}{\mathcal{S}}
\newcommand{\clonename}[1]{\mathrm{#1}}
\newcommand{\frclosure}[1]{\ensuremath{\langle#1\rangle_{fr}}}
\newcommand{\mcal}[1]{\mathcal{#1}}
\newcommand{\relOr}[1]{\ensuremath{\text{Or}_{#1}}}
\newcommand{\cardminsat}{\text{\sc CardMinSat}}
\newcommand{\sat}{\text{\sc Sat}}
\newcommand{\coclone}[1]{\ensuremath{\langle#1\rangle}}
\newcommand{\cocloneFr}[1]{\ensuremath{\langle#1\rangle_{\text{fr}}}}
\newcommand{\cocloneFrnoeq}[1]{\ensuremath{\langle#1\rangle_{\text{fr},\ne}}}
\newcommand{\coclonenoexistnoeq}[1]{\ensuremath{\langle#1\rangle_{\not\exists,\ne}}}
\newcommand{\inv}[1]{\mathrm{Inv}\!\left(#1\right)}
\newcommand{\pol}[1]{\mathrm{Pol}\!\left(#1\right)}
\newcommand{\set}[1]{\ensuremath\left\{#1\right\}}
\newcommand{\daffine}{width-$2$ affine\xspace}
\newcommand{\ie}{\text{i.e.}\xspace} 
\newcommand{\eg}{\text{e.g.}\xspace}
\newlength\problemlength
\newcommand\problem[3]{%
\begin{list}{}{\labelwidth\problemlength \labelsep.7em \rightmargin1.5em
\leftmargin\problemlength \advance\leftmargin by3em%2em
\parsep0ex \itemsep.2ex plus.1ex}
\item[{\sl Problem:\hfill}] #1 \item[{\sl Input:  \hfill}] #2
\item[{\sl Output: \hfill}] #3
\end{list}
}
\newcommand\dproblem[3]{%
\begin{list}{}{\labelwidth\problemlength \labelsep.7em \rightmargin1.5em
\leftmargin\problemlength \advance\leftmargin by3em%2em
\parsep0ex \itemsep.2ex plus.1ex}
\item[{\sl Problem:\hfill}] #1 \item[{\sl Input:  \hfill}] #2
\item[{\sl Question: \hfill}] #3
\end{list}
}
\title{Do Hard SAT-Related Reasoning Tasks Become Easier in the Krom Fragment?}
\author{Nadia Creignou\rsuper{1}}
\address{\lsuper{1}Aix Marseille Univ, Universit\'{e} de Toulon, CNRS, LIS, Marseille, France}
\email{nadia.creignou@univ-amu.fr}
\author{Reinhard Pichler\rsuper{2}}
\address{\lsuper{2}TU Wien, Favoritenstra\ss e 9--11, A-1040 Vienna, Austria}
\email{pichler@dbai.tuwien.ac.at}
\author{Stefan Woltran\rsuper{2}}
\address{\vskip-7pt}
\email{woltran@dbai.tuwien.ac.at}
\begin{document}

\begin{abstract}
Many reasoning problems are based on the problem of satisfiability (SAT). While SAT itself becomes easy when restricting the structure of the formulas in a certain way, the situation is more opaque for more involved decision problems.  
We consider here the \cardmin problem which asks, given a propositional formula $\phi$ and an atom $x$, whether $x$ is true in some cardinality-minimal model of $\phi$. This problem is easy for the Horn fragment, but, as we will show in this paper, remains  $\THETA 2$-complete (and thus $\NP$-hard) for the Krom fragment (which is given by formulas in CNF where clauses have at most two literals).  We will make use of this fact to study the complexity of reasoning tasks in belief revision and logic-based abduction and show that, while in some cases the restriction to Krom formulas  leads to a decrease of complexity, in others it does not.  We thus also consider the \cardmin problem with respect to additional restrictions to Krom formulas towards a better understanding of the tractability frontier of such problems.
\smallskip

\noindent
Keywords: Complexity, Satisfiability, Belief Revision, Abduction, Krom Formulas.
\end{abstract}

\maketitle

\section{Introduction}
\label{sect:introduction}

%\intro{
%\noindent
%{\em Paragraph 1: Schaefer's classes, Horn vs. Krom; we want to investigate the restriction to Krom for 
%some hard reasoning problems.}
%\smallskip
%}

%\smallskip

%\noindent
By Schaefer's famous theorem \cite{stoc/Schaefer78}, we know that the SAT problem becomes tractable under certain syntactic restrictions such as the restriction to Horn formulas
(i.e., formulas in CNF 
where clauses have at most one positive literal) or to 
Krom formulas (i.e., clauses have at most two literals).
Propositional formulas play an important role in reasoning problems in a great variety of areas
such as belief change, logic-based abduction, closed-world reasoning, etc. 
Most of the relevant problems in these areas are intractable.
It is therefore a natural question whether restrictions on the %propositional 
formulas involved 
help to decrease the high 
complexity. 
While
the restriction to Horn formulas is usually well studied, other
restrictions in Schaefer's framework have received much less attention. 
In this work, we
have a closer look at the restriction to Krom formulas and
its effect on the complexity of several hard 
reasoning problems from the AI domain.

%\intro{
%\smallskip
%
%\noindent
%{\em Paragraph 2: Concentrate on problems that involve some notion of minimality; 
%briefly introduce Belief revision and Abduction.}
%
%\smallskip
%}

%\smallskip

%In many reasoning problems, some notion of {\em minimality\/} is involved. For instance, in answer set programming, 
%we are interested in {\em minimal\/} stable models. 
One particular source of complexity of such problems is the involvement of some notion
of {\em minimality\/}.
In closed world reasoning,
we add negative literals or more general formulas to a theory only if they are true
in every minimal model of the theory in order to circumvent inconsistency.
%The restriction to 
%minimal models is crucial here, since otherwise we may produce an inconsistency. 
% SW: appears below anyway.
In belief revision we want to 
revise a given belief set  by some new belief. To this end, we retain only those models of the new belief which have {\em minimal\/} distance to the models of the given belief set. 
In abduction, we search for a 
subset of the hypotheses (i.e., an ``explanation'') that is consistent with the given theory and which -- together with the theory -- explains (i.e., logically entails) all manifestations. Again, one is usually not content with any subset 
of the hypotheses but with a minimal one. 
%
%\intro{
%\smallskip
%
%\noindent
%{\em Paragraph 3: distinguish minimal w.r.t.\ set inclusion vs.\  cardinality.}
%
%\smallskip
%}
However, 
different notions of minimality might be considered, 
in particular, minimality w.r.t.\ set inclusion or w.r.t.\ cardinality.
In abduction, these two notions are directly applied to explanations 
\cite{jacm/EiterG95}. 
%Likewise, in abduction, one is interested in minimal explanations, where minimality again may be
%Very often, minimality is defined w.r.t.\ to some set. 
%To this end, it is convenient to identify a truth assignment $I$ with the set of variables that are true in $I$.
%The most common notions of minimality are then obtained by comparing two sets either w.r.t.\ to their
%cardinality or w.r.t.\ set-inclusion. 
In belief revision, 
one is interested in the models of the new belief which have  minimal distance from
the models of the given belief set. 
Distance between models is defined here via 
the symmetric set difference $\Delta$
of the atoms assigned to true in the compared models.
%One way of defining this minimal distance is to consider 
%between two models. 
Dalal's revision operator \cite{Dalal88} seeks to minimize the cardinality of $\Delta$ while
Satoh's %revision 
operator \cite{fgcs/Satoh88} defines the minimality of $\Delta$
in terms of set inclusion.

%\intro{
%\smallskip
%
%\noindent
%{\em Paragraph 4: Complexity of minimization: $\Sigma_k$- vs. $\Theta_k$ classes.}
%
%\smallskip
%}

%\smallskip
The chosen notion of minimality usually has a significant impact on the complexity of the 
resulting reasoning tasks. If minimality is defined in terms of cardinality, 
we often get problems that are complete for some class $\DELTAlog k$ with $k \geq 2$
(which is also referred to as $\THETA k$, see \cite{icalp/Wagner88}). For instance, 
two of the most common reasoning tasks in belief revision (i.e., model checking and implication)
are $\THETA 2$-complete for Dalal's revision operator
\cite{ai/EiterG92,jcss/LiberatoreS01}.
Abduction aiming at cardinality-minimal 
explanations is $\THETA 3$-complete
\cite{jacm/EiterG95}.
If minimality is defined in terms of set inclusion, 
completeness results for one of the classes $\SIGMA k$ or
$\PI k$ for some $k \geq 2$ are more common. For instance, belief revision with Satoh's
revision operator becomes $\SIGMA 2$-complete (for model checking) 
\cite{jcss/LiberatoreS01}
respectively $\PI 2$-complete
(for implication)
\cite{ai/EiterG92}. Similarly, abduction is $\SIGMA 2$-complete if we test 
whether some hypothesis is contained in a subset-minimal explanation
\cite{jacm/EiterG95}.

%\intro{
%\smallskip
%
%\noindent
%{\em Paragraph 5: Restriction to Horn vs. Krom.}
%
%\smallskip
%}
%

%\smallskip

For the  above mentioned problems, various ways to decrease
the complexity have been studied.
%the involved formulas are restricted. 
Indeed, 
in almost all of these cases, 
a restriction of the involved formulas to Horn 
makes the complexity drop by one level in the polynomial hierarchy.
Only belief revision with Dalal's revision operator remains $\THETA 2$-complete in the Horn case,
see \cite{ai/EiterG92,jcss/LiberatoreS01,jacm/EiterG95}.
The restriction to Krom has not been considered yet for these problems. In this paper we 
show that the picture is very similar to the Horn case. Indeed, for the considered problems
in belief revision and abduction, we get for the Krom case exactly the same complexity classifications.
Actually, we choose the problem reductions for our hardness proofs in such a way that we thus
also strengthen the previous hardness results by showing that they even hold if formulas are restricted to Horn {\em and\/} Krom at the same time. 

%\intro{
%\smallskip
%
%\noindent
%{\em Paragraph 6: Why study Min/Max-CardSAT?}
%
%\smallskip
%}

%\smallskip

To get a deeper understanding why %belief revision with Dalal's revision operator 
certain problems
remain $\THETA 2$-hard in the Krom case, we have a closer look at a related variant of
the SAT problem, where also cardinality minimality is involved. We define the cardinality of a truth assignment (or a model) as   the cardinality of the set of variables assigned true. 
%More precisely, we 
Thus, we consider 
 the \cardmin problem: %, where one is given a propositional formula $\phi$ and an atom $x$. The
%(i.e.\ 
given a propositional formula $\phi$ and an atom $x$, is 
%question is if 
$x$ true in some {\em cardinality-minimal\/} model
of $\phi$? It is easy to show that this  problem is  $\THETA 2$-complete. 
%i.e., it can be solved in polynomial time with logarithmically many calls to an $\NP$-oracle. 
If $\phi$ is restricted to Horn, then this problem becomes trivial, 
since Horn formulas have a unique minimal model which can be efficiently computed.
But what happens if 
we restrict $\phi$ to Krom? We %shall 
show that %the 
$\THETA 2$-completeness holds also in this case. 
This hardness result will then also be very convenient for proving our $\THETA
2$-hardness results for the considered problems in belief revision and
abduction. Since  \cardmin %problem 
seems to be central in evaluating the
complexity of reasoning problems in which some kind of cardinality-minimality is involved
we investigate its complexity in a deeper way, %and try 
in particular by
characterizing the tractable cases.

\smallskip

\noindent
%{\bf Summary of results and structure of the paper.}
%In Section~\ref{sect:preliminaries}, we recall some basic notions
%and results. 
%A conclusion and an outlook to future work
%are given in Section \ref{sect:conclusion}.
The {\bf main contributions of the paper} are as follows:
%are detailed in the Sections~\ref{sect:delta2}
%-- \ref{sect:classification},
%namely:

%\smallskip
%\noindent
%$\bullet$ \ 
\begin{itemize}
\item 
{\em Prototypical problems for $\DELTA 2$ and $\THETA 2$.}
In Section~\ref{sect:delta2}, we first review 
the prototypical $\DELTA 2$-problem
%of the problems 
\lexmax.
%\loglexmax from 
Hardness for this problem is due to
\cite{jcss/Krentel88} but only follows implicitly from that work.
We reformulate the proofs in terms of standard terminology 
%based on oracle Turing machines. The motivation for doing so is twofold. 
%First, this shall better convey the intuition of these results and make the paper
%more self-contained.
%Second, it allows us 
in order to explicitly prove $\THETA 2$-completeness for the related problem
\loglexmax in an analogous way. Note that \loglexmax, which will be the basis for our forthcoming results, is
not mentioned explicitly in \cite{jcss/Krentel88}.

%However, we hope that sticking to the standard terminology of oracle Turing machines 
%(and avoiding the ``detour'' via the $\OPT$ and $\OPTlog$ classes based on a new machine model) will help to better convey the intuition of these results.

\item 
{\em SAT variants.}
In Section~\ref{sect:theta2}, we
investigate the complexity of the \cardmin problem and 
the analogously defined \cardmax problem.
Our central result is that  these problems remain
$\THETA 2$-complete
for formulas in Krom form.

%\smallskip
%\noindent
\item 
%$\bullet$ \ 
{\em Applications.} We investigate several reasoning problems in the 
areas of belief revision 
%(in Section \ref{sect:belief-revision})
and  abduction
%(in Section \ref{sect:abduction}).
in Section \ref{sect:belief-revision}.
For the restriction to Krom form, we 
establish the same complexity classifications as for the previously known restriction to Horn form. 
In fact, we thus also strengthen the previously known hardness results by showing
that hardness even holds for the simultaneous restriction to Horn {\em and\/} Krom.

%\smallskip
%\noindent
%$\bullet$ \ 
\item {\em Classification of \cardmin inside the Krom fragment.} In
Section \ref{sect:classification} we investigate the complexity of \cardmin
\emph{within} the Krom fragment in order to identify 
the necessary additional syntactic
restrictions %are necessary 
towards tractability. To this aim we %study this
%problem in 
use the well-known framework by Schaefer and obtain a complete complexity
classification of the problem.
\end{itemize}

\section{Preliminaries}
\label{sect:preliminaries}

{\bf Propositional logic.} 
We assume familiarity with the basics of propositional logic
\cite{BenAri03}. We only want to fix some notation and conventions here. We say that a formula is in $k$-CNF (resp.\ $k$-DNF) for $k \geq 2$ if all its clauses (resp.\ terms)
have at most $k$ literals. 
Formulas in $2$-CNF are also called \emph{Krom}. 
A formula is called \emph{Horn} (resp.\ \emph{dual Horn}) if it is in CNF and in each clause at most one literal is positive (resp.\ negative). 
It is convenient to identify truth assignments with the set of variables which are true in an assignment. It is thus possible to consider the cardinality and the subset-relation on the models of a formula. If in addition an order is defined on the variables, we may alternatively identify truth assignments with bit vectors, where we encode true (resp.\ false) by 1 (resp.\ 0) and arrange the propositional variables in decreasing order. We thus naturally get a lexicographical order on truth assignments.

\smallskip
\noindent
{\bf Complexity Classes.}
All complexity results in this paper refer to classes in the Polynomial Hierarchy (PH) \cite{Papadimitriou94}.
The building blocks of PH are the classes  $\PP$ and $\NP$ of decision problems solvable in 
deterministic resp.\ non-deterministic polynomial time. 
The classes 
$\DELTA k$, $\SIGMA k$, and $\PI k$ of PH are inductively defined as
$\DELTA 0 = \SIGMA 0 = \PI 0 = \PP$ and 
$\DELTA{k+1} = \PP^{\SIGMA k}$,
$\SIGMA{k+1} = \NP^{\SIGMA k}$, 
$\PI{k+1} = \co\SIGMA{k+1}$, where
we write $\PP^{\cal C}$ (resp.\ 
$\NP^{\cal C}$) for the class of decision problems that can be decided by a deterministic (resp.\ non-deterministic) Turing machine  in polynomial time using an oracle for the class ${\cal C}$. 
The prototypical $\SIGMA k$-complete problem is 
$\exists$-QSAT$_k$, (i.e., quantified satisfiability with $k$ alternating blocks of quantifiers, starting with $\exists$), where we have to decide the satisfiability of  a formula $\phi = \exists X_1 \forall X_2 \exists X_3 \dots Q X_k \psi$ 
(with $Q = \smash{\exists}$ for odd $k$ and  $Q = \smash{\forall}$ 
for even $k$) with no free propositional variables. 
W.l.o.g., one may assume here that $\psi$ is in 3-DNF (if $Q = \smash{\forall}$), resp.\
in 3-CNF (if $Q = \smash{\exists}$).
In \cite{icalp/Wagner88}, several restrictions on the oracle calls 
in a $\DELTA k$ computation have been studied. If on input $x$ with
$|x| = n$ at most $O(\log n)$ calls 
to the $\SIGMA{k-1}$ oracles are allowed, then we get the class 
$\PP^{\SIGMA{k-1}[O(\log n)]}$ which is also 
referred to as  $\THETA k$. 
A large collection of $\THETA 2$-complete problems is given in 
\cite{jcss/Krentel88,mst/GasarchKR95,DBLP:conf/fct/EiterG97}. 
Sections \ref{sect:delta2} and \ref{sect:theta2} in this work will also be devoted to 
$\THETA 2$-completeness results. 
Several problems complete for $\THETA k$ with $k \geq 3$ can be found in \cite{tcs/Krentel92}.
%\todo[inline]{this above has probably to be reformulated when Reinhard will 
%have provided proofs for them}
% added by nadia, notation used in the last section.
Hardness results in all these classes are obtained via log-space reductions, 
and we write $A\le B$ when a problem $A$ is log-space reducible to a problem $B$.

\section{$\DELTA 2$- and $\THETA 2$-completeness}
\label{sect:delta2}

In this work, we prove several new  $\DELTA 2$- and $\THETA 2$-completeness results. 
Problems in these classes naturally arise from optimization variants of $\NP$-complete problems,
where a sequence of oracle calls for the underlying $\NP$-complete problem is required to compute 
the optimum. $\THETA 2$-completeness applies, if binary search with logarithmically many oracle calls  suffices to find the optimum. Recently, an additional intuition of $\THETA 2$-complete
(and, more generally, $\THETA k$-complete) problems has been presented in \cite{DBLP:journals/tcs/LukasiewiczM17}, namely counting the number of positive instances in a set of instances of some $\NP$-complete (or, more generally, $\SIGMA{k-1}$-complete) problem. However, 
the problems for which we establish $\DELTA 2$- and $\THETA 2$-completeness here fall into the 
category of optimization problems derived from $\NP$-complete problems.

The base problems for proving our new $\DELTA 2$- and $\THETA 2$-completeness results
are the following: 
The following problems are considered as prototypical for the classes 
$\DELTA 2$ and $\THETA 2$. In particular, $\THETA 2$-hardness of \loglexmax will be used as starting
point for our reductions towards $\THETA 2$-hardness of \cardminsat\ for Krom formulas in Section~\ref{sect:theta2}.

\dproblem{\lexmax }{Propositional formula $\phi$ and an order $x_1 > \dots > x_\ell$  on the variables in $\phi$.}{Is $x_\ell$ true in the lexicographically maximal model of $\phi$?}

\dproblem{\loglexmax }{Propositional formula $\phi$ and an order $x_1 > \dots > x_\ell$  on some of the variables in $\phi$ with $\ell \leq \log |\phi|$.}{Is $x_\ell$ true in the lexicographically maximal bit 
vector $(b_1, \dots, b_\ell)$ that can be extended to a model of $\phi$?}

The \lexmax  problem will serve as our prototypical $\DELTA 2$-complete problem while the 
\loglexmax  problem will be the basis of our $\THETA 2$-completeness proofs. 
The $\DELTA 2$-completeness of \lexmax is stated in \cite{jcss/Krentel88} -- without proof though (see Theorem~3.4 in \cite{jcss/Krentel88}). The $\DELTA 2$-membership is easy. For the
$\DELTA 2$-hardness, the proof is implicit in a sequence of lemmas and theorems in  
\cite{jcss/Krentel88}. However, the main goal in \cite{jcss/Krentel88} is to advocate 
a new machine model (so-called $\NP$ {\em metric Turing machines}) for defining new 
complexity classes of optimization problems (the so-called $\OPT$ and $\OPT[z(n)]$ classes).
The \loglexmax problem is not mentioned explicitly in \cite{jcss/Krentel88} 
but, of course, it is 
analogous to the \lexmax  problem. 

To free the reader from the burden of tracing the line of argumentation in \cite{jcss/Krentel88} via several lemmas and theorems on the $\OPT$ and $\OPTlog$ classes, 
we give direct proofs of the $\DELTA 2$- and $\THETA 2$-completeness 
of \lexmax and \loglexmax, respectively, 
in the standard terminology of oracle Turing machines (cf.\ \cite{Papadimitriou94}).
In the first place, we thus have to 
establish the connection between oracle calls and optimization.
To this end, we  introduce the following problems:

\dproblem{\npmax }{Turing machine $M$ running in non-deterministic polynomial time and producing a binary string as output; string $x$ as input to $M$.}
{Let $w$ denote the lexicographically maximal output string 
over all computation paths of $M$ on input $x$; does the last bit of $w$  have value 1?}

\dproblem{\lognpmax }{Turing machine $M$ running in non-deterministic polynomial time and producing a binary string, whose length is logarithmically bounded in the size of the Turing machine and the input; string $x$ as input to $M$.}
{Let $w$ denote the lexicographically maximal output string 
over all computation paths of $M$ on input $x$; does the last bit of $w$  have value 1?}

\begin{thm}
\label{theorem:npmax}
The \npmax problem is $\DELTA 2$-complete and the \lognpmax problem is $\THETA 2$-complete.
\end{thm}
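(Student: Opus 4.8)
The plan is to prove the two statements in parallel, since the constructions coincide up to the number of bits involved. For membership, I would compute the lexicographically maximal output $w$ of $M$ on $x$ bit by bit. Assume first, w.l.o.g., that all output strings have the same length $m$ (append a marker bit and pad with zeros; the variable-length case is handled by the obvious variant below). Using the $\NP$-oracle $A=\{\tuple{M,x,p}:$ some computation path of $M$ on $x$ produces a string having $p$ as a prefix$\}$ (guess a path and simulate $M$ along it in polynomial time, then check the prefix), start from $p:=\varepsilon$ and, for $i=1,\dots,m$, put $p:=p1$ if $\tuple{M,x,p1}\in A$ and $p:=p0$ otherwise; a one-line induction shows that the resulting $p$ equals $w$, and we then answer whether its last bit is $1$. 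This uses $m$ adaptive $\NP$-queries, which for \npmax is polynomially many (the output length is bounded by the running time of $M$), placing the problem in $\PP^{\NP}=\DELTA 2$, and for \lognpmax is $O(\log(|M|+|x|))=O(\log n)$ many, placing it in $\PP^{\NP[O(\log n)]}=\THETA 2$. (For outputs of varying length one instead tests, at each $p$ already known to prefix some output, whether $p1$ resp.\ $p0$ prefixes some output, stopping when neither does; the query count is unchanged.)

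For hardness, let $L$ be an arbitrary problem in $\DELTA 2$ (resp.\ $\THETA 2$), decided by a deterministic polynomial-time oracle machine $N$ with an \sat-oracle. We may assume that, on input $y$ and for \emph{every} sequence of answers, $N$ poses exactly $t$ queries $\phi_1,\dots,\phi_t$, where $\phi_{j+1}$ is polynomial-time computable from $y$ and the answers $a_1,\dots,a_j$, and where ``$N$ accepts'' is a polynomial-time predicate $c(y,a_1,\dots,a_t)$; here $t=t(|y|)$ is polynomial for $L\in\DELTA 2$, and $O(\log|y|)$ for $L\in\THETA 2$ (in the latter case one first aborts $N$ as soon as it tries to exceed its logarithmically many queries, then pads with dummy satisfiable queries). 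I then construct a nondeterministic machine $M$ that, on input $y$: guesses $a_1,\dots,a_t\in\{0,1\}$; reconstructs $\phi_1,\dots,\phi_t$ step by step using these as the oracle answers; for every $j$ with $a_j=1$ also guesses an assignment and checks that it satisfies $\phi_j$, immediately outputting $0^{t+1}$ if any such check fails; and otherwise outputs the string $a_1\cdots a_t\,c(y,a_1,\dots,a_t)$. All steps run in polynomial time, every output has length $t+1$, the map $y\mapsto\tuple{M,y}$ is log-space, and since $|M|=O(1)$ this is a legal \npmax (resp.\ \lognpmax) instance.

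It then remains to show that the lexicographically maximal output of $M$ on $y$ has last bit $1$ if and only if $y\in L$. Let $a_j^*=1$ iff the query $\phi_j$ reconstructed from the true answers $a_1^*,\dots,a_{j-1}^*$ is satisfiable, and let $c^*=c(y,a_1^*,\dots,a_t^*)$, so $c^*=1$ iff $y\in L$. The honest path (guessing each $a_j^*$ together with a correct witness whenever $a_j^*=1$) passes all checks and outputs $a_1^*\cdots a_t^*c^*\ge 0^{t+1}$. Conversely, assume some path outputs a string strictly larger than $a_1^*\cdots a_t^*c^*$; then it is not $0^{t+1}$, so all its checks succeeded. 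Let $j$ be the first position at which this path's guesses $b$ differ from $a^*$. For $i<j$ we have $b_i=a_i^*$, so the path reconstructs exactly the true $\phi_j$; since the output is larger and $j$ is the first difference, $b_j=1>0=a_j^*$, forcing $\phi_j$ to be unsatisfiable -- contradicting that the path verified a satisfying assignment of $\phi_j$. If instead $b=a^*$ but the last bit differs, the path computed $c(y,a^*)=c^*$, again a contradiction. Hence $a_1^*\cdots a_t^*c^*$ is the maximal output, and its last bit is $1$ iff $y\in L$. Combining membership and hardness gives that \npmax is $\DELTA 2$-complete and \lognpmax is $\THETA 2$-complete.

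The verification just given is the heart of the reduction and the one step that is not routine: one must see that an ``optimistic'' path guessing $a_j=1$ for an unsatisfiable $\phi_j$ is pushed down to the minimal output $0^{t+1}$, that this can never beat the honest path, and that the induction on the first differing position pins down all of $a_1^*,\dots,a_t^*$ correctly -- including the \emph{negative} answers, which are never certified directly but are forced indirectly by lexicographic maximality along each correct prefix. The surrounding normalizations -- making all outputs equally long, and making $N$ issue exactly $t$ queries under every answer vector while keeping $t$ logarithmic in the $\THETA 2$ case -- are routine but need to be stated carefully.
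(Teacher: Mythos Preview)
Your proof is correct and follows essentially the same strategy as the paper: membership by bit-by-bit reconstruction of the lexicographically maximal output via an $\NP$-oracle, and hardness by simulating an arbitrary $\PP^{\NP}$ (resp.\ $\PP^{\NP[O(\log n)]}$) machine while recording the oracle answers as output bits followed by an acceptance bit, so that lexicographic maximality forces the correct oracle answers. The only cosmetic differences are that the paper runs the SAT-oracle's nondeterministic computation tree inline (writing $0$ on rejecting branches) and uses an internal step counter to handle ``impossible'' paths, whereas you guess each answer bit, verify only the positive guesses, and dump failed verifications to $0^{t+1}$, relying on a prior normalization of $N$ to a fixed query count; both devices serve the same purpose and the correctness arguments are interchangeable.
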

\begin{proof}
The $\DELTA 2$-{\em membership\/} of \npmax 
and the $\THETA 2$-{\em membership\/} of \lognpmax 
is seen by the following algorithm, 
which runs in deterministic polynomial time and has access to an $\NP$-oracle. 
The algorithm maintains a bit vector $(v_1, v_2,  \dots)$ of the lexicographically 
maximal prefix of possible outputs of TM $M$ on input $x$. To this end, we initialize $i$ to $0$ and ask
the following kind of questions to an $\NP$-oracle: 
Does there exist a computation path of TM $M$ on input $x$, 
such that the first $i$ output bits are $(v_1, \dots, v_i)$ and 
$M$ outputs yet another bit? If the answer to this oracle call 
is ``no'' then the algorithm stops with acceptance if ($i \geq 1$ and $v_i = 1$) holds
and it stops with rejection if ($i =0$ or $v_i = 0$) holds.

If the oracle call yields a ``yes'' answer, then our algorithm calls 
another $\NP$-oracle with the 
question: 
Does there exist a computation path of TM $M$ on input $x$, 
such that the first $i+1$ output bits are $(v_1, \dots, v_i, 1)$?
If the answer to this oracle call 
is ``yes'', then we set $v_{i+1} = 1$; otherwise we set $v_{i+1} = 0$. In either case, we then increment $i$ by 1 and continue with the first oracle question (i.e., 
does there exist a computation path of TM $M$ on input $x$, 
such that the first $i$ output bits are $(v_1, \dots, v_i)$ and 
$M$ outputs yet another bit?).

Suppose that the lexicographically maximal output produced by $M$ on input $x$ has $m$ bits. Then our algorithm needs in total $2m + 1$ oracle calls
 and the oracles work in non-deterministic polynomial time. 
Moreover, 
if the size of the output string of $M$ is logarithmically bounded, then the number of oracle calls is logarithmically bounded as well.  

For the {\em hardness\/} part, we first concentrate on the \npmax problem 
and discuss afterwards the modifications required to 
prove the corresponding complexity result also for the \lognpmax problem.
Consider an arbitrary problem ${\cal P}$ in $\DELTA 2$, i.e., ${\cal P}$
is decided in deterministic polynomial time by a Turing machine $N$ with access to an
oracle $\nsat$ for the SAT-problem.

Now let $x$ be an arbitrary instance of problem ${\cal P}$. From this we
construct an instance $M;x$ of \npmax, where we leave $x$ unchanged 
and we define $M$ as follows: 
In principle, $M$ simulates the execution of $N$ on input $x$. 
However, whenever $N$ reaches a
call to the SAT-oracle, with some input $\phi_i$ say, then $M$ 
non-deterministically executes $\nsat$ on $\phi_i$. In other words, in the computation tree of $M$, the subtree corresponding to this non-deterministic execution of $\nsat$ on $\phi_i$
is precisely the computation tree of $\nsat$ on $\phi_i$. On every computation path 
ending in acceptance
(resp.\ rejection) of $\nsat$, the TM $M$ writes 1 (resp. 0) to the output. After that, $M$ continues with the execution of $N$ as if it
had received a  ``yes'' (resp.\ a ``no'') answer
from the oracle call. 
After the last oracle call, $M$ executes $N$ to the end. If $N$ ends in acceptance, then $M$ outputs 1; otherwise it outputs 0 as the final bit.

It remains to prove the following claims: 
\begin{enumerate}
\item {\em Correctness.} Let $w$ denote the lexicographically maximal output string over all computation 
paths of $M$ on input $x$. Then the last bit of $w$  has value 1 if and only if 
$x$ is a positive instance of ${\cal P}$.
\item {\em Polynomial time.} The total length of each computation path of $M$ on input $x$ is polynomially bounded in 
$|x|$.
\item {\em Logarithmically bounded output.}
If the number of oracle calls of TM $N$ is logarithmically bounded 
in its input 
(i.e., problem ${\cal P}$ is in $\THETA 2$),
then 
the size of the output produced by $M$ on input $x$ is also 
logarithmically bounded.
\end{enumerate}

\noindent
{\em Correctness.} We first have to prove the following claim: for every $i \geq 1$
and for every bit vector $w_i$ of length $i$: 
$w_i$ is a prefix of the  
lexicographically maximal output string over all computation 
paths of $M$ on input $x$ if and only if $w_i$ encodes the 
correct answers of the first $i$ oracle calls of TM $N$ on input $x$,
i.e., for $j \in \{1, \dots, i\}$, 
$w_i[j] = 1$ (resp.\ $w_i[j] = 0$) if the $j$-th oracle call yields a ``yes''
(resp.\ a ``no'') answer.
This claim can be easily verified by induction on $i$. Consider the 
induction begin with $i = 1$. We have to show that the first bit of the output string is 1 
% (i.e., $w_1[1] = 1$)
if and only if the first oracle call yields a ``yes'' answer. Indeed, suppose that the first oracle does yield a ``yes'' answer. 
This means that {\em at least one\/} of the computation paths of the non-deterministic oracle machine $N$ ends with acceptance. 
By our construction of $M$, there exists at least one computation path of the non-deterministic TM $M$ on which value 1 is written as first bit to the output. Hence, the string $w_1 = $ '1' is indeed the prefix of length 1 of the 
lexicographically maximal output string over all computation 
paths of $M$ on input $x$. 
Conversely, suppose that the first oracle call yields a ``no'' answer. This means that 
{\em all\/} of the computation paths of the non-deterministic oracle machine $N$ end with rejection. 
By our construction of $M$, all computation paths of the non-deterministic TM $M$ write 0 as the first bit to the output. Hence, the string $w_1 = $ '0' is indeed the prefix of length 1 of the 
lexicographically maximal output string. The induction step is shown by the same kind of reasoning.

Now let $m$ denote the number of oracle calls carried out by TM $N$ on input $x$
and let $w$ denote the lexicographically maximal output of 
TM $M$ over all its computation paths.
Then $w$ has length $m+1$ such that the first $m$ bits encode the correct
answers of the oracle calls of TM $N$ on input $x$. Moreover, 
by the construction of $M$, we indeed have that the last bit of $w$ is 1 
(resp.\ 0), if and only if $x$ is a positive (resp.\ negative) 
instance of ${\cal P}$.

\smallskip

\noindent
{\em Polynomial time.} Suppose that $N$ on input $x$ with $|x| = n$ 
is guaranteed to hold after at most $p(n)$ steps and 
that $\nsat$ on input $\phi$ with $|\phi| = m$ is 
guaranteed to hold after at most $q(m)$ steps for polynomials $p()$ and $q()$.
Then the total length of the computation of $N$ counting also the computation steps 
of the oracle machine $\nsat$ is bounded by a polynomial $r(n)$ 
with $r(n) = O(p(n) * q(p(n)))$. 
To carry this upper bound over to every branch of 
the computation tree of $M$ on input $x$ we have to solve a subtle problem: The upper bound $r(n)$ on the execution length of $M$ on input $x$ 
% clearly 
applies to every computation path where for every oracle call $\phi_i$, 
a correct computation path of $\nsat$ on input $\phi_i$ is simulated. However, in our simulation of $N$ with oracle $\nsat$ by a non-deterministic computation of $M$ on input $x$, 
we possibly produce computation paths which $N$ on input $x$ can never reach, e.g.: if the correct answer of $\nsat$ on oracle input $\phi_1$ is ``yes'', then the continuation of the 
simulation of $N$ on input $x$ on all computation paths where answer ``no''
on oracle input $\phi_1$ is assumed, can never be reached by the computation of $N$ (with oracle $\nsat$) on input $x$. To make sure that the 
polynomial upper bound applies to {\em every\/}
computation path of $M$, we have to extend TM $M$ by a counter such that 
$M$ outputs $0$ and halts if more than $r(n)$ steps have been executed. 

\smallskip
\noindent
{\em Logarithmically bounded output.}
Suppose that ${\cal P}$ is an arbitrary problem in $\THETA 2$ and that, therefore, 
TM $N$ only has logarithmically many oracle calls. 
By similar considerations as for the 
polynomial time bound of the computation of $M$ on $x$, 
we can make sure that the size of the output on every computation path of $M$ on $x$ is logarithmically bounded. To this end, 
we add a counter also for the number of oracle calls. 
% Clearly, 
The logarithmic bound (say $c \log n + d$ for constants $c,d$ and 
$n = |x|$) on the size of the output applies 
to every computation path, where for every oracle call $\phi_i$, 
a correct computation path of $\nsat$ on input $\phi_i$ is simulated by $M$. 
By adding to $M$ a counter for the size of the output, we can modify $M$ in such a way that $M$ outputs 0 and stops if the number $c \log n + d$ of 
output bits of $M$ (which corresponds to the number of oracle calls of $N$) is exceeded.
\end{proof}

The above problems will allow us now to prove the $\DELTA 2$- and $\THETA 2$-completeness 
of the problems \lexmax and \loglexmax, respectively.
As mentioned above, credit for these results (in particular, the $\DELTA 2$-completeness of \lexmax) goes to 
Mark W.~Krentel \cite{jcss/Krentel88}. 
However, we hope that sticking to the standard terminology of oracle Turing machines 
(and avoiding the ``detour'' via the $\OPT$ and $\OPTlog$ classes based on a new machine model) will help to better convey the intuition of these results.

\begin{thm}
\label{theorem:lexmaxsat}
The \lexmax problem is $\DELTA 2$-complete and the
\loglexmax problem is $\THETA 2$-complete.
The hardness results hold even if 
% the propositional formula 
$\phi$ is in 3-CNF.
\end{thm}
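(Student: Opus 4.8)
The plan is to establish both halves of the statement by transporting the completeness results of Theorem~\ref{theorem:npmax} along reductions to and from \npmax and \lognpmax.

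\textit{Membership.} For $\DELTA 2$-membership of \lexmax and $\THETA 2$-membership of \loglexmax I would give log-space reductions to \npmax and to \lognpmax, respectively. Given an instance consisting of $\phi$ together with an order $x_1 > \dotsb > x_\ell$, let $M$ be the non-deterministic polynomial-time machine that, on this input, guesses a truth assignment $\sigma$ to the variables of $\phi$ and then outputs the string $1\,\sigma(x_1)\sigma(x_2)\dotsm\sigma(x_\ell)$ if $\sigma\models\phi$, and the single bit $0$ otherwise. If $\phi$ is unsatisfiable, the lexicographically maximal output of $M$ over all its computation paths is $0$; if $\phi$ is satisfiable, it is $1$ followed by the lexicographically maximal bit vector $(b_1,\dotsc,b_\ell)$ that extends to a model of $\phi$. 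In either case the last bit of this maximal output is $1$ exactly when the given instance of \lexmax (resp.\ \loglexmax) is positive. For \lexmax, $\ell$ is the number of all variables of $\phi$, so this is a reduction to \npmax; for \loglexmax we have $\ell\le\log|\phi|$, so the output of $M$ is logarithmically bounded and we obtain a reduction to \lognpmax. Membership then follows from Theorem~\ref{theorem:npmax}.

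\textit{Hardness.} For the lower bounds I would reduce \npmax to \lexmax and \lognpmax to \loglexmax. Fix an instance $M;x$. The first step is to normalize $M$ to a machine $M'$ that behaves like $M$ but, instead of emitting a genuine output string $u_1\dotsm u_k$, emits the fixed-length string obtained by prefixing each bit $u_i$ with a marker bit $1$, appending a marker bit $0$ after $u_k$ to signal the end of the genuine output, padding with $0$s up to a polynomial length $N$, and finally appending a copy of $u_k$; thus the last bit of $M'$'s output always equals the last genuine output bit of $M$. A short case analysis (distinguishing whether two genuine outputs first differ at a common position or whether one is a proper prefix of the other) shows that $u\mapsto M'(u)$ is monotone for the ``prefix-is-smaller'' lexicographic order used in the definition of \npmax, so $M'$ attains its lexicographically maximal output exactly on the paths on which $M$ does. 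The second step is standard: let $\phi$ be the Cook--Levin tableau formula describing the computations of $M'$ on $x$, brought into 3-CNF by splitting the bounded-width Cook--Levin clauses (introducing auxiliary variables), so that its models correspond to the computation paths of $M'$ on $x$ and designated variables $y_1,\dotsc,y_N$ record the output of the simulated path. For the reduction to \lexmax we add one further variable $L$ together with clauses forcing $L\leftrightarrow y_N$, and use the order $y_1 > \dotsb > y_N > (\text{auxiliary variables, in any fixed order}) > L$; then the lexicographically maximal model of $\phi$ has $(y_1,\dotsc,y_N)$ equal to the lexicographically maximal output of $M'$, so the least variable $L$ of the order is true iff $M;x$ is a positive instance of \npmax. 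For the reduction to \loglexmax the output length $N$ is logarithmically bounded in $|M|+|x|$; after padding $\phi$ with polynomially many irrelevant variables so that $N\le\log|\phi|$, we use the order $y_1 > \dotsb > y_N$ on these variables alone and ask whether $y_N$ is true in the lexicographically maximal bit vector that extends to a model, which again holds iff $M;x$ is a positive instance of \lognpmax. Both reductions are plainly log-space computable, and $\phi$ is in 3-CNF, so the hardness results hold already for 3-CNF formulas.

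\textit{Main obstacle.} The only genuinely delicate point is the mismatch between \npmax, where different computation paths may emit outputs of \emph{different lengths} and ``shorter is lexicographically smaller'', and \lexmax, which asks about one fixed least variable of a model; the normalization of $M$ to $M'$ is tailored precisely to bridge this gap while simultaneously preserving the lexicographic order on outputs and the value of the last genuine output bit. Verifying the monotonicity of $M'$ and checking that the padding keeps $N$ within the logarithmic bound relative to $|\phi|$ is where the care is needed; everything else is a routine application of the Cook--Levin construction.
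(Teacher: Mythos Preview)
Your proposal is correct. The overall strategy matches the paper's---both membership and hardness are obtained relative to \npmax{} and \lognpmax{}, and the hardness direction goes through the Cook--Levin construction---but the implementation differs in two places worth noting.

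For \emph{membership}, the paper gives a direct oracle algorithm: it determines the lexicographically maximal (prefix of a) model bit by bit with one $\NP$-oracle call per ordered variable, whereas you reduce \lexmax{}/\loglexmax{} back to \npmax{}/\lognpmax{} and invoke Theorem~\ref{theorem:npmax}. Your route is slicker but relies on the previous theorem twice; the paper's argument is self-contained.

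For \emph{hardness}, both proofs must bridge the gap between variable-length outputs in \npmax{} (``shorter is smaller'') and a fixed order in \lexmax{}. You solve this by \emph{normalising the machine}: each genuine output bit is tagged with a marker~$1$, the string is terminated by a~$0$, padded, and the last genuine bit is duplicated at the end. The paper solves the very same problem \emph{inside the formula}: it introduces, besides the output-bit variables $x_\pi$, companion variables $x'_\pi$ (meaning ``cell~$\pi$ is not blank''), interleaves them in the order $x_1 > x'_1 > x_2 > x'_2 > \dotsb$, and adds a final variable~$z$ defined as ``the last non-blank bit is~$1$''. Your marker bit~$1$ and the paper's variable~$x'_\pi$ encode exactly the same information, so the two constructions are dual realisations of one idea; yours keeps the formula side clean (plain Cook--Levin plus one equivalence $L\leftrightarrow y_N$) at the price of preprocessing the machine, while the paper leaves the machine untouched and pushes the encoding into~$\chi$. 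Either way the resulting formula is in 3-CNF after the standard transformation. One small point: make the empty-output case explicit in your normalisation (take the final appended bit to be~$0$ when $k=0$); otherwise your description of $M'$ and your later indexing of~$y_N$ are fine.
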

\begin{proof}
The $\DELTA 2$-{\em membership\/} of \lexmax 
and the $\THETA 2$-{\em membership\/} of \loglexmax 
is seen by the following algorithm, 
which runs in deterministic polynomial time and has access to an $\NP$-oracle.
Let $m$ denote the number of variables in $\phi$ (in case of the \lexmax
problem) or the number of variables for which an order is given
(in case of the \loglexmax problem).
The algorithm maintains a bit vector $(v_1, \dots,v_m)$ of the lexicographically 
maximal (prefix of a possible) model of $\phi$. To this end, we initialize $i$ to $0$ and ask
the following kind of questions to an $\NP$-oracle: 
Does there exist a model of $\phi$ such that 
the truth values of the first $i$ variables $(x_1, \dots, x_i)$ 
are $(v_1, \dots, v_i)$ and variable $x_{i+1}$ is set to 1?
If the answer to this oracle call 
is ``yes'', then we set $v_{i+1} = 1$; otherwise we set $v_{i+1} = 0$. In either case, we then increment $i$ by 1 and continue with the next oracle call.

When $i = m+1$ is reached, then the algorithm checks the value of $v_m$ and 
stops with 
acceptance if $v_m = 1$ and it stops with rejection if $v_m = 0$
holds. 
% Clearly, 
If the number $m$ of variables of interest (i.e., those
for which an order is given) is logarithmically bounded in the size of $\phi$, then also the number of oracle calls of our algorithm is logarithmically bounded in the size of $\phi$.

\newcommand{\blank}{\sqcup}
\newcommand{\ssym}{\triangleright}

The {\em hardness\/} proof is by reduction from \npmax 
(resp.\ \lognpmax) to the 
\lexmax (resp.\ \loglexmax)
problem. 
Let $M;x$ be an arbitrary instance of 
\npmax  (resp.\ \lognpmax). 
We make the following assumptions on the Turing machine $M$: 
let $M$ have two
tapes, where tape 1 serves as input and worktape while tape 2 is the dedicated output tape.
This means that 
% tape 1 initially has the starting symbol $\ssym$ in cell 0 followed 
% by the input string $x$ in the next $n$ cells (with $|x| = n$) 
% and blanks $\blank$ in all further tape cells
tape 2 initially has the starting symbol $\ssym$ in cell 0 and blanks
$\blank$ in all further tape cells. At time instant 0, the cursor on tape 2 makes a 
move to the right and leaves $\ssym$ unchanged. 
At all later time instants, either the current symbol (i.e., blank $\blank$) and the cursor on tape 2 are both left unchanged or the current symbol is
overwritten by 0 or 1 and the cursor is moved to the right. 
Recall that as output alphabet we have $\{0,1\}$.

Following the 
standard proof of the Cook-Levin Theorem
(see e.g.\ \cite{Papadimitriou94}), one can 
construct in polynomial time a propositional formula $\phi$
such that there is a one-to-one correspondence between the 
computation paths of the non-deterministic TM $M$ on input $x$ 
and 
the satisfying truth assignments of $\phi$.
Let $N$ denote the maximum length of a computation path of $M$ on 
any input of length $|x|$. Then 
formula $\phi$ is built from the following collection of variables: 
\newcommand{\sym}[4]{\textit{tape}_{#1}[#2,#3,#4]}
\newcommand{\cursor}[3]{\textit{cursor}_{#1}[#2,#3]}
\newcommand{\state}[2]{\textit{state}[#1,#2]}
\begin{description}
\item[$\sym{i}{\tau}{\pi}{\sigma}$] 
for $1 \leq i \leq 2$, 
$0\leq \tau\leq N$, 
$0\leq \pi\leq N$, and 
$\sigma\in \Sigma$, 
to express that at time instant $\tau$ of the computation, 
cell number $\pi$ of tape $i$ contains symbol $\sigma$, 
where $\Sigma$ denotes the set of all tape symbols of TM $M$.

\item[$\cursor{i}{\tau}{\pi}$] 
for $1 \leq i \leq 2$, 
$0\leq \tau \leq N$, and 
$0\leq \pi \leq N$, 
to express that at time instant $\tau$, 
the cursor on tape $i$ 
points to cell number $\pi$.

\item[$\state{\tau}{s}$] 
for $0\leq \tau\leq N$ and 
$s \in K$,
to express that at time instant $\tau$, 
the NTM $T$ is in state $s$, 
where $K$ denotes the set of all states of TM $M$.
\end{description}

\noindent
Suppose that $m$ denotes 
the maximum length of output strings of $M$ on any input of length $|x|$. 
Then we 
introduce additional propositional variables 
$\vec{x} = (x_1, x'_1, x_2, x'_2,\dots, x_m, x'_m)$ and $z$ and
we construct the propositional formula 
$\psi = \phi \wedge \chi$, where $\chi$ is defined as follows:

\begin{eqnarray*}
\chi & =  & \bigwedge_{\pi=1}^m x_{\pi} \leftrightarrow \sym{2}{N}{\pi}{1} 
\ \wedge \mbox{} \\
 & \wedge & \bigwedge_{\pi=1}^m x'_{\pi} \leftrightarrow \neg \sym{2}{N}{\pi}{\blank} 
\ \wedge \mbox{} \\
 & \wedge  & z \leftrightarrow \bigvee_{\pi=1}^N  \Big( x_\pi \wedge
       \bigwedge_{\pi'=\pi+1}^{N}  \neg x'_{\pi} \Big)
\end{eqnarray*}
In other words, $\chi$ makes sure that the variables in $\vec{x}$ encode the output string
and $z$ encodes the last bit of the output string, i.e.: for every $i$, variable $x_i$ is true in a
model $J$ of $\psi$ if the $i$-th bit of the output string (along the computation path of 
$M$ corresponding to $J$ when considered as a model of $\phi$) is 1. 
Consequently, truth value false 
of $x_i$ can mean that, on termination of $M$, 
the symbol in the $i$-th cell of tape 2 is  either 0 or $\blank$. 
The latter two cases are distinguished by the truth value of variable $x'_i$, which is 
false if and only if 
the symbol in the $i$-th cell of tape 2 is  $\blank$. 
Variable $z$ is true in a
model $J$ of $\psi$ if the last bit of the output string is 1. The last bit 
(in the third line of the above definition of formula $\chi$, this is position $\pi$) is recognized by the fact that beyond it, all cells on tape 2 contain the blank symbol.
Note that the truth value of $z$ in any model of $\psi$ is uniquely determined by the 
truth values of the $\vec{x}$ variables.

Finally, we transform $\psi$ into $\psi^*$ in 3-CNF by some standard transformation 
(e.g., by the Tseytin transformation \cite{tseytin}). We thus introduce additional propositional variables
such that every model of $\psi$ can be extended to a model of $\psi^*$ and every model of $\psi^*$ is also a model of $\psi$.
Now let $\vec{y}$ denote the vector
of variables 
$\sym{i}{\tau}{\pi}{\sigma}$, $\cursor{i}{\tau}{\pi}$, 
and $\state{\tau}{s}$ plus the additional variables introduced by the transformation into 3-CNF. 
Let the variables in $\vec{y}$ be arranged in arbitrary order and
let  $\ell$ denote the number of variables in $\vec{y}$. 

In the reduction from \npmax to \lexmax, we 
define the following order on all variables in $\psi^*$: 
$x_1 > x'_1 > \dots > x_m > x'_m > y_1 > \dots > y_\ell > z$. 
In the 
reduction from \lognpmax to \loglexmax, we 
define an order only on part of the variables in $\psi^*$, namely: 
$x_1 > x'_1 > \dots > x_m > x'_m > z $ (i.e., we ignore the variables in $\vec{y}$\,).

We now construct a particular model $J$ of $\psi^*$ 
for which we will then show that it is in fact the lexicographically maximal model of $\psi^*$.
Let $w$ denote the lexicographically maximal output string produced by $M$ on input $x$. 
Every computation path of $M$ corresponds to one or more models of $\psi^*$.
Consider the truth assignment $J$ on the variables 
$\vec{x}$, $\vec{y}$, and $z$ obtained as follows: 
$J$ restricted to $\vec{y}$ is chosen such that it is 
lexicographically  maximal among all truth assignments on $\vec{y}$ 
corresponding to a computation path of $M$ on input $x$ and with output $w$.
Note that the truth assignments of $J$ on $\vec{x}$ and $z$ 
are uniquely defined by the output $w$ of $M$ 
(i.e., no matter which concrete truth assignment on $\vec{y}$ to encode a computation path of $M$ with this output we choose).
 
We claim that $J$ is the lexicographically maximal model of $\psi^*$. 
To prove this claim, suppose to the contrary that there exists a lexicographically bigger model 
$J'$ of $\psi$. Then we distinguish 3 cases
according to the group of variables where $J'$ is bigger than $J$: 

(1) If $J'$ is bigger than $J$ on $\vec{x}$, then (since the truth values of $\vec{x}$ encode the output string of some computation of $M$ on $x$) there exists a bigger output than $w$. This contradicts our assumption that 
$w$ is maximal. 
(2) If $J'$ coincides with $J$ on $\vec{x}$ and $J'$ is bigger than $J$ on $\vec{y}$, then $J'$ restricted to 
$\vec{y}$ corresponds to a computation path producing the same output string $w$ 
as the computation path encoded by $J$ on $\vec{x}$. This contradicts our choice of truth assignment $J$ on $\vec{y}$.
(3) The truth value of $z$ in any model of $\psi^*$ is 
uniquely determined by the truth value of $\vec{x}$. 
Hence, it cannot happen that $J'$ and $J$ coincide on $\vec{x}$ but differ on $z$.

From the correspondence between the lexicographically maximal output string $w$ of $M$ on input $x$ and the 
lexicographically maximal model $J$ of $\psi^*$, the correctness of our problem reductions 
(both, from \npmax to \lexmax and from \lognpmax to \loglexmax) follows immediately, namely: 
For the \lexmax problem: the last bit 
in the lexicographically maximal output string $w$ of $M$ on input $x$ is 1 if and only if 
the truth value of variable $z$ in the lexicographically maximal model $J$ of $\psi^*$ is 1 (i.e., true).
Likewise, for the \loglexmax problem: 
the last bit 
in the lexicographically maximal output string $w$ of $M$ on input $x$ is 1 if and only if 
the truth value of variable $z$ in the lexicographically maximal truth assignment on 
$(x_1, x'_1,  \dots, x_m, x'_m, z)$
that can be extended to a model $J$ of $\psi^*$ is 1 (i.e., true). Note that in 
case of the \loglexmax problem, we may indeed simply ignore the $\vec{y}$ variables 
because the
truth values of $\vec{x}$ and $z$ are uniquely determined by the output $w$ of $M$ -- independently of the concrete choice of truth assignments to the $\vec{y}$ variables.
\end{proof}

\section{$\THETA 2$-complete variants of SAT}
\label{sect:theta2}

In this section, we study two natural variants of \sat:

\dproblem{\cardmin}{Propositional formula $\phi$ and an atom $x_i$.}{ Is $x_i$ true in a cardinality-minimal model of $\phi$?}

\dproblem{\cardmax}{ Propositional formula $\phi$ and an atom $x_i$.}{Is $x_i$ true in a cardinality-maximal model of $\phi$?}

\noindent
Both problems will be shown $\THETA 2$-complete. 
We start with hardness for \cardmax.

\begin{lem}
\label{lemma:cardmaxsat}
\cardmax is $\THETA 2$-hard,
even for formulas in 3-CNF.
%The hardness holds even if $\phi$ is in
%3-CNF.
\end{lem}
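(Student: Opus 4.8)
The plan is to reduce from \loglexmax, which is $\THETA 2$-hard even for $\phi$ in 3-CNF by Theorem~\ref{theorem:lexmaxsat}. So let $\phi$ be a 3-CNF formula together with an order $x_1 > \dots > x_\ell$ on some of its variables, with $\ell \leq \log|\phi|$; we must decide whether $x_\ell$ is true in the lexicographically maximal bit vector $(b_1,\dots,b_\ell)$ extendable to a model of $\phi$. The key idea is to encode the lexicographic preference among the bits $b_1,\dots,b_\ell$ by a cardinality-based weighting: give each $x_i$ a ``weight'' large enough that making $x_i$ true outweighs the total contribution of all of $x_{i+1},\dots,x_\ell$ together plus all the other variables. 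Since $\ell$ is only logarithmic in $|\phi|$, the needed weights are polynomially bounded, and weights can be simulated with a polynomial number of fresh variables (a ``bundle'' of copies of $x_i$ tied together by equivalences). Concretely, first I would replace $\phi$ by $\phi'$ obtained by renaming every variable $v$ not among $x_1,\dots,x_\ell$ to its negation if needed — or more simply introduce for every such $v$ a fresh variable and an equivalence — so that these auxiliary variables can be taken to contribute at most, say, $n$ to any model's cardinality, where $n$ is a polynomial bound on the number of variables.

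The main step is the weighting. For each $i \in \{1,\dots,\ell\}$ I introduce a bundle $B_i$ of fresh variables of size $w_i$, together with clauses forcing every variable in $B_i$ to be equivalent to $x_i$ (so $B_i$ contributes $w_i$ to the cardinality exactly when $x_i$ is true, and $0$ otherwise); these equivalences are 2-clauses, and I then Tseytin-normalize the whole formula into 3-CNF as in the proof of Theorem~\ref{theorem:lexmaxsat}, noting that the Tseytin variables can be set up so they add only a polynomially bounded amount to any model's cardinality. I choose the weights recursively so that $w_\ell$ dominates everything below it (the non-$x_i$ variables and Tseytin variables, total $\le n$), and in general $w_i > w_{i+1} + \dots + w_\ell + n$; e.g. $w_i = 2^{\ell-i}\cdot(n+1)$, which is polynomial since $2^\ell \le |\phi|$. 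Call the resulting 3-CNF formula $\psi$. A routine lexicographic-versus-numeric argument then shows: a model $J$ of $\psi$ has maximum cardinality if and only if its restriction to $(x_1,\dots,x_\ell)$ equals the lexicographically maximal extendable bit vector $(b_1,\dots,b_\ell)$, and among those it maximizes the (bounded) contribution of the remaining variables — but for the query we only need the $x_i$-part. Finally, since \cardmax asks about cardinality-\emph{maximal} models, I should flip everything for the non-preference part so as not to interfere — but actually I can arrange the construction so the contribution of non-$x_i$, non-bundle variables is irrelevant to whether a given $x$-pattern is achievable, by leaving those variables free; a cardinality-maximal model will then freely maximize them, which does not change the $x$-part. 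The output of the reduction is $(\psi, x_\ell)$, and $x_\ell$ is true in some cardinality-maximal model of $\psi$ iff $b_\ell = 1$.

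I also need to double-check membership direction — but $\THETA 2$-membership of \cardmax is easy by the standard binary-search argument (compute the maximum cardinality with $O(\log n)$ $\NP$-oracle calls, then one more call to test whether a model of that cardinality with $x_i$ true exists), and in any case only hardness is claimed in Lemma~\ref{lemma:cardmaxsat}. The main obstacle I expect is the bookkeeping to guarantee that the weights stay polynomial \emph{and} strictly dominate the combined ``noise'' from (i) all lower-priority $x_j$'s, (ii) all original non-preference variables, and (iii) all Tseytin auxiliary variables — this requires first fixing a polynomial bound $n$ on that noise before choosing the $w_i$, and being careful that the 3-CNF transformation does not blow $n$ up beyond the polynomial budget; since $\ell \le \log|\phi|$ makes $2^\ell$ polynomial, the choice $w_i = 2^{\ell-i}(n+1)$ works, but one must verify the strict inequality $w_i > \sum_{j>i} w_j + n$ carefully. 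A secondary subtlety is making the reduction log-space computable, which is fine since all the gadgets are simple and the numbers are written in unary-length-polynomial form (i.e. as polynomially many fresh variables), not in binary.
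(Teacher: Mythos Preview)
Your proposal is correct and takes essentially the same approach as the paper: reduce from \loglexmax by duplicating each ordered variable $x_i$ into a bundle of $\approx 2^{\ell-i}$ equivalent copies so that cardinality maximization simulates lexicographic maximization, which is polynomial because $\ell \le \log|\phi|$. The only noteworthy difference is that the paper neutralizes the non-ordered variables more cleanly---for each such $x_j$ it adds a fresh $x'_j$ with the 2-clause equivalence $x_j \leftrightarrow \neg x'_j$, so their joint contribution to cardinality is constant---which lets one use the bare weights $2^{\ell-i}$ instead of $2^{\ell-i}(n+1)$ and avoids any Tseytin step (the added equivalences are already 2-CNF, so $\psi$ is in 3-CNF whenever $\phi$ is).
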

\begin{proof}
The $\THETA 2$-hardness of \cardmax is shown by reduction from \loglexmax. 
Consider an arbitrary instance of \loglexmax, which 
is given by a propositional formula $\phi$ 
and an order $x_1 > \dots > x_\ell$ 
over logarithmically many variables from $\phi$. From this we construct
an instance of \cardmax, which will be defined by a
propositional formula $\psi$ and a dedicated variable in $\psi$, namely $x_\ell$.  
We
simulate  the lexicographical order
over the variables $x_1 > \dots > x_\ell$ 
by adding ``copies'' of each variable $x_i$, i.e., for every $i \in \{1, \dots, \ell\}$,  we introduce $2^{\ell-i}-1$ new variables 
$x_i^{(1)}, x_i^{(2)},  \dots, x_i^{(r_i)}$ with $r_i = 2^{\ell-i}-1$. %Moreover, we
The formula $\psi$ is now obtained from $\phi$
by adding the following subformulas.
We
add to $\phi$ the conjuncts 
$(x_i \leftrightarrow x_i^{(1)}) \wedge \dots
\wedge (x_i \leftrightarrow x_i^{(r_i)})$. 
Hence,  setting $x_i$ to true in a model
of the resulting formula forces us to 
set all its $2^{\ell-i}-1$ ``copies'' to true. 
Finally, for the remaining variables
$x_{\ell+1}, \dots, x_{n}$ in $\phi$, we add ``copies''
$x'_{\ell+1}, \dots, x'_{n}$ and further extend $\phi$ by
the conjuncts 
$(x_{\ell+1} \leftrightarrow \neg x'_{\ell+1}) \wedge
\dots \wedge 
(x_{n} \leftrightarrow \neg x'_{n})$ to make the cardinality of models
indistinguishable on these variables. 

Since $\ell \leq \log |\phi|$, this transformation 
of $\phi$ into $\psi$ is feasible in polynomial time. 
Also note that $\psi$ is in 3-CNF, whenever $\phi$ is in 3-CNF.
We claim that our problem reduction is correct, i.e., 
let $\vec{b} = (b_1, \dots, b_\ell)$
denote the lexicographically 
maximal vector that can be extended to a model of $\phi$.
We claim that $x_\ell$ is true in a model $I$ of $\phi$, s.t.\
$I$ is an extension of $\vec{b}$  
iff $x_\ell$ is true in a cardinality-maximal model of~$\psi$. 
In fact, even a slightly stronger result can be shown, namely: 
if a model of $\phi$ is an extension of $\vec{b}$, then $I$ can be 
further extended to a cardinality maximal model $J$ of $\psi$. 
Conversely, if $J$ is a cardinality maximal model of $\psi$, then 
$J$ is also a model of $\phi$ and $J$ extends $\vec{b}$.
\nop{*** kuerzen
Call the resulting formula $\psi$.
In other words, of the $2m-2n$ variables 
$x_{m+1}, \dots, x_{n},x'_{m+1}, \dots, x'_{n}$, exactly $m-n$ have to be contained in every model of $\psi$; i.e., the difference between the cardinalities of two models is only due to the truth values of the variables $x_{1}, \dots, x_{m}$. Again, we clearly have that $x_m$ is 
true in the lexicographically maximal vector $(x_1, \dots, x_m)$ that can be extended to a model of $\phi$ iff $x_m$ is true in a cardinality-maximal model of $\psi$.
****************}
\end{proof}

\medskip

Our ultimate goal in this section is to show that 
the $\THETA 2$-completeness of \cardmax 
and also of \cardmin hold even if 
restricted to the Krom case. % the formulas $\phi$ to Krom.
In a first step, we reduce the \cardmax problem to a variant of 
the {\sc Independent Set} problem, which we define next.
Note that the standard reduction from 3SAT to {\sc Independent Set} \cite{Papadimitriou94} is not sufficient: suppose we are starting off with 
a propositional formula $\phi$ with $K$ clauses. Then, if $\phi$ is satisfiable, 
every maximum independent set selects precisely $K$ vertices. Hence, additional work is needed to preserve the information on the cardinality of the models of $\phi$.  
The variant of the 
{\sc Independent Set} problem we require here is as follows:
For the sake of readability, 
we first explicitly introduce a lower bound on the independent sets of interest and then we show how to encode this lower bound implicitly.

\dproblem{\indsetWithBound}{Undirected graph $G = (V,E)$, vertex $v \in V$, and positive integer $K$.}{Is $v$ in a maximum independent set $I$ of $G$ with $|I| \geq K$?}
Note that the question is not whether $v$ belongs to {\em some\/} independent set of size at least $K$, but whether it belongs to a {\em cardinality-maximum\/} independent set and whether this 
maximum value is $\geq K$. 
Without the restriction to a {\em cardinality-maximum\/} independent set, the problem would 
obviously be in $\NP$. With this restriction, the complexity increases, as we show next.

%\smallskip
%\noindent
%\indset Problem\\
%{\em Instance.} Undirected graph $G = (V,E)$ and vertex $v \in V$. \\
%{\em Question.} Is $v$ in a maximum independent set $I$ of $G$?

\begin{lem}
\label{lemma:indsetWithBound}
The \indsetWithBound prob\-lem is $\THETA 2$-hard.
\end{lem}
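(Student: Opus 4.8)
The plan is to reduce the \cardmax problem restricted to 3-CNF formulas, which is $\THETA 2$-hard by Lemma~\ref{lemma:cardmaxsat}, to \indsetWithBound. As observed above, the textbook 3SAT reduction destroys the information about how many variables are true; I repair this by ``blowing up'' the clause and literal gadgets so that \emph{maximum} independent sets correspond precisely to \emph{cardinality-maximal} models.

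Let $\phi$ be a 3-CNF formula with variables $x_1,\dots,x_n$, clauses $C_1,\dots,C_m$, and let $x_i$ be the designated variable. Put $P := n+1$ and $N := 3mP+1$, and build a graph $G$ as follows. For each variable $x_p$ take disjoint independent sets $A_p$ with $|A_p| = N+1$ and $B_p$ with $|B_p| = N$, and add all edges between $A_p$ and $B_p$; intuitively $A_p \subseteq I$ will mean ``$x_p$ true'' and $B_p \subseteq I$ ``$x_p$ false'', and the extra vertex in $A_p$ makes ``true'' worth one unit more. For each clause $C_j = (\ell_{j,1}\vee\ell_{j,2}\vee\ell_{j,3})$ take disjoint independent sets $D_{j,1}, D_{j,2}, D_{j,3}$ of size $P$ each, and add all edges between $D_{j,k}$ and $D_{j,k'}$ for $k\ne k'$ (so any independent set meets at most one of them). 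Finally add all edges between $A_p$ and $D_{j,k}$ whenever $\ell_{j,k}=\neg x_p$, and all edges between $B_p$ and $D_{j,k}$ whenever $\ell_{j,k}=x_p$. The designated vertex $v$ is an arbitrary fixed vertex of $A_i$, and I set the bound to $K := Nn + Pm$. All gadget sizes are polynomial in $|\phi|$, so $G$, $v$, $K$ are computable in logarithmic space.

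The crux is to determine the shape of an arbitrary maximum independent set $I$ of $G$. First, if $I$ is maximal then a nonempty $I\cap A_p$ must equal $A_p$ (a single vertex of $A_p$ already blocks all of $B_p$ and all $D_{j,k}$ with $\ell_{j,k}=\neg x_p$, so the rest of $A_p$ can be added for free), and similarly for $B_p$ and for each $D_{j,k}$. Second, no variable gadget can be ``empty'' in a maximum $I$: if $I\cap(A_p\cup B_p)=\emptyset$, then $(I\setminus\{\text{all clause-gadget vertices}\})\cup A_p$ is still independent and strictly larger, because $N+1 > 3mP \ge$ (number of clause-gadget vertices). Hence a maximum $I$ determines a total assignment $M_I$ ($x_p\in M_I$ iff $A_p\subseteq I$), it contains a full $D_{j,k}$ precisely when $C_j$ is satisfied by $M_I$ (any true literal's gadget can indeed be taken, without conflicts, for every satisfied clause, and none can be taken for an unsatisfied clause), and therefore $|I| = Nn + |M_I| + P\cdot|\{j : M_I\models C_j\}|$.

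It remains to maximize $f(M') := |M'| + P\cdot|\{j : M'\models C_j\}|$ over all assignments $M'$. Since $P = n+1$ exceeds every possible value of $|M'|$, leaving even one clause unsatisfied is never worth the gain in true variables; so the maximum of $f$ is attained exactly at the satisfying assignments of maximum cardinality, i.e.\ the cardinality-maximal models of $\phi$. If $\phi$ is satisfiable this maximum is $\mu + Pm$ with $\mu = \max\{|M| : M\models\phi\}$, so the maximum independent sets of $G$ all have size $Nn+\mu+Pm \ge K$ and are exactly the ``clean'' encodings of cardinality-maximal models; if $\phi$ is unsatisfiable the maximum is strictly below $Nn+Pm = K$, consistently with \cardmax then being a no-instance. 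In either case, $v\in A_i$ lies in some maximum independent set $I$ with $|I|\ge K$ iff some cardinality-maximal model of $\phi$ makes $x_i$ true, which is exactly the \cardmax question. The only genuinely delicate point is the cleanness argument of the previous paragraph; once the shape of maximum independent sets is nailed down, the optimization of $f$ and the final equivalence are routine.
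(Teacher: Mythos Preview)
Your proof is correct. Both you and the paper reduce from \cardmax restricted to 3-CNF (Lemma~\ref{lemma:cardmaxsat}), but the gadgetry differs. The paper stays close to the textbook 3SAT\,$\to$\,{\sc Independent Set} construction: it takes $L=n+1$ identical copies of the clause-triangle graph (so that any independent set of size at least $K=Lm$ must hit every triangle and can be assumed consistent across copies), and then adds a single vertex $u_i$ per variable, connected to all literal-vertices representing $\neg x_i$; thus $u_i$ is addable to a maximum independent set exactly when the encoded assignment sets $x_i$ true. Your construction instead makes the \emph{variables} the primary objects: the big $A_p/B_p$ gadgets force a total assignment first (via $N\gg 3mP$), the clause gadgets then reward satisfaction (via $P>n$), and the single extra vertex in each $A_p$ finally counts true variables. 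The paper's version yields a smaller graph and reuses a standard reduction, at the price of a slightly delicate ``same endpoint in every copy'' normalization step; your three-scale blow-up gives a more self-contained and modular correctness argument (the cleanness claim plus the lexicographic optimization of $f$), at the price of larger, though still polynomial, gadgets.
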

\begin{proof} We extend the standard reduction from 
3SAT to {\sc Independent Set} \cite{Papadimitriou94}
to a reduction from \cardmax to \indsetWithBound. 
The result then follows from Lemma~\ref{lemma:cardmaxsat}.

Let $(\phi, x)$ denote an arbitrary instance of \cardmax
with $\phi = c_1 \wedge \dots \wedge c_m$, s.t.\ each clause 
$c_i$ is of the form $c_i = l_{i1} \vee l_{i2} \vee l_{i3}$
where each $l_{ij}$ is a literal. Let $X = \{x_1, \dots, x_n\}$ denote
the set of variables in $\phi$. 
%and $x = x_\alpha$ for some $\alpha \in \{1, \dots, n\}$. 
We construct an instance $(G,v,K)$ of \indsetWithBound 
where 
$K := L * m$ 
with $L$ ``sufficiently large'', e.g., $L:= n +1$;
$G$ 
%We choose $L$ ``sufficiently large'', e.g., $L:= n +1$. 
%Then $G$ 
consists of  $3 * K + n$ 
vertices
%denoted as 
$V = \{l_{i1}^{(j)}, l_{i2}^{(j)}, l_{i3}^{(j)} \} \mid
1 \leq i \leq m$ and $1 \leq j \leq L\} \cup 
\{u_1, \dots, u_n\}$;
and $v:= u_i$, for $x = x_i$.
It remains to specify the  edges $E$ of $G$:

(1) For every $i \in \{1, \dots, m\}$ and 
every $j\in \{1, \dots, L\}$, $E$ contains %the three
edges $[l_{i1}^{(j)}, l_{i2}^{(j)}]$,
$[l_{i1}^{(j)}, l_{i3}^{(j)}]$, and
$[l_{i2}^{(j)}, l_{i3}^{(j)}]$, i.e., 
$G$ contains $L$ triangles
$\{l_{i1}^{(j)}, l_{i2}^{(j)}, l_{i3}^{(j)} \}$
with $j\in \{1, \dots, L\}$.

(2) For every $\alpha, \beta,\gamma, \delta$, s.t.\
$l_{\alpha \beta}$ and
$l_{\gamma\delta}$ are complementary literals, 
$E$ contains $L$ edges   
$[l_{\alpha \beta}^{(i)}, l_{\gamma\delta}^{(j)}]$
with $i,j\in \{1, \dots, L\}$.

(3) For every $i \in \{1, \dots, n\}$ and every
$\alpha, \beta$, if $l_{\alpha \beta}$ is %a negative literal 
of 
the form $\neg x_i$, then $E$ contains $L$ edges 
$[l_{\alpha \beta}^{(j)}, u_i]$ 
($j\in \{1, \dots, L\}$).
%\smallskip
%
%\noindent

The intuition of this reduction is as follows: The $L * m$ triangles $\{l_{i1}^{(j)}, l_{i2}^{(j)}, l_{i3}^{(j)} \}$
%with $i \in \{1, \dots, m\}$ and
%$j\in \{1, \dots, L\}$
correspond to $L$ copies of the 
standard reduction from 3SAT to {\sc Independent Set} \cite{Papadimitriou94}.
Likewise, the edges between complementary literals are part of this standard
reduction. It is easy to verify that for every independent set $I$
with $|I| \geq K$,
there also exists an independent set $I'$ with $|I'| \geq |I|$, s.t.\
$I'$ chooses from every copy of a triangle the ``same'' endpoint, i.e., 
for every $i \in \{1, \dots, m\}$, 
%for every $\alpha, \beta,\gamma, \delta$, 
if $l_{i\alpha}^{(\beta)} \in I'$ and 
$l_{i\gamma}^{(\delta)} \in I'$, then 
$\alpha = \gamma$.

Since $L = n+1$, the desired lower bound $K = L * m$ on the 
independent set can only be achieved if exactly one vertex is chosen from each 
triangle. We thus get the usual correspondence between models of $\phi$ and
independent sets of $G$ of size $\geq K$. Note that this correspondence leaves
some choice for those variables $x$ 
where the independent set contains no vertex corresponding to the
literal $\neg x$.
In such a case, we assume that the variable $x$ is set to true since
in \cardmax, our goal is to maximize the number of variables set to true.
Then a vertex $u_i$ may be added to an independent set $I$ 
with $|I| \geq K$, only if no vertex 
$l_{\alpha\beta}^{(j)}$ corresponding to a 
literal $l_{\alpha\beta}$ of the form $\neg x_i$ has been chosen into the 
independent set. Hence, $u_i \in I$ if and only if $x_i$ is to true in the corresponding model of $\phi$. 
\end{proof}

\begin{thm}
\label{theorem:cardmaxTwoSAT}
\cardmax 
(resp.\ \cardmin)
is $\THETA 2$-complete even %if %we restrict $\phi$ to 
when formulas are restricted to 
Krom and moreover the clauses consist of negative 
(resp.\ positive)
literals only. 
\end{thm}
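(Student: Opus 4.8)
The plan is to treat membership and hardness separately, to deal with the \cardmax / negative‑Krom case first, and then to obtain the \cardmin / positive‑Krom case by a dual construction through vertex covers. For membership we need no restriction on the formulas: given $\phi$ with $n$ variables and an atom $x_i$, the question ``does $\phi$ have a model of cardinality $\ge k$'' (resp.\ ``$\le k$'') is in $\NP$, so binary search over $k\in\{0,\dots,n\}$ determines the optimal model cardinality $c$ with $O(\log n)$ oracle calls, and one further $\NP$‑call (``is there a model of $\phi$ of cardinality $c$ with $x_i$ true?'') decides the instance. Hence \cardmax and \cardmin lie in $\THETA 2$, in particular on Krom inputs.

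For $\THETA 2$‑hardness of \cardmax restricted to negative Krom I would reduce from \indsetWithBound, which is $\THETA 2$‑hard by Lemma~\ref{lemma:indsetWithBound}. Given $(G,v,K)$ with $G=(V,E)$ and $n=|V|$: if $K>n$ the answer is trivially ``no'', so output a fixed negative‑Krom no‑instance, e.g.\ $(\neg v\vee\neg w_1)\wedge(\neg v\vee\neg w_2)$ with dedicated atom $v$ (its cardinality‑maximal model is $\{w_1,w_2\}$). Otherwise build $G_0$ from $G$ by adding $K$ fresh, pairwise non‑adjacent vertices $p_1,\dots,p_K$, each made adjacent to \emph{every} vertex of $V$, and output the negative‑Krom formula $\psi=\bigwedge_{[a,b]\in E(G_0)}(\neg a\vee\neg b)$ with dedicated atom $v$. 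Identifying assignments with their sets of true atoms, the models of $\psi$ are exactly the independent sets of $G_0$, and the cardinality‑maximal models are the maximum independent sets. The added vertices \emph{implicitly encode the lower bound}: $\{p_1,\dots,p_K\}$ is an independent set of size $K$ that is fully adjacent to $V$, so $\alpha(G_0)=\max(\alpha(G),K)$, and $v$ lies in a maximum independent set of $G_0$ iff $\alpha(G)\ge K$ and $v$ lies in a maximum independent set of $G$, i.e.\ iff $(G,v,K)$ is a yes‑instance. The transformation is log‑space computable and keeps all clauses Krom with only negative literals.

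For \cardmin restricted to positive Krom I would pass through vertex covers: for a graph $H$, the models of the positive‑Krom formula $\bigwedge_{[a,b]\in E(H)}(a\vee b)$ are the vertex covers of $H$ and the cardinality‑minimal models are the minimum vertex covers. Starting again from \indsetWithBound, form $G_0$ as above (handling $K>n$ by a fixed positive‑Krom no‑instance, e.g.\ the star $(c\vee v')\wedge(c\vee\ell)$ with dedicated atom $v'$, whose unique minimum vertex cover is $\{c\}$), and then apply two small gadgets. First add a \emph{false twin} $\tilde v$ of $v$ (a new vertex with $N(\tilde v)=N_{G_0}(v)$ and no edge $\{v,\tilde v\}$) to obtain $G_1$; a short comparison of independent sets of $G_0$ and $G_1$ shows that $v$ lies in \emph{every} maximum independent set of $G_1$ iff $v$ lies in \emph{some} maximum independent set of $G_0$. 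Then attach a \emph{pendant} $v'$ to $v$ (a new vertex adjacent only to $v$) to obtain $G_2$; comparing minimum vertex covers shows that $v'$ lies in some minimum vertex cover of $G_2$ iff $v$ lies in no minimum vertex cover of $G_1$, i.e.\ in every maximum independent set of $G_1$. Composing the three steps, $v'$ lies in some minimum vertex cover of $G_2$ iff $(G,v,K)$ is a yes‑instance of \indsetWithBound, and the positive‑Krom formula for $G_2$ with dedicated atom $v'$ is the desired \cardmin instance. (Alternatively, the pendant can be dropped if one reduces from the complement of \indsetWithBound and uses that $\THETA 2$ is closed under complement.)

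The main obstacle is exactly the one already surfacing in Lemma~\ref{lemma:indsetWithBound}: moving from formulas to $2$‑clauses destroys the cardinality information attached to models, so the \emph{explicit} bound $K$ of \indsetWithBound has to be re‑encoded by a combinatorial gadget that nonetheless keeps every clause Krom and of uniform sign; verifying that the $K$ universally‑adjacent independent vertices achieve precisely $\alpha(G_0)=\max(\alpha(G),K)$ is the crux. For \cardmin there is the additional subtlety that ``membership in some minimum vertex cover'' is naturally the complement of ``membership in every maximum independent set'', which is why the extra false‑twin gadget is needed to convert the existential statement delivered by \indsetWithBound into a universal one before dualizing; checking the correctness of the twin and pendant gadgets (the sign bookkeeping between $\alpha$, $\alpha_v$, $\alpha_{\bar v}$) is where the routine but error‑prone work lies.
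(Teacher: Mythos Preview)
Your membership argument and your \cardmax hardness reduction coincide with the paper's: the paper also eliminates the explicit bound $K$ by adding $K$ fresh, pairwise non‑adjacent vertices that are universally adjacent to $V$, and then reads off the negative‑Krom formula $\bigwedge_{[a,b]\in E}(\neg a\vee\neg b)$.

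For \cardmin the paper takes a different, more syntactic route. Instead of returning to graphs, it reduces directly from negative‑Krom \cardmax to positive‑Krom \cardmin: for each variable $x$ it introduces two fresh variables $x',x''$ (intended to mean ``$x$ is false''), adds $(x\vee x')\wedge(x\vee x'')$, and replaces every clause $(\neg x\vee\neg y)$ by the four positive clauses on $\{x',x''\}\times\{y',y''\}$. A cardinality‑minimal model then sets, for each $x$, either $x$ alone or both $x',x''$, so maximizing the number of true $x$'s becomes minimizing the total weight; the designated atom stays the same. This is a two‑line transformation with a short correctness check. Your approach via vertex covers is correct as well (the false‑twin gadget converts ``$v$ in \emph{some} maximum independent set'' to ``$v$ in \emph{every} maximum independent set'', and the pendant then dualizes to ``$v'$ in \emph{some} minimum vertex cover''; both verifications go through as you outline), but it is longer and requires the extra quantifier‑flip step precisely because minimum vertex‑cover membership is the complement of maximum independent‑set membership. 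What your route buys is a purely graph‑theoretic statement (\emph{membership in a minimum vertex cover} is $\THETA 2$‑hard) along the way; what the paper's route buys is brevity and a template that generalizes cleanly beyond graph instances.
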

\begin{proof}
%Membership follows from Proposition~\ref{prop:cardmaxsat}.
Membership 
proceeds by the classical binary search \cite{Papadimitriou94}
for finding the optimum, asking questions
like 
``Does there exist a model of size at least $k$?'' or 
``Does there exist a model of size at most $k$?''. 
With logarithmically many  such calls to  an $\NP$-oracle,
the maximal (resp.\ minimal) size $M$ of all 
models of $\phi$ can thus be computed, and we check in a final question to an $\NP$-oracle if 
$x_i$ is true in a model of size $M$. 

%\cardmin analog.
%
%\nop{*** kuerzen 
%In case of \cardmax, logarithmically many 
%calls to an $\NP$-oracle are needed. For \weightmax, the number of oracle-calls  is logarithmically bounded w.r.t.\ 
%to the total weight of all variables, i.e., polynomial w.r.t. to their binary representation.
%*********************}

%\smallskip
%\noindent
%\indset Problem\\
For hardness, 
we start with the case of 
\cardmax.
%we will show that the 
To this end, we first reduce 
\indsetWithBound %can be reduced 
to the following intermediate problem:
%{\em Instance.} 
Given an undirected graph $G = (V,E)$ and vertex $v \in V$,
%\em Question.} 
is $v$ in a maximum independent set $I$ of $G$?
The fact that this intermediate problem reduces to 
\cardmax 
then follows %immediately 
by 
%is shown by a straightforward reduction from \indset, \ie, the vertices are
expressing this problem via a Krom formula with 
%identified with 
propositional variables 
$V$
and clauses $\neg v_i \vee \neg v_j$ 
for every edge  $[v_i,v_j]$ in $E$.
Hence, we obtain hardness also for the case of Krom formulas with negative literals only.
%
%one just has to replace $\neg v_i \vee \neg v_j$ by
%$v_i \vee v_j$.
%similar results can be deduced.
%
%\begin{lem}
%\label{lemma:indset}
%The \indset problem is $\THETA 2$-hard.
%\end{lem}
%
%
%\noindent
%{\em Proof.}
% The  proof is by
%reduction from \indsetWithBound. 

Hence, let us turn to the first reduction and 
consider an arbitrary 
instance $(G,v,K)$ of \indsetWithBound
with $G = (V,E)$ and $v \in V$. 
We define the 
corresponding 
instance $(H,v)$ of the intermediate problem with $H = (V', E')$ %of \indset
where  $V' = V \cup U$ with $U = \{u_1, \dots, u_K\}$ for fresh vertices $u_i$,
and
$E' = E \cup \{[u_i,v_j] \mid 1 \leq i \leq K,
v_j \in V\}$. 
The additional edges in $E'$ make sure that an independent
set of $H$ contains either only vertices from $V$ or only vertices from $U$. 
Clearly, $H$ contains the independent set $U$ with $|U| = K$.
This shows
$\THETA 2$-hardness of \cardmax.

To show the
$\THETA 2$-hardness result for 
\cardmin restricted to Krom and positive literals, we now can give a reduction from \cardmax (restricted to
Krom and negative literals).
Given, such a 
\cardmax instance $(\varphi,x)$
%via 
%% nadia proposal begins
we construct $(\hat{\varphi},x)$ where $\hat{\varphi}$ is given as 
\begin{align*}
%\begin{eqnarray*}
%\hat{\varphi} & = & 
%&&
%\displaystyle 
& \bigwedge_{x \in \var(\varphi)}\big((x\vee x')\land (x\vee x'')\big) 
\land \\
%&&
& \bigwedge_{(\neg x\vee \neg y) \in \varphi}\!\!\!\!\!\!\!\! \big(
(x'\lor y')\land(x''\lor y')\land(x'\lor y'')\land(x''\lor y'')\big).
\end{align*}
%\end{eqnarray*}
The intuition of variables $x',x''$ is to represent that $x$ is assigned to false. We have the following observations:
(1) 
a cardinality-minimal model of $\hat{\varphi}$ either sets 
$x$ or jointly, $x'$ and $x''$, to true
(for each variable $x\in\var(\varphi)$);
i.e.\
it is of the form 
$\tau(I):= I \cup \{x',x'' \mid x\in \var(\varphi)\setminus I\}$
for some $I\subseteq \var(\varphi)$;
(2) for each $I\subseteq \var(\varphi)$, it holds that 
$I$ is a model of $\varphi$ iff 
$\tau(I)$ is a model of $\hat{\varphi}$;
(3) for each $I,J\subseteq \var(\varphi)$,
$|I|\leq |J|$ iff 
$|\tau(I)|\geq |\tau(J)|$.
It follows that $(\hat{\varphi},x)$ is a yes-instance of \cardmin iff
%with $x',x''$ new variables. Then $x'$ is contained in a cardinal-minimal model
%of the transformed formula iff 
$(\varphi,x)$ is a yes-instance of \cardmax.
%% nadia proposal ends
%
%% stefan proposal begins
% $\varphi'=\{ x\vee y \mid \neg x\vee \neg y \in \varphi\}\cup
% \{x\vee x', x'\vee x''\}$ plus
% clauses $x''\vee y$, 
% for each $y$ that appears together with $x$ in a clause of $\varphi$.
% Then $(\varphi',x')$ is a yes-instance of \cardmin iff
% %with $x',x''$ new variables. Then $x'$ is contained in a cardinal-minimal model
% %of the transformed formula iff 
% $(\varphi,x)$ is a yes-instance of \cardmax.
%% stefan proposal ends
%
\nop{*** kuerzen
Hence, $v$ is contained in a maximum independent set $I$ of $G$ with $|I| \geq K$ iff $v$ is contained in a maximum independent set $I$ of $H$.
***************}
\end{proof}

\section{Applications: Belief Revision and Abduction}
\label{sect:belief-revision}

In this section, we make use of the hardness results for \cardminsat\ in the previous section, in 
order to show novel complexity results for problems from the field of knowledge representation 
when restricted to the Krom fragment and the combined Horn-Krom fragment (i.e.\ Krom formulas 
with at most one positive literal per clause).
%for the case 
%where Krom formulas are involved.

\paragraph{Belief Revision.}
Belief revision aims at
incorporating a new belief, while changing as little as possible of the original beliefs. 
We assume that a belief set is given by a 
propositional formula $\psi$ and that the new belief is given by a propositional formula $\mu$.
Revising $\psi$ by $\mu$ amounts to restricting the set of models of $\mu$ to those models which are ``closest'' to the models of $\psi$. 
Several {\em revision operators\/} have been proposed  which differ in how they define what ``closest'' means. 
Here, we focus on the revision operators due to Dalal \cite{Dalal88} and Satoh \cite{fgcs/Satoh88}.

Dalal's operator %\cite{Dalal88} 
measures the distance between the models of
$\psi$ and $\mu$ in terms of the {\em cardinality of
model change\/}, i.e.,  let 
$M$ and $M'$ be two interpretations and let $M \Delta M'$ denote the
symmetric difference between $M$ and $M'$, i.e., 
$M \Delta M' = (M \setminus M') \cup  (M' \setminus M)$.
Further, let
$|\Delta |^{min}(\psi, \mu)$ 
denote the minimum number of propositional variables on which the
models of $\psi$ and $\mu$ differ.
%  , i.e., the distance between two models corresponds to the 
%Hamming distance of the bit-vector representation of the models.
We define
$|\Delta |^{min}(\psi, \mu) :=
\min \{ | M \Delta M' | :   M
\in \mmod(\psi),  M' \in \mmod(\mu)\}$, 
where $\mmod(\cdot)$ denotes the models of a formula. 
Dalal's operator is now given
as: %\\
$\mmod (\psi \revdal \mu) = \{ M \in \mmod(\mu)   : \exists M' \in \mmod(\psi)
\; s. \, t. \,\, | M \Delta M'| = | \Delta |^{min}(\psi, \mu) \}$.

Satoh's operator %\cite{fgcs/Satoh88}
interprets the 
minimal change in terms of {\em set inclusion\/}. 
%instead of smaller cardinality. 
Thus 
let $\Delta^{min}(\psi, \mu) =
min_{\subseteq} ( \{  M \Delta M'\, : \,  M \in \mmod(\psi), \, M' \in
\mmod(\mu)\})$, where the operator
$min_{\subseteq}(\cdot)$ applied to a set $S$ of sets selects those  
elements $s \in S$, s.t.\ $S$ contains no proper subset of $s$.
Then we define Satoh's operator as %\\
$\mmod (\psi \revsat \mu) = \{ M \in \mmod(\mu)  \, : \, \exists M' \in \mmod(\psi)
\; s. \, t. \,\,  M \Delta M' \in  \Delta^{min}(\psi, \mu) \}$. 

Let $\revR$ denote a revision operator with 
$r \in \{D, S\}$. We analyse two  well-studied 
problems in belief revision.

\dproblem{\implication }{Propositional formulas $\psi$ and $\mu$, and an atom $x$.}{ Does 
$\psi \revR \mu \models x$ hold?}

\dproblem{\mc}{ Propositional formulas $\psi$ and $\mu$, 
and model $M$ of~$\mu$.}{ Does 
$ M \models \psi \revR \mu $ hold?}

\smallskip

The complexity of these problems has been intensively studied \cite{ai/EiterG92,jcss/LiberatoreS01}. 
%The former paper deals with \implication  while the latter studies 
%\mc. 
For arbitrary propositional formulas $\psi$
and $\mu$, both problems are $\THETA 2$-complete for Dalal's revision operator. 
For Satoh's revision operator, 
\mc (resp.\ \implication) is 
$\SIGMA 2$-complete (resp.\
$\PI 2$-complete).
Both \cite{ai/EiterG92} and
\cite{jcss/LiberatoreS01} have also investigated the complexity of some restricted cases (e.g., when the formulas are restricted to Horn). Below, we pinpoint the complexity of these problems in the Krom case. Our hardness results will subsume known hardness results for the Horn case as well.

\begin{thm}
\label{theorem:br-dalal-Horn}
%The 
\implication and \mc %problem 
are
$\THETA 2$-complete for Dalal's revision operator even if the formulas $\psi$ and $\mu$ are restricted to both
Krom and Horn form.
\end{thm}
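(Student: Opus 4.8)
Membership in $\THETA 2$ is inherited from the unrestricted case (both problems are known to be $\THETA 2$-complete for arbitrary $\psi,\mu$), so I only have to describe the hardness reductions. The plan for both problems is to reduce from \cardmax restricted to Krom clauses consisting of negative literals only; this is $\THETA 2$-hard by Theorem~\ref{theorem:cardmaxTwoSAT}, and every such formula is simultaneously Krom and Horn (a clause $\neg u\vee\neg w$ has no positive literal at all). So let $(\varphi,x)$ be such an instance, with $V=\var(\varphi)$ and $n=|V|$; since $\emptyset\models\varphi$, the maximum cardinality $q$ of a model of $\varphi$ is well defined, and I write $q^x$ for the maximum cardinality of a model of $\varphi$ that contains $x$ (with $q^x=-\infty$ if there is none). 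Note that $(\varphi,x)$ is a yes-instance of \cardmax iff $q^x=q$.

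For \implication I would add one fresh variable $y$ and put $\mu:=\varphi\wedge(\neg y\vee x)$ and $\psi:=\bigwedge_{v\in V\cup\{y\}} v$. Both formulas stay Krom and Horn: $\varphi$ is, the clause $\neg y\vee x$ has a single positive literal, and $\psi$ is a conjunction of positive unit clauses. Since $\psi$ has the unique model $V\cup\{y\}$ and every model of $\mu$ is contained in it, $\mmod(\psi\revdal\mu)$ is precisely the set of cardinality-maximal models of $\mu$. The decisive observation is that a model of $\mu$ contains $y$ only if it has the form $M\cup\{y\}$ for a model $M$ of $\varphi$ with $x\in M$, hence its cardinality is $|M|+1\le q^x+1$; so if $q^x=q$ every cardinality-maximal model of $\mu$ contains $y$, while if $q^x<q$ then already a maximum model of $\varphi$ (with $y$ set false) is a cardinality-maximal model of $\mu$ omitting $y$. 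Therefore $\psi\revdal\mu\models y$ iff $(\varphi,x)$ is a yes-instance of \cardmax.

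For \mc a single-model $\psi$ only buys $\coNP$-hardness (model checking then merely asks whether a given model of $\mu$ is Hamming-closest to one fixed interpretation), so I would instead swap the roles of the two formulas. Put $\psi:=\varphi\wedge(\neg y\vee x)$ (Krom and Horn as above) and $\mu:=\bigwedge_{v\in V} v$, so that over the variables $V\cup\{y\}$ occurring in $\psi$ and $\mu$ the formula $\mu$ has exactly the two models $V$ and $V\cup\{y\}$; the interpretation to be checked is $M^{*}:=V\cup\{y\}$. A short computation with symmetric differences shows $\min_{N\models\psi}|V\Delta N|=n-q$ (witnessed by a maximum model of $\varphi$ with $y$ false), while $\min_{N\models\psi}|M^{*}\Delta N|$ equals $n-q$ if $q^x=q$ (witnessed by a maximum model of $\varphi$ that contains $x$, with $y$ true) and equals $n-q+1$ if $q^x<q$. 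Consequently the minimum Dalal distance between $\psi$ and $\mu$ is $n-q$, always realized by $V$, and $M^{*}\in\mmod(\psi\revdal\mu)$ holds iff $q^x=q$, i.e.\ iff $(\varphi,x)$ is a yes-instance of \cardmax.

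The routine parts are the two symmetric-difference computations and checking that $\psi$, $\mu$, and $M^{*}$ fit the input format of the respective problem. The only point that genuinely needs care is keeping every introduced conjunct inside \emph{both} the Horn and the Krom fragment while still encoding the ``$\exists$ cardinality-maximal model containing $x$'' question --- which for \mc is what forces the use of a $\psi$ with two models --- and double-checking the degenerate case in which $\varphi$ already entails $\neg x$ (i.e.\ $\varphi$ contains the unit clause $\neg x$): then $(\varphi,x)$ is trivially a no-instance, and both reductions still behave correctly since no model of $\mu$ contains $y$, so it may as well be mapped to a fixed no-instance.
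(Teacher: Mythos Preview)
Your proof is correct. Both reductions stay inside the Horn--Krom fragment and establish $\THETA 2$-hardness as claimed; the distance computations check out, including the degenerate case $q^x=-\infty$ (although your justification ``no model of $\mu$ contains $y$'' applies only to the \implication reduction, not to the \mc one, where $\mu=\bigwedge_{v\in V}v$ leaves $y$ free --- the MC reduction is still correct there, just for the reason you already computed earlier).

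The paper proceeds differently in two respects. First, it reduces from \cardmin over \emph{positive} Krom (and then complements the clauses inside $\psi$), whereas you start from \cardmax over \emph{negative} Krom and use the source formula directly. Second, and more substantively, the paper uses a \emph{single} construction that simultaneously yields both hardness results: it builds $\psi$ and a two-model $\mu$ such that $M_1\models\psi\revdal\mu$ iff $\psi\revdal\mu\not\models y$, so one correctness argument covers both problems (via closure of $\THETA 2$ under complement). You instead give two separate reductions, exploiting the observation that a single-model $\psi$ suffices for \implication (since then $\mmod(\psi\revdal\mu)$ is just the set of cardinality-maximal models of $\mu$) but not for \mc, where you correctly note the problem would collapse to $\coNP$. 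Your approach is arguably more transparent for \implication and makes the role of the two-model side explicit for \mc; the paper's approach is more economical and avoids repeating the symmetric-difference analysis.
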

\begin{proof}
The membership even holds without any restriction \cite{ai/EiterG92,jcss/LiberatoreS01}.
Hardness is shown as follows: Given an instance $(\phi, x_0)$
of  \cardmin where $X= \{x_0, x_1,\dots, x_n\}$ is the set of variables
in $\phi$. 
By Theorem 
\ref{theorem:cardmaxTwoSAT}, this problem
is $\THETA 2$-complete even if 
$\phi$ is in Krom form with positive literals only. 
We define the following instances of 
\implication and \mc:

Let ${\cal C} = \{c_1, \dots, c_m\}$
denote the set of clauses in $\phi$.
Let
$X = \{x_0, \dots, x_n\}$
be a set of variables and 
let $y,z$ be
two further variables. 
We 
define
$\psi$ and $\mu$
as:

%\medskip
\begin{align*}
& \psi = \big(\bigwedge_{(x_i \vee x_j) \in {\cal C}} (\neg x_i \vee \neg x_j)\big)
\wedge y \\
&\mu = \big(\bigwedge_{i=1}^n x_i\big) 
\wedge (x_0 \vee \neg y) \wedge (\neg x_0 \vee y)
\end{align*}
%\medskip
%\noindent
Obviously, both $\psi$ and $\mu$ are from the Horn-Krom fragment and 
can be constructed efficiently from $(\varphi,x_0)$.

Note that $\mu$ has exactly  two models $M = M_1 = \{x_1, \dots, x_n\}$
and $M_2 = \{x_1, \dots, x_n, x_0, y\}$.
Let $k$ denote the size of a minimum model of $\phi$. We claim that 
$x_0$ is contained in a minimum model of $\phi$ iff
$M \models \psi \revdal \mu $
iff $\psi \revdal \mu \not\models y$. 
The second equivalence is obvious. 
We prove both directions of the first equivalence separately.

First, suppose that $\phi$ has a minimum model $N$ 
with   $x_0 \in N$. Then the minimum distance of $M_1$ from models of $\psi$ is $k$, which is 
witnessed by the model $I = X \setminus N \cup\{y\}$
of $\psi$ (note that $x_0\notin M_1$). It can be easily verified that $M_2$ does not have
smaller distance from any model of $\psi$.
At best, $M_2$ also has distance $k$, namely from 
any model $I_2$ of $\psi$ of the form $I_2 = (X \setminus N') \cup \{y\}$, s.t.\
$N'$ is a minimum model of $\phi$.

Now suppose that 
every model $N$ of $\phi$ with 
$x_0 \in N$ has size $\geq k+1$. 
Then the distance of $M_1$ 
from any model $I_1$ of $\psi$ is $k+1$, since $M_1 \Delta I_1$ contains
$y$ and at least $k$ elements from $\{x_1, \dots, x_n\}$. 
On the other hand, the distance of $M_2$ from models of $\psi$ is $k$ witnessed 
by any model $I_2 = (X \setminus N') \cup \{y\}$ of $\psi$ 
where $N'$ is a minimum model of $\phi$. 

In summary, we have thus shown that $M_2 \models \psi \revdal \mu$
is guaranteed to hold. But $M_1 \models \psi \revdal \mu$ holds if and only if 
there exists a minimum model $N$ of $\phi$ with $x_0 \in N$.
\nop{%
Then the minimum distance 
between models of $M_1$ from models of $\psi$ is also $k$: this distance is witnessed 
by the model $I = (X \setminus N) \cup \{y\}$ and model $M_2$ of $\mu$. 
If we consider model $M_1$ of $\mu$ instead of $M_2$ then variable $y$ is added to the symmetric difference.
This can be compensated by eliminating $x_0$ from the symmetric difference if and only if
$I$ is obtained from a minimum model $N'$ of $\phi$ with $x_0 \in N'$.
}
\end{proof}

Compared to the 
$\THETA 2$-hardness proof for Dalal's revision operator in the Horn 
case 
\cite{ai/EiterG92,jcss/LiberatoreS01}, our construction is 
much simpler, thanks to the previously established hardness result
for \cardmin. Moreover, it is not clear how the constructions 
from 
\cite{ai/EiterG92,jcss/LiberatoreS01}
can be adapted to the Horn-Krom case.

The above theorem states that, for Dalal's revision operator, 
the complexity of the 
\implication and \mc problem 
does not decrease even if the formulas are restricted to  
Krom and Horn form. In contrast, we shall show below that for 
Satoh's revision, the complexity of  \implication and \mc 
drops one level in the polynomial hierarchy if 
$\psi$ and $\mu$ are Krom. Hence, below, both the
membership in case of Krom form and the hardness 
(which even holds for the restriction to Horn and Krom form) need to be proved.
Also here, our hardness reductions differ substantially from those in
\cite{ai/EiterG92,jcss/LiberatoreS01}.

\begin{thm}
\label{theorem:br-satoh-Horn}
\implication is $\coNP$-complete 
and \mc is 
$\NP$-complete for Satoh's %revision 
operator if the formulas $\psi$ and $\mu$ are restricted to 
Krom. Hardness holds even if $\psi$ and $\mu$ are further restricted to both Krom and Horn.
\end{thm}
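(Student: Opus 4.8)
The plan is to treat membership and hardness separately, the cleaner part being membership, where the point is that for Krom formulas the minimality test hidden inside Satoh's operator collapses to a polynomial-time computation.

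\emph{Membership.} Fix inputs $\psi,\mu$ (both Krom) and a model $M$ of $\mu$, and suppose a candidate $M'\models\psi$ is given. Put $D:=M\triangle M'$, which trivially belongs to $\mathcal S:=\{N\triangle N'\mid N\models\psi,\ N'\models\mu\}$. Then $D\notin\Delta^{min}(\psi,\mu)$ iff some element of $\mathcal S$ is a proper subset of $D$, i.e.\ iff there are a variable $v\in D$ and models $N\models\psi$, $N'\models\mu$ with $N\triangle N'\subseteq D\setminus\{v\}$; the latter means that $N$ and $N'$ agree on $S_v:=\overline D\cup\{v\}$ (the set of variables outside $D$, together with $v$). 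Equivalently, for some $v\in D$ the formula $\bigl(\exists(\var(\psi)\setminus S_v).\,\psi\bigr)\wedge\bigl(\exists(\var(\mu)\setminus S_v).\,\mu\bigr)$ is satisfiable. The crucial observation is that the Krom fragment is closed under existential quantification in polynomial time: eliminating an unwanted variable by Davis--Putnam resolution turns pairs of clauses of size $\le2$ into clauses of size $\le2$, and since there are only $O(k^2)$ distinct such clauses over $k$ variables, maintaining a deduplicated clause set keeps the whole elimination polynomial (if the empty clause appears, the corresponding formula is unsatisfiable, a case disposed of directly). Hence each projection above is a Krom formula computable in polynomial time, their conjunction is Krom, and its satisfiability is a $2$-SAT check. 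Therefore ``$D\in\Delta^{min}(\psi,\mu)$'' is decidable in polynomial time once $M'$ is fixed. Consequently \mc is in $\NP$ (guess $M'$, check $M'\models\psi$, run the polynomial minimality test on $M\triangle M'$), and \implication is in $\coNP$, since its complement asks whether $\mu$ has a model $M$ with $x\notin M$ that is a Satoh revision model, and such an $M$ can be guessed together with its witness $M'$, leaving only polynomial checks.

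\emph{Hardness.} Here I would reduce from \sat. Given a CNF $\varphi$ with clauses $C_1,\dots,C_m$ over variables $X$, I would build Horn-Krom formulas $\psi,\mu$ and a model $M$ of $\mu$ as follows: let $\mmod(\psi)$ encode candidate solutions -- a truth assignment over $X$ together with, for every clause $C_j$ and every literal occurrence in it, a ``witness'' atom tied to the assignment by an implication of the form $\neg s\vee x$ or $\neg s\vee\neg x$ (both Horn-Krom) -- and choose $\mu$ with a small, carefully placed set of models, in the spirit of the two models $M_1,M_2$ used in the proof of Theorem~\ref{theorem:br-dalal-Horn}, so that $\subseteq$-minimality of $M\triangle M'$ forces the guessed $M'$ to activate at least one witness atom per clause, i.e.\ to correspond to a satisfying assignment of $\varphi$. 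This would give $M\models\psi\revsat\mu$ iff $\varphi$ is satisfiable, hence $\NP$-hardness of \mc. For \implication one adds a dedicated atom $x$ and arranges the models of $\mu$ so that $x$ is missing from some Satoh revision model exactly when $\varphi$ is satisfiable, i.e.\ $\psi\revsat\mu\not\models x$ iff $\varphi$ is satisfiable, yielding $\coNP$-hardness. Throughout, all clauses are among $x$, $\neg x$, $\neg x\vee\neg y$, $\neg x\vee y$, so hardness holds already for the combined Horn-Krom fragment.

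\emph{Main obstacle.} The difficulty is entirely on the hardness side. Because a Horn-Krom clause is just a unit or one of $\neg x\vee\neg y$, $\neg x\vee y$, the disjunctive ``some literal of $C_j$ is true'' requirement cannot appear in $\psi$ and must instead be forced by the interaction of the implicational backbone of $\psi$ with Satoh's $\subseteq$-minimisation -- whose built-in tendency is only to turn atoms \emph{false}, not to satisfy disjunctions. The delicate step is to pin down the model set of $\mu$ so that the $\subseteq$-minimal symmetric differences single out exactly the satisfying assignments, and to confirm this by the kind of case analysis (on which group of variables a competing model could improve upon $M\triangle M'$) already seen in the proof of Theorem~\ref{theorem:br-dalal-Horn}.
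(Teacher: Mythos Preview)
Your argument is correct and takes a genuinely different route from the paper. You compute the two projections $\exists(\var(\psi)\setminus S_v).\,\psi$ and $\exists(\var(\mu)\setminus S_v).\,\mu$ by Davis--Putnam resolution, relying on the fact that resolving $2$-clauses yields $2$-clauses and that deduplication caps the clause count at $O(k^2)$. The paper avoids projection entirely: it takes a variable-disjoint copy $\psi[x/y]$, conjoins it with $\mu$, and adds the equivalences $y_i\leftrightarrow x_i$ for all $x_i\notin I\Delta M$ together with one extra $y_j\leftrightarrow x_j$ for the selected $x_j\in I\Delta M$. The result is a single Krom formula whose satisfiability is exactly the existence of $J\models\psi$, $N\models\mu$ with $J\Delta N\subseteq (I\Delta M)\setminus\{x_j\}$. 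Your approach isolates a reusable closure-under-projection lemma; the paper's is a one-line reduction with no meta-argument about resolution size.

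\textbf{Hardness.} Here there is a genuine gap: you give a plan, not a construction, and the plan as written does not close. Per-literal witness atoms with $s_{j,k}\to l_{j,k}$ are Horn-Krom, but nothing you say explains how $\subseteq$-minimality forces ``at least one $s_{j,k}$ true per clause'': whatever value the $s_{j,k}$ take in $\mu$, minimality will simply push the $\psi$-model to match them uniformly, and no Horn-Krom $\mu$ picks out ``one per clause''. The paper's reduction rests on two ideas absent from your sketch. First, negative literals are removed by doubling: fresh $y_i$ with $\neg x_i\vee\neg y_i$ in $\psi$ and $x_i\wedge y_i$ in $\mu$, so that $\neg x_i$ is represented by the positive atom $y_i$ and every $\psi$-model/$\mu$-model pair differs in at least one of $x_i,y_i$. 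Second, the clause gadget is oriented \emph{opposite} to yours: a single per-clause atom $a_j$ with $l^*_{jk}\to\neg a_j$ in $\psi$ (so satisfying the clause forces $a_j$ false), $a_j$ forced true in $\mu$, plus auxiliary $b_j$ with $a_j\leftrightarrow b_j$ in $\psi$ and $b_j\to d$ in $\mu$, and $M=X\cup Y\cup A$. With this, a satisfying assignment of $\phi$ yields a $\psi$-model $I$ whose $I\Delta M$ is $\subseteq$-minimal, whereas if $\phi$ is unsatisfiable every $\psi$-model leaves some clause $j$ with all $l^*_{jk}$ false, freeing $a_j$, and one strictly shrinks $I\Delta M$ by switching $a_j,b_j$ and passing to an $N$ with $b_j,d$ true. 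The atom $d$ doubles as the implication target. Your ``main obstacle'' paragraph diagnoses the difficulty accurately, but the reversal from ``witness for a literal'' to ``flag for an unsatisfied clause'' is exactly the missing idea.
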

\begin{proof}
For the membership proofs, recall the 
$\coNP$-membership and $\NP$-membership proof
of \implication and \mc for the Horn case in 
\cite{ai/EiterG92}, Theorem 7.2; resp.\
\cite{jcss/LiberatoreS01}, Theorem 20.
The key idea there is that, 
given a model $I$ of  $\psi$ and a model $M$ of $\mu$
the subset-minimality of $I \Delta M$ can be tested in polynomial time by
reducing this problem to a SAT problem
involving the formulas $\psi$ and $\mu$. 
The same idea holds for  Krom form.

The crucial observation there is that,
given a model $I$ of  $\psi$ and a model $M$ of $\mu$, one can 
check efficiently whether there exists a model 
$J$ of  $\psi$ and a model $N$ of $\mu$ with 
$J \Delta N \subset I \Delta M$. Indeed, let 
$\Var(\psi) \cup \Var(\mu) = \{x_1, \dots, x_n\}$ and 
let $\{y_1, \dots, y_n\}$ be
fresh, pairwise distinct variables.
Then a model $J$ of $\psi$ and a model $N$ of $\mu$
with 
$J \Delta N \subset I \Delta M$
exist iff for some variable $x_j \in I \Delta M$, the following 
propositional formula is satisfiable: 
%
%
%$$
\begin{equation}\label{eq:f}
\psi[x/y] \wedge \mu \wedge 
(y_j \leftrightarrow x_j) 
\wedge
\bigwedge_{x_i \not\in I\Delta M} 
(y_i \leftrightarrow x_i) 
\end{equation}
%$$
%
%
Here, $\psi[x/y]$ denotes the formula that we obtain from $\psi$ by
replacing every $x_i$ by $y_i$. If $\psi$ and $\mu$ are Horn
(and, likewise, if they are Krom), then 
(\ref{eq:f}) is Horn (resp.\ Krom) as well,
and hence
this satisfiability check is feasible in polynomial~time.

Hardness is shown by the following reduction 
from 3SAT respectively co-3SAT: 
Let $\phi$ be an arbitrary Boolean formula in 3-CNF over 
the
variables $X = \{x_1, \dots, x_n\}$, i.e., 
$\phi = c_1 \wedge \dots \wedge c_m$,
s.t.\ each clause $c_i$ is of the form 
$c_i = l_{i1} \vee l_{i2} \vee l_{i3}$, 
where the $l_{ij}$'s are literals over $X$.
Let $Y = \{y_1, \dots, y_n\}$, $A = \{a_1, \dots, a_m\}$,
$B = \{b_1, \dots, b_m\}$, and $\{d\}$ be sets of fresh, pairwise distinct propositional variables. 
We 
define
$\psi$ and $\mu$
as follows:
\begin{eqnarray*}
\psi &=&  \big(\bigwedge_{i=1}^n (\neg x_i \vee \neg y_i)\big)
\wedge \big(\bigwedge_{j=1}^m 
\bigwedge_{k=1}^3 ( l^*_{jk} \rightarrow \neg a_j)\big) \wedge 
\\
&&\big(\bigwedge_{j=1}^m (a_j \leftrightarrow b_j)\big)\\
\mu &=& \big(\bigwedge_{i=1}^n (x_i \wedge y_i) \big) 
\wedge \big(\bigwedge_{j=1}^m a_j\big) 
\wedge \big(\bigwedge_{j=1}^m ( b_j \rightarrow d)\big)
\end{eqnarray*}
%
%\medskip
%\noindent
%\medskip
%\noindent
%\hskip 10pt
%$\psi = \big(\bigwedge_{i=1}^n (\neg x_i \vee \neg y_i)\big)
%\wedge \big(\bigwedge_{j=1}^m 
%\bigwedge_{k=1}^3 ( l^*_{jk} \rightarrow \neg a_j)\big) \wedge \mbox{}$
%\\
%\phantom{\hskip 10pt $\psi = $}
%$\big(\bigwedge_{j=1}^m (a_j \leftrightarrow b_j)\big)$
%
%\smallskip
%
%\noindent
%\hskip 10pt
%$\mu = \big(\bigwedge_{i=1}^n (x_i \wedge y_i) \big) 
%\wedge \big(\bigwedge_{j=1}^m a_j\big) 
%\wedge \big(\bigwedge_{j=1}^m ( b_j \rightarrow d)\big)$
%
%\medskip
%\noindent
where we set $l^*_{jk} = x_\alpha$ if $l_{jk} = x_\alpha$ 
for some $\alpha \in \{1, \dots, n\}$
and 
$l^*_{jk} = y_\alpha$ if $l_{jk} = \neg x_\alpha$.
Finally, we define $M = X \cup Y\cup A$.
% Clearly this reduction is feasible in polynomial time. 
Both $\psi$ and $\mu$ are from the Horn-Krom fragment and 
can be constructed efficiently from $\phi$.
We claim that $\phi$ is satisfiable iff
$M \models \psi \revsat \mu $ 
iff $\psi \revsat \mu \not\models d$. 

Observe that every model of $\mu$ must set the variables in $X\cup Y \cup A$ to true. 
The truth value of the 
variables in $B$ may be chosen arbitrarily. However, as soon as at least one $b_j \in B$ is 
set to true, we must set $d$ to true due to the last conjunct in 
$\mu$. Hence, $M$ is the only model of $\mu$ where $d$ is set to false. 
From this, the second equivalence above 
follows. Below we sketch the proof of  the first equivalence.

First, suppose that $\phi$ is satisfiable and let $G$ be a model of 
$\phi$. We set $I := 
\{x_i \mid x_i \in G\} \cup \{y_i \mid x_i \not \in G\}$.
Indeed, $I$ is a model of $\psi$. We claim that 
$I$ is a witness for 
$M \models \psi \revsat \mu $, i.e., 
$I \Delta M$ is minimal. 
To prove this claim, let $J$ be a model of $\psi$ and $N$ a model of $\mu$, 
s.t.\ $J \Delta N \subseteq I \Delta M$. It suffices to show that then 
$J \Delta N =  I \Delta M$ holds. To this end, we compare $I$ and $J$ first on $X \cup Y$ 
then on $A$ and finally on $B \cup \{d\}$: By the first group of conjuncts in $\psi$,
we may conclude from $J \Delta N \subseteq I \Delta M$ that $I$ and $J$ coincide on 
$X \cup Y$. In particular, both $I$ and $J$ are models of $\phi$. 
But then they also coincide on $A$ since the second group of conjuncts in $\psi$
enforces $I(a_j) = J(a_j) = $ false for every $j$. 
Note that $(I \Delta M) \cap (B \cup \{d\}) = \emptyset$. 
Hence, no matter how we 
choose $J$ and $N$ on $B \cup \{d\}$, we have 
$I \Delta M \subseteq J \Delta N$.

Now suppose that $\phi$ is unsatisfiable. 
Let $I$ be a model of $\psi$. 
To prove $M \not\models \psi \revsat \mu $, 
we show that $I \Delta M$ cannot be minimal. 
By the unsatisfiability of $\phi$, we know that $I$ 
(restricted to $X$) 
is not a model of $\phi$. Hence, there exists at least one clause $c_j$ which is
false in $I$.  
The proof goes by 
constructing $J, N$ with 
$J \Delta N \subset I \Delta M$. 
The crucial observation is that the symmetric difference $I \Delta M$ can be decreased
by setting $J(a_j) = J(b_j) = N(b_j) =$ true and $N(d) =$ true. 
\nop{**** kuerzen (und vorher auf die neue Reduktion anpassen)
\smallskip
(1) Suppose that $\phi$ is satisfiable. Then there exists a model $G$ of 
$\phi$. From this we define the assignment $I$ as 
$$I := 
\{x_i \mid x_i \in G\} \cup \{y_i \mid x_i \not \in G\} \cup A \cup B$$
It is easy to check that $I$ is a model of $\psi$. We claim that 
$I$ is a witness for 
$M \models \psi \revsat \mu $, i.e., 
$I \Delta M$ is minimal. 
To prove this claim, let $J$ be a model of $\psi$ and $N$ a model of $\mu$, 
s.t.\ $J \Delta N \subseteq I \Delta M$. It suffices to show that then 
$J \Delta N =  I \Delta M$ holds. 

By the above considerations, $N \subseteq X \cup Y$ must hold no matter how the 
model $N$ of $\mu$ is chosen. 
The assignments $I$ and $J$ satisfy $\psi$ and, hence, in particular, they satisfy the conjunct 
$\bigwedge_{i=1}^n (x_i \leftrightarrow \neg y_i)$. Thus, 
for every $i \in \{1, \dots, n\}$, both $I$ and $J$ contain exactly one of 
$\{x_i, y_i\}$. From $J \Delta N \subseteq I \Delta M$, we therefore
conclude that
$I$ and $J$ coincide on $X \cup Y$. For suppose to the contrary that 
there exists some $x_i$ (the case of $y_i$ is symmetric) with
$x_i \not \in J \Delta N$ and
$x_i \in I \Delta M$. Then we have 
$y_i \in J \Delta N$ and
$y_i \not \in I \Delta M$, which contradicts the assumption
$J \Delta N \subseteq I \Delta M$.

By the construction of $I$, 
$I$ coincides with $G$ on the variables $X$. Hence, 
since $G$ is a model of $\phi$, $I$ satisfies at least one literal in every clause of $\phi$. This also holds true for $J$, 
since we have just shown that $J$ 
coincides with $I$ on $X$. Since $J$ satisfies
$\psi$ it, in particular, satisfies the conjunct 
$\bigwedge_{j=1}^m \bigwedge_{k=1}^3 (l_{jk} \rightarrow a_j)$. 
Hence, $A \subseteq J$ holds. 
By the conjunct $\bigwedge_{j=1}^m (a_j \leftrightarrow b_j)$
in $\psi$, we also have
$B \subseteq J$. 

But then $J$ must be of one of the following two forms: 
either $J = X \cup Y \cup A \cup B $ or 
$J = X \cup Y \cup A \cup B \cup \{d\}$. 
In the former case, the model $N$ of $\mu$ with minimal distance
$J \Delta N$
must be of the form $N = X \cup Y \cup B$, which 
%gives rise to the difference
implies
$(J \Delta N) \cap (A \cup B \cup \{d\}) = A
= (I \Delta M) \cap (A \cup B \cup \{d\})$. 
Indeed, if we choose
$N$ differently (i.e., by omitting some $b_j$ and consequently adding $d$), the
difference $(J \Delta N) \cap (A \cup B \cup \{d\})$ and hence, 
$(J \Delta N)$ increases. Thus,   
$J \Delta N =  I \Delta M$ holds.

It remains to consider the case $J = X \cup Y \cup A \cup B \cup \{d\}$. 
Then the minimal distance is achieved by choosing 
$N = X \cup Y \cup B \cup \{d\}$. Again, we end up with the equality
$(J \Delta N) \cap (A \cup B \cup \{d\}) =
(I \Delta M) \cap (A \cup B \cup \{d\})$ and, therefore, also 
$J \Delta N =  I \Delta M$.

\smallskip

(2) Now suppose that $\phi$ is unsatisfiable. 
Let $I$ be an arbitrary model of $\psi$. 
To prove $M \models \psi \revsat \mu $, 
we show that $I \Delta M$ cannot be minimal. To this end, 
we construct $J, N$ with 
$J \Delta N \subset I \Delta M$.

By assumption, $\phi$ is unsatisfiable. Hence, $I$ (restricted to $X$) cannot
be a model of $\phi$. Hence, there exists at least one clause $c_j$ which is
false in $I$.  Now consider the behaviour of $I$ on the variables 
$a_j$ and $b_j$. Since $I$ is a model of $\psi$, we clearly have 
$I(a_j) = I(b_j)$. 

If $I(a_j) = I(b_j) =$ true, then we clearly have $a_j \in I \Delta M$. 
Now we construct $J$ and $N$ as follows: 
$J(a_j) = J(b_j) =$ false, $J(d)$ = true and $J(z) = I(z)$ for all other
variables $z$. Moreover, we set 
$N(b_j) =$ false, $N(d)$ = true and $N(z) = M(z)$ for all other
variables $z$. Clearly, if $I$ is a model of $\psi$, then so is $J$. Indeed, 
the only conjuncts in $\psi$ affected by the modification of $J$ w.r.t.\ $I$
are 
$\bigwedge_{k=1}^3 (l_{jk} \rightarrow a_j)$
and $\bigwedge_{j=1}^m (a_j \leftrightarrow b_j)$.
The latter conjunction is true in $J$ since we  
still have $J(a_j) = J(b_j)$. 
Now consider the former conjunction. By assumption, all literals in the $j$-th clause evaluate to false in $I$. 
Hence, the implications $l_{jk} \rightarrow a_j$ are still true in $J$
even if we 
set $a_j$ to false in $J$. 
Moreover, the truth value of $d$ in $J$ has no
effect since $d$ does not even occur in $\psi$.
Of course, also $N$ is a model of $\mu$ since we may arbitrarily remove 
variables $b_j$ from $M$ as long as we set $d$ to true in order to satisfy
the last conjunct $\bigwedge_{j=1}^m ( \neg b_j \rightarrow d)$ in $\mu$.
We thus clearly have $J \Delta N \subseteq I \Delta M$
and $a_j \not\in J \Delta N$ while 
$a_j \in I \Delta M$. Hence, in fact
$J \Delta N \subset I \Delta M$ holds.

It remains to consider the case. 
$I(a_j) = I(b_j) =$ false. 
We now set $J = I$ and $N = (M \setminus \{b_j\}) \cup \{d\}$. 
Again we thus have $J \Delta N \subseteq I \Delta M$. Moreover, 
$b_j \in I \Delta M$ while $b_j \not\in J \Delta N$. 
Hence, also in this case, 
$J \Delta N \subset I \Delta M$ holds.
********************}
\end{proof}

%\section{Propositional Abduction}
%\label{sect:abduction}

\paragraph{Abduction.}
Abduction is used to produce explanations for some observed 
manifestations. Therefore, one of its primary fields of application is 
diagnosis.
A {\em propositional abduction problem\/} (PAP)~$\pap$ consists of a
tuple $\tuple{V,H,M,T}$, where~$V$ is a finite set of
\emph{variables}, $H \subseteq V$ is the set of \emph{hypotheses}, $M
\subseteq V$ is the set of \emph{manifestations}, and~$T$ is a
consistent \emph{theory} in the form of a propositional formula. A set
$\s \subseteq H$ is a \emph{solution} (also called \emph{explanation})
to~$\pap$ if $T \cup \s$ is consistent and $T \cup \s \models M$
holds. 
A \emph{system diagnosis problem} can be represented by a 
PAP $\pap = \tuple{V,H,M,T}$ as follows. The theory~$T$
is the system description.  The hypotheses $H \subseteq V$ describe
the possibly faulty system components.  The manifestations $M
\subseteq V$ are the observed symptoms, describing the malfunction of
the system.  The solutions~$\s$ of~$\pap$ are the possible
explanations of the malfunction.

Often, one is not interested in
\emph{any} solution of a given PAP~$\pap$
but only in \emph{minimal} solutions, where
minimality is defined  w.r.t.\
some preorder~$\preceq$ on the powerset~$2^H$. Two natural preorders
are set-inclusion~$\subseteq$ and smaller cardinality denoted as
$\leq$. Note that allowing {\em any} solution corresponds to choosing ``$=$'' as the preorder.
In \cite{siamcomp/CreignouZ06} a trichotomy (of 
$\SIGMA 2$-completeness, $\NP$-completeness, and tractability) has been proved for the
{\sc Solvability} problem of propositional abduction, i.e., deciding if a PAP $\pap$ has at least one solution
(the preorder ``$=$'' has thus been considered). 
Nordh and Zanuttini  
\cite{ai/NordhZ08} 
have  identified 
many further restrictions which make the {\sc Solvability} problem tractable. 
All of the above mentioned preorders have been 
systematically investigated in 
\cite{jacm/EiterG95}.
A study of the  counting complexity of abduction with 
these %above mentioned 
preorders has been carried out by 
Hermann and Pichler 
\cite{jcss/HermannP10}.
Of course, if a PAP has {\em any} solution than it also has a $\preceq$-minimal solution. Hence, the 
preorder is only of interest for problems like the following one:

\dproblem{\relevance }{PAP $\pap = \tuple{V,H,M,T}$ and hypothesis $h \in H$.}{ Is $h$ relevant, i.e., does $\pap$
have a $\preceq$-minimal solution $\s$ with $h \in \s$?}

\smallskip
\noindent
%The complexity of 
%the \relevance problem for the various preorders has been 
%classified in 
Known results \cite{jacm/EiterG95} are as follows: 
The \relevance problem is $\SIGMA 2$-complete for preorders
$=$ and $\subseteq$ and $\THETA 3$-complete for preorder $\leq$.
Moreover, %they showed that 
the complexity drops by one level in the 
polynomial hierarchy if the
theory is restricted to Horn. In \cite{siamcomp/CreignouZ06},
the Krom case was considered for the preorder $=$.  
For the preorder $\subseteq$, the Krom case was implicitly settled in \cite{jacm/EiterG95}. 
Indeed, an inspection of the $\NP$-hardness proof of the \relevance problem in the
Horn case reveals that $\NP$-hardness holds even if the theory is simultaneously
restricted to Horn {\em and\/} Krom (see the proof of Theorem 5.2 in \cite{jacm/EiterG95}). 
Below we show that also for the preorder 
$\leq$, the complexity in the Krom case matches the Horn case.

\begin{thm}
\label{theorem:abduction}
%Let us consider 
The 
%\relevance 
$\leq$-{\sc Relevance}\xspace
problem 
for 
PAPs $\pap = \tuple{V,H,M,T}$ 
where the theory $T$ is Krom
%for preorder $\leq$ 
is $\THETA 2$-complete. 
Hardness
%results 
holds even if the theory is restricted simultaneously to Horn and Krom.
\end{thm}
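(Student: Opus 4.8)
The plan is to prove $\THETA 2$-membership directly by binary search, and to obtain $\THETA 2$-hardness (already for the combined Horn-Krom fragment) by a reduction from the restricted version of \cardmin that was shown $\THETA 2$-hard in Theorem~\ref{theorem:cardmaxTwoSAT}, namely \cardmin on Krom formulas whose clauses contain positive literals only. For the membership part, the key observation is that when the theory $T$ is Krom, checking whether a candidate $\s \subseteq H$ is a solution of $\pap$ is feasible in polynomial time: the consistency of $T \cup \s$, and, for each manifestation $m \in M$, the entailment $T \cup \s \models m$ (i.e.\ the unsatisfiability of $T \cup \s \cup \{\neg m\}$) both reduce to $2$-SAT instances of polynomial size. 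Hence ``is there a solution $\s$ with $|\s| \leq k$?'' is in $\NP$ (guess $\s$, verify in polynomial time). One initial oracle call (with $k=|H|$) detects whether any solution exists at all; if not, $h$ is not $\leq$-relevant. Otherwise, a binary search over $k \in \{0,\dots,|H|\}$ determines the minimum cardinality $k_{\min}$ of a solution using $O(\log |H|)$ oracle calls, and one final $\NP$-oracle call decides whether there is a solution $\s$ with $h \in \s$ and $|\s| \leq k_{\min}$. This is a deterministic polynomial-time computation with logarithmically many $\NP$-queries, hence the problem is in $\THETA 2$.

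For hardness, let $(\phi, x_0)$ be an instance of \cardmin where $\phi = \bigwedge_{k=1}^{q}(x_{i_k} \vee x_{j_k})$ over the variable set $X = \{x_0, x_1, \dots, x_n\}$ and all literals are positive. I would introduce fresh manifestation variables $m_1, \dots, m_q$ and build the PAP $\pap = \tuple{V,H,M,T}$ with $V = X \cup \{m_1,\dots,m_q\}$, $H = X$, $M = \{m_1, \dots, m_q\}$, and
\[
 T = \bigwedge_{k=1}^{q}\big((\neg x_{i_k} \vee m_k) \wedge (\neg x_{j_k} \vee m_k)\big),
\]
and I would ask whether $h := x_0$ is $\leq$-relevant in $\pap$. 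Note that $T$ is simultaneously Horn and Krom, is log-space constructible from $(\phi,x_0)$, and is consistent (set all variables true).

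The correctness argument then runs as follows. For every $\s \subseteq X$, the set $T \cup \s$ is consistent (set all variables in $X$ and all $m_k$ to true). Moreover, since the only occurrences of $m_k$ in $T$ are in the clauses $\neg x_{i_k} \vee m_k$ and $\neg x_{j_k} \vee m_k$, the entailment $T \cup \s \models m_k$ holds precisely when $x_{i_k} \in \s$ or $x_{j_k} \in \s$: if so, the relevant clause forces $m_k$ in every model of $T \cup \s$; if not, setting $m_k$, $x_{i_k}$, $x_{j_k}$ to false (and all remaining variables consistently with $\s$) gives a model of $T \cup \s$ in which $m_k$ is false. Consequently $\s$ is a solution of $\pap$ if and only if $\s$, viewed as a truth assignment on $X$, satisfies every clause of $\phi$; that is, the solutions of $\pap$ are exactly the models of $\phi$, and corresponding solutions and models have the same cardinality, since the $m_k$ are not hypotheses. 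Hence the $\leq$-minimal solutions of $\pap$ are exactly the cardinality-minimal models of $\phi$, and $x_0$ is $\leq$-relevant in $\pap$ if and only if $x_0$ is true in some cardinality-minimal model of $\phi$. Together with the membership bound, this yields $\THETA 2$-completeness, and hardness holds already for Horn-Krom theories.

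All individual ingredients are routine: the $2$-SAT reductions behind the polynomial-time solution check, the binary search, and the correspondence between solutions and models. The one point that deserves explicit (though short) care -- and the real crux of the reduction -- is the equivalence between entailment of $m_k$ and coverage of the $k$-th clause of $\phi$ by $\s$: it relies on the manifestation variables occurring only positively, ``on the right-hand side'' of the implications in $T$, so that an uncovered clause can always be witnessed by a model of $T \cup \s$ that falsifies the corresponding $m_k$. This is also exactly why the hardness source must be the positive-literal restriction of \cardmin, since a negated literal $\neg x$ in a clause of $\phi$ would correspond to the ``non-membership'' condition $x \notin \s$, which cannot be forced in the same way.
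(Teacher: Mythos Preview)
Your proof is correct and follows essentially the same approach as the paper: the hardness reduction is identical (the paper writes the clauses of $T$ as implications $p_i \to g_i$ and $q_i \to g_i$, which are precisely your $\neg x_{i_k} \vee m_k$ and $\neg x_{j_k} \vee m_k$), and the membership argument is the standard binary search that the paper only sketches by deferring to \cite{jacm/EiterG95}. Your write-up is in fact more detailed on both fronts, in particular in spelling out why $T \cup \s \models m_k$ iff $\s$ covers the $k$-th clause.
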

\begin{proof}
The membership proof is analogous to the corresponding
%membership proofs 
one in \cite{jacm/EiterG95} for the general case. 
The decrease of complexity compared with arbitrary theories
is due to the 
tractability of satisfiability testing in the Krom case. 
$\THETA 2$-hardness is shown by the following problem reduction 
from \cardmin.
Consider an
arbitrary instance $(\phi, x_i)$ of \cardmin.
By Theorem \ref{theorem:cardmaxTwoSAT},
we may assume that $\phi$ is in positive Krom form. 
Let $\phi = (p_1 \vee q_1) \wedge \dots \wedge (p_m \vee q_m)$
over variables $X = \{x_1, \dots, x_n\}$
and let $G = \{g_1, \dots,  g_m \}$ be a set of fresh, pairwise distinct variables.
We define the PAP $\pap = \tuple{V,H,M,T}$ as follows: 
  \begin{eqnarray*}
    V &=& X \cup G \\
    H & = & X \\
    M & = &  G\\
    T &=& \{ p_i \to g_i \mid 1 \leq i \leq m \} \cup \{ q_i \to g_i
    \mid 1 \leq i \leq m \}
  \end{eqnarray*}
It is easy to verify that 
the models of $\phi$ coincide with the solutions of $\pap$. Hence,
$x_i$ is in a minimum model of $\phi$ iff 
$x_i$ is in a minimum solution of $\pap$.
\end{proof}

We note that 
our 
$\THETA 2$-hardness reduction for $\leq$-\relevanceplain %problem 
is much easier than the 
$\THETA 2$-hardness reduction in
\cite{jacm/EiterG95} for the Horn case. (In fact, the latter reduction
maps a certain MAXSAT problem to abduction, and it is not immediate how
this reduction can be adapted to work in the combined Horn-Krom case.)
Again the reason why our reduction is quite simple relies on the fact that we 
can  start from the more closely related problem
\cardmin for Krom.

%%%%%%%%%%%%%%%%%%%%%%%%%%%%%%%%%%%%%%%%%%%%%%%%%%%%%%%%%%%%%%%%%%%%%%%%%%%%%
\section{Complete Classification of \cardmin}
\label{sect:classification}
%%%%%%%%%%%%%%%%%%%%%%%%%%%%%%%%%%%%%%%%%%%%%%%%%%%%%%%%%%%%%%%%%%%%%%%%%%%%%

Since neither the full Krom fragment nor even the Krom \emph{and} (dual) Horn fragment
makes the problems investigated tractable
(especially the ones that are $\THETA{2}$-complete) it is worth making a step
further and studying   how much we have to restrict the syntactic form of the
Krom formulas to decrease the complexity. A key for such tractability results
is to go through a more fine-grained complexity study of \cardmin. To this aim
we
propose to investigate this problem within Schaefer's framework that we
introduce next.

\medskip

A \emph{logical relation} (or a \emph{Boolean relation}) of arity $k$ is a relation $R\subseteq\{0,1\}^k$. We
will refer below to the following binary relation,
$\relOr{2}=\{(0,1),
(1,0), (1,1)\}$.
By abuse of notation we do 
not make a difference between a relation and its predicate symbol. 
% , $\imp=\{(0,0),
% (0,1), (1,1)\}$ and $\true=\{1\}$.
 A
\emph{constraint} (or \emph{constraint application}) $C$ is a formula
$R(x_1,\dots,x_k)$, where $R$ is a logical relation of arity $k$ and
$x_1,\dots,x_k$ are (not necessarily distinct) variables.
If $u$ and $v$ are two variables, then $C[u/v]$ denotes the constraint obtained from $C$ in replacing 
each occurrence of $v$ by $u$. If $V$ is a set of variables, then $C[u/V]$ denotes the result of substituting $u$ 
to every occurrence of every variable of $V$ in $C$.
 An assignment $I$ of
truth values to the variables \emph{satisfies} the constraint if
$\bigl(I(x_1),\dots,I(x_k)\bigr)\in R$.
%  If $V$ is a subset of the variables,
% then with $I\restr{V}$ we denote the restriction of $I$ to $V$.
%  
A \emph{constraint language} $\Gamma$ is a finite set of logical
relations. A \emph{$\Gamma$-formula} $\varphi$ is a conjunction
of constraint applications using only logical relations from
$\Gamma$, and hence is a quantifier-free first-order formula. For single-element constraint languages
$\set R$, we often omit parenthesis and speak about $R$-formulas
instead of $\set R$-formulas. With
$\var(\varphi)$ we denote the set of variables appearing in
$\varphi$. A $\Gamma$-formula $\varphi$ is satisfied by a truth
assignment $I$ if $I$ satisfies all the constraints in $\varphi$, such an $I$ is then a \emph{model} of $\varphi$,
 its \emph{cardinality} refers to the number of variables assigned 1. 
We say that two
quantifier-free first-order formulas $\varphi$ and $\psi$ are
equivalent $(\varphi\equiv\psi$) if they have the same sets of
variables and of satisfying assignments. 
 Assuming a canonical order on the variables, we
can regard assignments as tuples in the obvious way, and say that
a quantifier-free first-order formula \emph{defines} or \emph{implements}  the 
logical
relation of its models. For instance the binary clause $(x_1\lor x_2)$
defines the relation $\relOr{2}$. This notion can be naturally extended to existentially-quantified 
formulas in considering their free variables. For instance the formula $\exists y (x_1\lor y)\land(x_2\lor \neg y)$
 defines (or implements) the  relation $\relOr{2}$ as well.
 
%  We also write $(0^n,1^n)$ for
% the tuple $(\underbrace{0,\dots,0}_n,\underbrace{1,\dots,1}_n)$,
% etc.
\medskip

Throughout the text we refer to different types of Boolean relations following
Schaefer's terminology~\cite{stoc/Schaefer78}.
We say that a Boolean relation~$R$ is \emph{Horn} (resp.  
\emph{dual Horn}) 
if
$R$ can be defined by a CNF formula that
is Horn 
(resp. dual Horn). 
A relation $R$ is \emph{Krom} if it can be defined by
a 2-CNF formula.
 A relation $R$ is \emph{affine}
 if it can be defined by an \emph{affine} formula, \ie, conjunctions of
XOR-clauses (consisting of an XOR of some variables plus maybe the constant 1)
--- 
such a formula may also be seen as a system of linear equations over the field GF$(2)$. 
A
relation is \emph{ \daffine} 
 if it is definable by 
a conjunction of clauses, each of them being either a
unary clause or a $2$-XOR-clause (consisting of an XOR of two variables plus maybe
the constant 1) --- such a conjunctive formula may also be seen as a set of
 a conjunction of equalities and disequalities between pairs of variables.
A
relation $R$ of arity $k$  is \emph{0-valid} (resp., \emph{1-valid}) if $0^k\in R$
(resp., $1^k\in R$).
 Finally, a constraint language $\Gamma$ is Horn 
(resp. dual Horn, Krom, affine, \daffine, 0-valid, 1-valid) 
if every relation in $\Gamma$ is Horn 
 (resp. dual Horn, Krom, affine, \daffine).
 We say that a constraint language is \emph{Schaefer} if
$\Gamma$ is either Horn, dual Horn, Krom, or affine.

\medskip

The complexity study of the satisfiability of $\Gamma$-formulas, $\sat(\Gamma)$, started in 1978 in the seminal work of Schaefer.
He proved a famous dichotomy theorem: $\sat(\Gamma)$ is in $\PP$ if $\Gamma$ is either Schaefer, 0-valid or 1-valid, and    $\NP$-complete otherwise.
We study here the   following problem.

\dproblem{$\cardminsat(\Gamma)$}{ $\Gamma$-formula $\phi$ and 
atom $x$.}{Is $x$ true in a cardinality-minimal model of $\phi$?}

%  At first sight this problem can look very similar to the optimization
% problem 
% {\sc MinOnes}$(\Gamma)$  
% studied by Khanna \textit{et al.} \cite{khsutrwi01}, in which given a 
% $\Gamma$-formula and an integer $k$, the question is to find a cardinality-minimal model. However,
% the complexity classification that they obtained   shows $\NP$-hardness
% and as a consequence is of no help here, where $\THETA{2}$-hardness is
% expected. 

Our main result in this section is the following complete
classification within the Krom fragment.

\begin{thm}\label{theorem:classification}
Let $\Gamma$ be a  Krom constraint language.
 If $\Gamma$ is \daffine or Horn, then  $\cardminsat(\Gamma)$   is decidable in
polynomial time, otherwise it is $\THETA{2}$-complete.
\end{thm}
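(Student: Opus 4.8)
The plan is to prove this dichotomy by combining an algorithmic (tractability) part with a hardness part, and to organize the case distinction via Schaefer-style Galois connections on Krom relations. The tractable side has two subcases. If $\Gamma$ is Horn, then every $\Gamma$-formula is Horn, hence has a unique minimal model computable in polynomial time by unit propagation; one simply reads off whether $x$ is true in it. If $\Gamma$ is \daffine, then a $\Gamma$-formula is a conjunction of unit clauses and equalities/disequalities between pairs of variables; such a formula partitions the variables into equivalence classes (under "equal" and "forced") with parity constraints between some pairs, which is exactly a 2-colouring / union-find problem solvable in polynomial time. The cardinality-minimal model is then found greedily: in each connected component one of the two consistent colourings has no more ones than the other (or they are complementary and one picks the smaller), and again one checks the status of $x$. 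I would spell out the union-find argument but keep it short, since it is routine.

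For the hardness side, the strategy is: if a Krom $\Gamma$ is neither \daffine nor Horn, show that $\Gamma$ can "implement" (via conjunction, existential quantification, and substitution of variables/constants — the operations shown admissible in the preliminaries of Section~\ref{sect:classification}) enough structure to encode an arbitrary positive Krom formula, and then invoke the $\THETA 2$-hardness of \cardminsat for positive Krom from Theorem~\ref{theorem:cardmaxTwoSAT}. Concretely, a Krom relation that is not \daffine must contain, up to permutation of coordinates and identification, a genuine binary disjunction, i.e.\ it must be able to implement $\relOr 2$ (the relation $\{(0,1),(1,0),(1,1)\}$) or its dual/mixed variants; and a Krom relation that is not Horn cannot be forced to only use the "at most one positive literal" pattern, so at the relevant step the positive clause $(x\vee y)$, rather than $(\neg x\vee \neg y)$, survives. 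The key technical lemma will therefore be: any Krom $\Gamma$ that is neither \daffine nor Horn implements $\relOr 2$. Once that lemma is in place, given a positive Krom instance $(\phi,x)$ with $\phi=\bigwedge_i (p_i\vee q_i)$, one replaces each clause $(p_i\vee q_i)$ by the $\Gamma$-implementation of $\relOr 2(p_i,q_i)$, possibly introducing fresh existentially quantified auxiliary variables; one must be careful that these auxiliaries do not spoil the cardinality bookkeeping. This is handled in the standard way: pad each auxiliary variable with a complementary copy (as in the proof of Lemma~\ref{lemma:cardmaxsat} / Theorem~\ref{theorem:cardmaxTwoSAT}) so that auxiliaries contribute a fixed number of ones to every model, making cardinality-minimal models of the new formula correspond exactly to cardinality-minimal models of $\phi$; the designated atom $x$ is carried over unchanged. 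Correctness of the reduction then gives that $\cardminsat(\Gamma)$ is $\THETA 2$-hard, while $\THETA 2$-membership follows from the binary-search argument already used in the proof of Theorem~\ref{theorem:cardmaxTwoSAT} together with the polynomial-time satisfiability of Krom (indeed Schaefer) formulas.

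The main obstacle I anticipate is the case analysis for the key implementation lemma: one has to enumerate, up to the admissible closure operations, the Krom relations that are neither \daffine nor Horn and show each one implements $\relOr 2$. The cleanest route is to phrase this in terms of polymorphisms: \daffine corresponds to relations preserved by both the ternary majority and the ternary minority operations (equivalently, by the affine operation $x\oplus y\oplus z$ restricted appropriately), Horn corresponds to preservation by binary $\wedge$, and Krom corresponds to preservation by majority. A Krom relation that is not \daffine is preserved by majority but not by minority, and not being Horn means it is not preserved by $\wedge$; one then argues from the structure of 2-CNF definitions that such a relation must contain a positive 2-clause that cannot be strengthened, which is precisely $\relOr 2$ (after quantifying out the coordinates fixed by unary constraints and identifying equal coordinates). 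The bookkeeping with constants needs a small amount of care — if $\Gamma$ does not supply the constant relations one uses fresh variables pinned by unary clauses extracted from the relations themselves, which exist because the relation is not \daffine in the relevant sense — but this is a finite check. I would present the lemma's statement and the polymorphism dictionary explicitly and then do the finite case distinction, flagging it as the only genuinely laborious step.
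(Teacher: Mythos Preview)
Your tractability arguments and the $\THETA 2$-membership are correct and match the paper, and the overall hardness strategy --- reduce from $\cardminsat(\relOr{2})$ via Theorem~\ref{theorem:cardmaxTwoSAT} --- is also the paper's. The gap is in how you propose to carry out that reduction.

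You plan to implement $\relOr{2}$ from $\Gamma$ using the ordinary co-clone closure (conjunction, existential quantification, equality, constants) and then neutralize the auxiliary existential variables by padding each with a complementary copy via $y\leftrightarrow\neg y'$. This does not go through in general: the padding gadget is the disequality relation, and when every relation in $\Gamma$ is $1$-valid --- which is precisely the case $\coclone{\Gamma}=\inv{\clonename{S}_{01}^2}$ among the five Krom co-clones that are neither \daffine nor Horn --- no $(\Gamma\cup\{=\})$-formula can express $\neq$, since $1$-validity is preserved under $\coclone{\cdot}$ and $(1,1)\notin{\neq}$. Without the padding, an auxiliary whose minimum-weight extension depends on the values of the free variables will distort the cardinality comparison between models, and the reduction is not cardinality-preserving. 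Your remark about extracting unary clauses to pin constants suffers the same defect when $\Gamma$ is both $0$- and $1$-valid.

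The paper's fix is to restrict existential quantification to \emph{frozen} variables (variables taking the same value in every model of the implementing formula) and to work with the closure $\cocloneFrnoeq{\cdot}$, for which the reduction $\cardminsat(\Gamma_1)\le\cardminsat(\Gamma_2)$ is immediate and no padding is needed (Proposition~\ref{prop:closure_reduction}). The five co-clones are then handled individually: for $\inv{\clonename{S}_{00}^2}$, $\inv{\clonename{S}_{02}^2}$, $\inv{\clonename{S}_{0}^2}$ one cites Nordh and Zanuttini on partial frozen co-clones to obtain $\relOr{2}\in\cocloneFrnoeq{\Gamma}$ directly; for $\inv{\clonename{D}_2}$ the frozen structure only guarantees a $4$-ary gadget relation $R_4^p$, and a short direct reduction $\cardminsat(\relOr{2})\le\cardminsat(R_4^p)$ is supplied; and for $\inv{\clonename{S}_{01}^2}$ --- exactly where your padding breaks --- a bespoke direct reduction is built by extracting a gadget $M(w,x,y,t)$ from a non-Horn relation of $\Gamma$ and treating two subcases by hand. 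Your polymorphism dictionary is correct and is what organises this five-way split, but the ``implement $\relOr{2}$ and pad the auxiliaries'' step must be replaced by frozen implementations plus ad hoc reductions for the two stubborn cases.
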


The following proposition covers the polynomial cases of Theorem \ref{theorem:classification}.

\begin{prop}\label{prop:classification_easy}
 Let $\Gamma$ be a  set of logical relations.
 If $\Gamma$ is \daffine or Horn, then  $\cardminsat(\Gamma)$   is decidable in
polynomial time. 
\end{prop}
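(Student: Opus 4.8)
The plan is to handle the two tractable classes separately, exploiting in each case that a cardinality-minimal model is essentially unique (Horn) or efficiently describable via equalities/disequalities (\daffine). In both cases, after computing (a representation of) the set of cardinality-minimal models, the question ``is $x$ true in a cardinality-minimal model?'' reduces to a polynomial-time check. Throughout, note that a $\Gamma$-formula with $\Gamma$ Horn (resp. \daffine) is, by definition, logically equivalent to a Horn CNF (resp. to a conjunction of unary clauses and $2$-XOR-clauses), and such an equivalent formula can be computed in polynomial time by expanding each constraint application into its defining CNF; so we may as well assume the input $\phi$ is already in the corresponding normal form.

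\textbf{The Horn case.} First I would recall the standard fact that a satisfiable Horn formula $\phi$ has a unique minimal model $I_{\min}$ w.r.t.\ set inclusion, which can be computed in polynomial time (e.g.\ by unit propagation / the standard least-model computation). Since set-inclusion minimality implies cardinality-minimality, $I_{\min}$ is in particular a cardinality-minimal model, and it has the smallest cardinality among all models. Hence if $\phi$ is unsatisfiable the answer is ``no''; otherwise, I claim $x$ is true in a cardinality-minimal model iff $x \in I_{\min}$. The ``if'' direction is immediate. For ``only if'': suppose $J$ is a cardinality-minimal model with $x \in J$; then $|J| = |I_{\min}|$ since $I_{\min}$ already attains the minimum cardinality, but $I_{\min} \subseteq J$ (as $I_{\min}$ is contained in every model), so $I_{\min} = J$, hence $x \in I_{\min}$. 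Thus the algorithm is: compute $I_{\min}$ (if it exists) and check membership of $x$ — all in polynomial time.

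\textbf{The \daffine case.} Here $\phi$ is equivalent to a conjunction $\phi'$ of unary clauses ($x_i$ or $\neg x_i$) and $2$-XOR-clauses ($x_i \oplus x_j = 0$, i.e.\ $x_i = x_j$, or $x_i \oplus x_j = 1$, i.e.\ $x_i \neq x_j$), computable in polynomial time. First test satisfiability of $\phi'$ in polynomial time (this is a special case of affine satisfiability, or can be done directly via a union-find propagation of the equalities/disequalities and the forced unary values, checking for contradictions); if unsatisfiable, answer ``no''. Otherwise, the equality/disequality constraints partition $\var(\phi')$ into classes, and within each connected component the truth values of all variables are determined by the value of any one of them (equalities force equal values, disequalities force opposite values). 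Call a component \emph{fixed} if some variable in it is forced by a unary clause (or by a chain of equalities/disequalities leading to one), and \emph{free} otherwise. For a fixed component the number of its variables set to $1$ is determined; for a free component, choosing the ``base'' value $0$ or $1$ yields two possible counts of $1$'s, namely $a$ and (size $- a$) for some $a$ computable from the disequality structure, and a cardinality-minimal model picks, independently in each free component, the orientation minimizing that component's contribution (ties broken either way). Now, $x$ lies in some component $\kappa$; $x$ is true in a cardinality-minimal model iff either (i) $\kappa$ is fixed and $x$'s forced value is $1$; or (ii) $\kappa$ is free and there is a cardinality-minimal orientation of $\kappa$ making $x$ true — which holds iff the orientation with $x=1$ has contribution $\le$ the orientation with $x=0$, again a direct arithmetic comparison. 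All of this is polynomial-time.

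\textbf{Main obstacle.} The conceptual content is light; the main care-point is the reduction to normal form and the bookkeeping in the \daffine case — in particular, correctly arguing that cardinality-minimization decomposes into \emph{independent} choices over the free components (because the components are variable-disjoint and no constraint links them), so that greedily optimizing each component is globally optimal, and then checking that the ``relevance'' question for $x$ only depends on $x$'s own component. One should also double-check the degenerate situations (a variable appearing in a $2$-XOR-clause $x \oplus x$, a component that is both trivially a singleton and forced, $\phi$ having no variables, etc.), but these are routine. The Horn case is completely standard.
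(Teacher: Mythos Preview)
Your proof is correct and follows essentially the same approach as the paper: unit propagation to the unique minimal model in the Horn case, and a component/cluster decomposition with per-component orientation choice in the \daffine case. The only cosmetic difference is that the paper disposes of unary clauses upfront (``without loss of generality'') whereas you keep them and distinguish fixed versus free components; your treatment is slightly more explicit about ties and edge cases, but the underlying argument is the same.
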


\begin{proof}
 Let $\phi$ be a
$\Gamma$-formula. 
Suppose first that $\Gamma$ is Horn. In this case $\phi$ can be written as a
Horn formula. The unique minimal model of $\varphi$ can be found in polynomial time by
unit propagation.

Suppose now that $\Gamma$ is \daffine.
Without loss of generality we can suppose that $\phi$ does not contain unitary
clauses. Then each clause of $\phi$ expresses either the equality or the
disequality between two variables. Using the transitivity of the equality
relation and the fact that in the Boolean case $a \neq b \neq c$ implies $a =
c$, we can identify equivalence classes of variables such that each two
classes are either independent or they must have contrary truth values. We call
a pair $(A,B)$ of classes with contrary truth values \emph{cluster}, $B$ may be
empty. 
It follows easily that any two clusters are independent and thus to obtain a
model of $\phi$, we choose for each cluster $(A,B)$ either $A=1, B=0$ or $A=0,
B=1$. 
We suppose in the following that $\phi$ is satisfiable (otherwise, we will
detect a contradiction while constructing the clusters).
The weight contribution of each cluster to a model is either $|A|$ or $|B|$. It
is then enough to consider the cluster $(A,B)$ that contains the atom $x_i$. If
$x_i\in A$ and  
$|A|\le |B|$, then $x_i$ belongs to a cardinality-minimal model of $\varphi$, else it
does not. 
\end{proof}

To complete the proof of  Theorem \ref{theorem:classification} it remains to prove 
$\THETA{2}$-hardness for the remaining cases. These hardness results rely on the application of tools
from universal algebra (see, e.g.  
\cite{gei68,jeavons98,scs08,Nordhz09}), which  we now introduce.

Let us first recall a well-known closure operator  on sets of logical relations.

\begin{defi}\label{def:closure}
Let $\Gamma$ be a set of logical relations.
Then,  $\coclone\Gamma$ is the set of relations that can be defined by a formula of the form $\exists X\varphi$, such that $\varphi$ is a $(\Gamma\cup\set{=})$-formula and  $X\subseteq \var(\varphi)$.
\end{defi}

This closure operator is relevant in order to obtain complexity results for the satisfiability problem. Indeed, assume that $\Gamma_1\subseteq\coclone{\Gamma_2}$, then a $\Gamma_1$-formula can be transformed into a satisfiability-equivalent $\Gamma_2$-formula, thus showing that $\sat(\Gamma_1)$ can be reduced in polynomial time to $\sat(\Gamma_2)$ (see \cite{jeavons98}). Hence the complexity of  $\sat(\Gamma)$ depends only on $\coclone\Gamma$. 

The set $\coclone\Gamma$  is a relational clone (or \emph{co-clone}). 
Accordingly, in order to obtain a full
complexity classification for the satisfiability problem we only
have to study the co-clones.

Interestingly, there exists a Galois correspondence between the lattice of Boolean relations (co-clones) and the lattice of Boolean functions (clones) (see \cite{gei68,bokakoro69}). This  correspondence is established through the operators $\pol .$ and $\inv .$ defined below.

\begin{defi}\rm
Let $f\colon\{0,1\}^m\rightarrow\{0,1\}$ and $R\subseteq\{0,1\}^n$. We say that $f$ is a \emph{polymorphism} of $R$, if for all $x_1,\dots,x_m\in R$, where $x_i=(x_{i}[1],x_{i}[2],\dots,x_{i}[n])$, we have
$
\bigl(f\bigl(x_{1}[1],\cdots,x_{m}[1]\bigr),
    f\bigl(x_{1}[2],\cdots,x_{m}[2]\bigr),
    \dots,
    f\bigl(x_{1}[n],\cdots,x_{m}[n]\bigr)
  \bigr)\in R.
$
\end{defi}

%If $f\in\pol R$, we also say that $R$ is \emph{closed under $f$}, or $f$ \emph{preserves} $R$. 
For a set of relations $\Gamma$ we write $\pol\Gamma$ to denote
the set of all polymorphisms of $\Gamma$, i.e., the set of all Boolean
functions that preserve every relation in  $\Gamma$. For every $\Gamma$,
$\pol\Gamma$ is a \emph{clone}, i.e., a set of Boolean functions that contains
all projections   and is closed under composition. 
%
% The smallest clone
% containing  a set $B$ of Boolean functions will be denoted by $[B]$ in the
% sequel ($B $ is also called a \emph{basis} for $[B]$). Important Boolean
% clones are $\clonename M$, containing the \emph{monotone functions} ($f$ is
% monotone if $\alpha_1\leq\beta_1,\dots,\alpha_n\leq\beta_n$ implies
% $f(\alpha_1,\dots,\alpha_n)\leq f(\beta_1,\dots,\beta_n)$), and for
% $i\in\set{0,1}$ the clone $R_i$ containing the $i$-reproducing functions ($f$
% is $i$-reproducing if $f(i,\dots,i)=i$). For a set $B$ of Boolean functions,
% let $\inv B$ denote the set of all \emph{invariants} of $B$, i.e., the set of
% all Boolean relations that are preserved by every function in $B$. It can be
% observed that each  $\inv B$ is a relational clone.
%
As shown first in
\cite{gei68,bokakoro69} the operators $\pol{.}-\inv{.}$ constitute a Galois
correspondence between the lattice of sets of Boolean relations and the
lattice of sets of Boolean functions. In particular for every set  $\Gamma$ of Boolean
relations 
%and for every set $B$ of Boolean functions
$\inv{\pol{\Gamma}}=\coclone\Gamma$,
% and $\pol{\inv{B}}=[B]$.
and  there is a
one-to-one correspondence between clones and co-clones. Hence
we may compile a full list of co-clones from the list of clones obtained by Emil Post in \cite{pos41}. The list of all Boolean clones with finite bases can be found e.g.\ in \cite{bocrrevo03}. A compilation of all co-clones with finite bases is given in \cite{borescvo05}. In the following, when discussing about bases for clones or co-clones we implicitly refer to these two lists.
 Figure~\ref{figure:postlattice}  provides a representation of the inclusion structure of the clones, and hence also of the co-clones.
For two clones $\mcal C_1$ and $\mcal C_2$, it holds that $\inv{\mcal C_1}\subseteq\inv{\mcal C_2}$ if and only if $\mcal C_2\subseteq\mcal C_1$.

%-------------------------------- Klassen Graph (Beginn) ---------------------
\begin{figure}
  \begin{center}\includegraphics[scale=0.75]{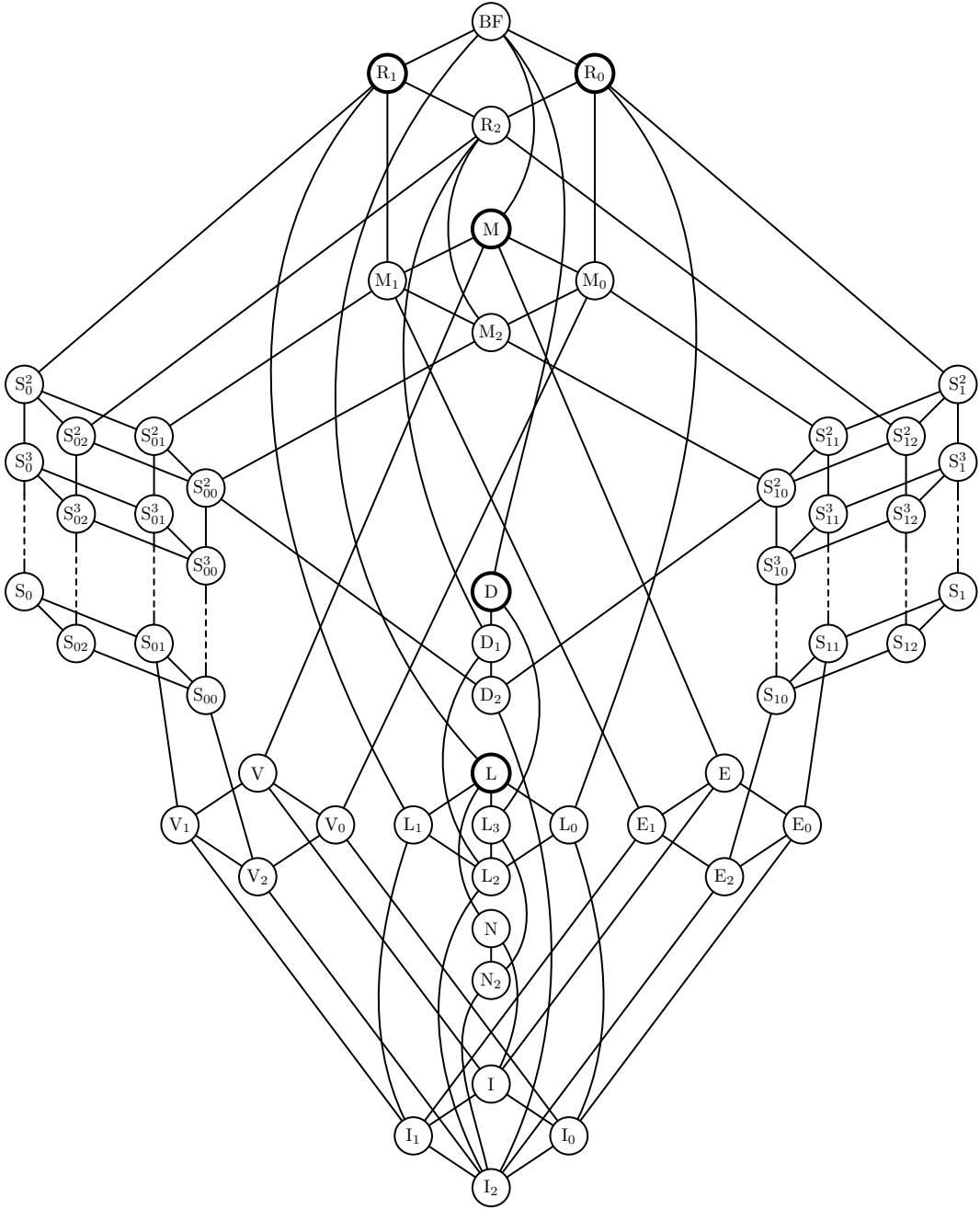}\end{center}
  \caption{Lattice of all Boolean clones}
  \label{figure:postlattice}
\end{figure}
%-------------------------------- Klassen Graph (Ende) ----------------------

\medskip

There exist easy criteria to determine if a given relation is Horn, dual Horn, Krom, affine or \daffine. 
Indeed all these  classes can be characterized by their polymorphisms (see \eg~\cite{CreignouV08} for a detailed description).
 We recall some of these properties here briefly for completeness. 
The operations of conjunction, disjunction,
addition and majority applied  on $k$-ary Boolean vectors are applied coordinate-wise.
%NC begin
%The operations of conjunction, disjunction,
% and addition applied coordinate-wise on $k$-ary Boolean vectors $m,m', m'' \in \{0,1\}^k$ are defined as follows: 
%\begin{eqnarray*}
%m\land m' & = & (m[1] \land m'[1], \ldots, m[k] \land m'[k])\\
%m\lor  m' & = & (m[1] \lor m'[1], \ldots, m[k] \lor m'[k])\\
%m\oplus m' & = & (m[1] \oplus m'[1], \ldots, m[k] \oplus m'[k])
%\end{eqnarray*}
%where $m[i]$ is the $i$-th coordinate of the vector~$m$ and~$\oplus$ is the exclusive-or operator. Given a logical relation~$R$, the following \emph{closure properties} fully determine the structure of~$R$. 
\begin{itemize}
\item $R$ is Horn if and only if $m, m' \in R$ implies $m \land m'\in R$.
\item $R$ is dual Horn if and only if $m, m' \in R$ implies $m \lor m'\in R$.
\item $R$ is affine if and only if $m, m', m'' \in R$ implies $m\oplus m' \oplus m'' \in R$.
\item $R$ is Krom if and only if $m, m', m'' \in R$ implies $Majority(m,  m', m'') \in R$.
\item $R$ is \daffine   if and only if $m, m', m''\in R$ implies both  $m\oplus m' \oplus m'' \in R$ and  $majority(m,  m', m'') \in R$.
\end{itemize}

In terms of clones, given a set $\Gamma$ of logical relations, this corresponds to   the following:
\begin{itemize}
\item $\Gamma$  is Horn if and only if $\Gamma\subseteq \inv{\clonename{E}_2}$.
\item $\Gamma$  is dual Horn if and only if $\Gamma\subseteq \inv{\clonename{V}_2}$.
\item $\Gamma$  is affine if and only if $\Gamma\subseteq \inv{\clonename{L}_2}$.
\item $\Gamma$  is Krom if and only if $\Gamma\subseteq \inv{\clonename{D}_2}$.
\item $\Gamma$  is \daffine   if and only if $\Gamma\subseteq \inv{\clonename{D}_1}$.
\end{itemize}

\medskip

Unfortunately, since we are here interested in cardinality-minimal models, it seems difficult  to use the Galois connection explained above.
Indeed, existential variables and equality constraints that may occur when transforming  a $\Gamma_1$-formula into a satisfiability-equivalent $\Gamma_2$-formula are problematic, as they can change the set of models and the cardinality of each model. 
Therefore, we will use two other notions of closure, which are  more restricted.

First we need to introduce the notion of frozen variable 
%(resp., index) 
 in a formula (resp. in a relation).

\begin{defi}\label{def:frozen}
Let $\varphi$ be a formula and
$x\in\var(\varphi)$, then $x$ is said to be \emph{frozen} in $\varphi$ if it
is assigned the same truth value in all its models.

%In a relation $R\in\{0,1\}^k$ an index $i$ is  \emph{frozen} if for all  $(\alpha_1,\ldots, \alpha_k)\in R$, $\alpha_i=\varepsilon$, with $\varepsilon=0$ or $1$.
Since a relation can be defined by a formula, by abuse of language we also speak about a frozen variable in a relation.
\end{defi}

Let us now define two closure operators on sets of logical relations.

\begin{defi}\label{def:closure:2}
Let $\Gamma$ be a set of logical relations.
\begin{itemize}
 \item $\cocloneFr{\Gamma}$ is the set of relations that can be defined by a formula  of the form   $\exists X\varphi$,  such that $\varphi$ is a $(\Gamma\cup\set{=})$-formula, $X\subseteq \var(\varphi)$ and every variable in $X$ is frozen in $\varphi$. 
 \item $\cocloneFrnoeq{\Gamma}$ is the set of relations that can be defined by a formula  of the form   $\exists X\varphi$,  such that $\varphi$ is a $\Gamma$-formula, $X\subseteq \var(\varphi)$ and every variable in $X$ is frozen in $\varphi$. 
\end{itemize}
\end{defi}

These two notions differ in the sense that the first one allows equality constraints, and the other does not. The symbol $\ne$ stands for ``no equality constraint is allowed''.

On the one hand,  the first notion of closure is well known from an algebraic point of view. Indeed, $\cocloneFr{\Gamma}$ is a partial frozen co-clone. The lattice of the partial frozen co-clones is   partially known, especially within  the Krom fragment (see\cite{Nordhz09}).
On the other hand, the more restricted notion of closure, $\cocloneFrnoeq{.}$,  is useful for complexity issues.
Indeed, this closure operator induces  
 reductions between problems we are interested in, as shown by the following proposition.

\begin{prop}\label{prop:closure_reduction}
  Let $\Gamma_1$ and $\Gamma_2$ be constraint languages with $\Gamma_1\subseteq\cocloneFrnoeq{\Gamma_2}$. Then
  $\cardminsat(\Gamma_1)\le\cardminsat(\Gamma_2)$.
\end{prop}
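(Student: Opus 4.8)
The plan is to transform a $\Gamma_1$-formula into a $\Gamma_2$-formula constraint by constraint: each $\Gamma_1$-constraint gets replaced by a fixed implementation of its relation over $\Gamma_2$, using a fresh copy of the frozen existential variables for every occurrence. The whole point of the argument is that, because the existentially quantified variables are \emph{frozen}, this substitution only adds a number of true variables to each model that does \emph{not} depend on the model; hence it preserves cardinality-minimal models, and in particular the truth value of the queried atom in them.

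In detail, I would first fix, once and for all, for every relation $R\in\Gamma_1$ of arity $k$, a $\Gamma_2$-formula $\varphi_R$ with free variables $z_1,\dots,z_k$ and additional variables $Y_R$ such that $\exists Y_R\,\varphi_R$ implements $R$ and every variable of $Y_R$ is frozen in $\varphi_R$; such an implementation exists by $\Gamma_1\subseteq\cocloneFrnoeq{\Gamma_2}$. If $R\neq\emptyset$, freezing yields a \emph{unique} partial assignment $\beta_R$ on $Y_R$ that occurs in all models of $\varphi_R$, and I write $c_R$ for the number of variables it sets to $1$; crucially $c_R$ is a constant that does not depend on the values chosen for $z_1,\dots,z_k$ (the empty case makes every formula containing $R$ unsatisfiable and is handled trivially). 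The reduction then takes an instance $(\phi,x)$ with $\phi=\bigwedge_{j=1}^m R_j(\mathbf{x}_j)$ and $x\in\var(\phi)$, and for each $j$ introduces a disjoint fresh copy $Y_j$ of $Y_{R_j}$, letting $\phi_j$ be $\varphi_{R_j}$ with the $z$'s substituted by the variables of $\mathbf{x}_j$ and $Y_{R_j}$ renamed to $Y_j$; it outputs $(\phi',x)$ with $\phi':=\bigwedge_{j=1}^m\phi_j$. Since $\Gamma_1,\Gamma_2$ are fixed finite languages, each $\varphi_{R_j}$ is of constant size, so only substitution and a counter-driven canonical renaming are needed, and $\phi'$ is a $\Gamma_2$-formula computable in logarithmic space. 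Here the equality-free condition built into $\cocloneFrnoeq{\cdot}$ is essential: it guarantees that $\phi'$ contains no $=$-constraints and is thus a legitimate instance of $\cardminsat(\Gamma_2)$.

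For correctness, set $\beta:=\bigcup_{j=1}^m\beta_j$, where $\beta_j$ is the forced assignment on $Y_j$, and $c:=\sum_{j=1}^m c_{R_j}$, a constant independent of any model. I would show that $I\mapsto I\cup\beta$ is a bijection between the models of $\phi$ and the models of $\phi'$: if $I\models\phi$, then each $\mathbf{x}_j$ is mapped into $R_j$, so $\phi_j$ is satisfiable by some assignment to $Y_j$, which on the frozen variables $Y_j$ must be $\beta_j$, and the $Y_j$ being pairwise disjoint gives $I\cup\beta\models\phi'$; conversely, if $I'\models\phi'$, then restricting $I'$ to $\var(\phi)$ models every $R_j(\mathbf{x}_j)$ (project $\phi_j$ onto its free variables) and hence $\phi$, while $I'$ must coincide with $\beta_j$ on each $Y_j$. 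Thus $|I\cup\beta|=|I|+c$ for every model $I$ of $\phi$, so $I$ is cardinality-minimal for $\phi$ iff $I\cup\beta$ is cardinality-minimal for $\phi'$; and since $x\in\var(\phi)\subseteq\var(\phi')$ and $(I\cup\beta)(x)=I(x)$, we get that $(\phi,x)$ is a yes-instance of $\cardminsat(\Gamma_1)$ iff $(\phi',x)$ is a yes-instance of $\cardminsat(\Gamma_2)$.

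The only genuinely delicate point — and the step I expect to require the most care — is the constancy of the offset $c$. It rests on two things: every quantified variable being frozen, so its value is the same across all models of its block and in particular independent of the arguments $\mathbf{x}_j$; and the implementations using no equality constraints, both so that $\phi'$ stays a $\Gamma_2$-formula and so that no two ``real'' variables get identified (which could alter model cardinalities non-uniformly). Once this is in place, the rest is routine bookkeeping about disjoint fresh variable blocks.
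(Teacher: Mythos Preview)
Your proof is correct and follows essentially the same approach as the paper: replace each $\Gamma_1$-constraint by its defining $\Gamma_2$-formula with a fresh copy of the frozen existential variables, then observe that frozenness forces a fixed assignment on the new variables and hence a constant cardinality offset, so cardinality-minimal models correspond bijectively. Your write-up is in fact more explicit than the paper's (you spell out the bijection $I\mapsto I\cup\beta$, the constant offset $c$, and the log-space computability), but the underlying idea is identical.
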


\begin{proof}
Let $\phi_1$ be a $\Gamma_1$-formula. We construct a formula $\phi_2$
by performing the following steps:
\begin{itemize}
\item Replace in $\phi_1$ every constraint from $\Gamma_1$ by its defining
existentially quantified $\Gamma_2$-formula, in which all existential variables are frozen. Use fresh existential variables for each constraint.
\item Delete existential quantifiers. 
% \item Delete equality
% clauses and replace all variables that are connected via a chain
% of equality constraints by a common new variable.
\end{itemize}
Then, obviously, $\phi_2$ is a $\Gamma_2$-formula and
$\phi_1$ is satisfiable if and only if $\phi_2$ is satisfiable. 
Moreover, since all existentially quantified  variables are frozen,  removing the quantifiers   preserves the
cardinality-minimal models, i.e., there is a one-to-one correspondence between the cardinality-minimal 
models of $\phi_1$ and the ones of $\phi_2$, which preserves the truth values of the variables in $\var(\phi_1)$. 
Therefore, a given atom $x$ is true in a cardinality-minimal model of $\phi_1$ if and only if it is true in a cardinality-minimal model of $\phi_2$.
Moreover the
complexity of the above transformation is polynomial, thus concluding the proof.
\end{proof}

As a consequence to get hardness results  in the Krom fragment we will use both notions of closure.
Roughly speaking the main strategy is as follows:   Theorem \ref{theorem:cardmaxTwoSAT} shows that  $\cardminsat(\relOr{2})$ is $\THETA{2}$-hard. Hence, in most of the cases, in order to prove that 
for some class of constraint languages the problem \cardminsat\  is  $\THETA{2}$-hard, according to Proposition \ref{prop:closure_reduction} we will prove that  for every language $\Gamma$ in the studied class, $\cocloneFrnoeq{\Gamma}$ contains $\relOr{2}$. But showing that $\relOr{2}\in\cocloneFrnoeq{\Gamma}$ will be done using information on the corresponding partial co-clone $\cocloneFr{\Gamma}$. For this  we need an additional technical notion.  An $n$-ary relation $R$ is \emph{irredundant} if there is no pair  $(i,j)$,   $1\leq i<j\leq n$, such that for all $(\alpha_1,\ldots, \alpha_n)\in R$, $\alpha_i=\alpha_j$, and if there is no $i$, $1\leq i\leq n$, such that for all $(\alpha_1,\ldots,\alpha_i,\ldots, \alpha_n)\in R$, the tuple $(\alpha_1,\ldots,1-\alpha_i,\ldots, \alpha_n)\in R$ as well (which means that the $i$th variable in the formula representing $R$ is unconstrained)

The motivation for defining irredundant relations is that one can easily represent these relations with formulas in which no equality constraints appear. Hence,  as stated in the following lemma,  if we can define an irredundant relation using the $\cocloneFr{.}$-operator, we can also  obtain an implementation using the $\cocloneFrnoeq{.}$-operator.

\begin{lem}\label{lemma:frozen_implementation}
Let $\Gamma$ be a set of logical relations and $R$ be an irredundant relation that has no frozen variable. If $R\in\cocloneFr{\Gamma}$, then $R\in\cocloneFrnoeq{\Gamma}$.
%This holds in particular for the relation $\relOr{2}$. 
\end{lem}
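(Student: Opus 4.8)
The plan is to take any formula witnessing $R\in\cocloneFr{\Gamma}$ and massage it into one witnessing $R\in\cocloneFrnoeq{\Gamma}$; since the two closure operators differ only in whether equality constraints are allowed, the whole task is to remove the equality conjuncts. So say $R$ is $n$-ary with variables $v_1,\dots,v_n$, and let $\exists X\,\varphi$ define $R$ with $\varphi$ a $(\Gamma\cup\set{=})$-formula, $\var(\varphi)=\set{v_1,\dots,v_n}\cup X$, and every variable of $X$ frozen in $\varphi$.

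First I would record two structural facts about $\varphi$ that come directly from the hypotheses on $R$. (i) Every non-trivial equality conjunct of $\varphi$ joins two variables of $X$. Indeed, a conjunct $v_i=v_j$ with $v_i,v_j$ free would force coordinates $i$ and $j$ of $R$ to agree on every tuple, contradicting irredundancy; and a conjunct $v_i=x$ with $x\in X$ would, since $x$ is frozen, make $v_i$ frozen in $\varphi$ and hence the $i$th coordinate of $R$ constant, contradicting the assumption that $R$ has no frozen variable. (ii) Every free variable $v_i$ occurs in at least one $\Gamma$-constraint of $\varphi$. Otherwise $v_i$ would occur only in equality conjuncts; by (i) each such conjunct is trivial (it cannot join two $X$-variables since $v_i\notin X$), i.e.\ it is $v_i=v_i$, so the $i$th coordinate of $R$ is unconstrained --- again contradicting irredundancy.

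Next comes the elimination of equality constraints, done one at a time. A trivial conjunct $v=v$ is simply deleted. For a conjunct $s=t$ with $s,t\in X$ distinct --- the only remaining possibility, by (i) --- I would replace every occurrence of $s$ throughout $\varphi$ by $t$, delete the resulting trivial conjunct $t=t$, and drop $s$ from $X$. By the equivalence $\exists s\,(\psi\wedge s=t)\equiv\psi[s/t]$, this rewriting preserves the relation defined by $\exists X\,\varphi$; it keeps every remaining existential variable frozen, since passing to the submodels with $s=t$ cannot unfreeze a variable and the deleted conjunct was vacuous; and it preserves property (i), because a renamed equality conjunct $s=u$ with $u\in X$ becomes $t=u$, still joining two $X$-variables. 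Each step strictly decreases the number of non-trivial equality conjuncts, so after finitely many steps we reach a formula $\varphi^{\ast}$ with existential variables $X^{\ast}$ and no equality conjunct at all.

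Finally one checks that $\exists X^{\ast}\,\varphi^{\ast}$ is a legitimate witness for $R\in\cocloneFrnoeq{\Gamma}$: it still defines $R$, $\varphi^{\ast}$ is a genuine $\Gamma$-formula, and every variable of $X^{\ast}$ is frozen in $\varphi^{\ast}$. The only delicate point is that $\var(\varphi^{\ast})$ still equals $\set{v_1,\dots,v_n}\cup X^{\ast}$, i.e.\ that no free variable has disappeared; this is exactly where fact (ii) is used, since the rewriting only renames existential variables inside $\Gamma$-constraints and deletes equality conjuncts, so each $v_i$ still occurs in the $\Gamma$-constraint supplied by (ii). I expect the main obstacle to be this piece of bookkeeping rather than any deep combinatorics: the two halves of the definition of irredundancy, together with the no-frozen-variable hypothesis, are precisely what rule out the three bad configurations (a free--free equality, a free--existential equality, and a free variable occurring only in equalities), and once these are excluded the rewriting goes through mechanically.
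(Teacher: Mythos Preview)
Your proposal is correct and follows essentially the same approach as the paper: both proofs observe that equality constraints can only involve existential variables (your fact (i) matches the paper's case analysis verbatim), and then eliminate those equalities by identifying the variables they link---you do it by iterated substitution, the paper does it in one shot via equivalence classes, which amounts to the same thing. Your fact (ii) and the final paragraph about free variables not disappearing are extra bookkeeping that the paper leaves implicit, but this is a presentational rather than a mathematical difference.
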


\begin{proof}
Let $R$ be such an irredundant relation, and suppose that it is defined by the formula $\exists X\varphi$ 
where $\varphi$ is a $(\Gamma\cup\{=\})$-formula in which all variables from $X$ are frozen.
Two variables from $\var(\varphi)\setminus X$ (this set corresponds to variables of $R$)   cannot  appear in an equality constraint in $\varphi$ since $R$ is irredundant. 
A variable from $X$ and a variable from $\var(\varphi)\setminus X$  cannot occur together in an equality constraint in $\varphi$ either since all variables from $X$ are frozen in $\varphi$ and   $R$ has no frozen variable. Therefore, equality constraints can only involve variables from $X$. Using the transitivity of the equality relation we can identify equivalence classes.
In the formula $\exists X\varphi$, all variables from $X$ that are  in the same equivalence class can be replaced by a single variable, 
 thus obtaining a frozen implementation of $R$ with no equality.
\end{proof}

We are now in a position to prove the following proposition, which covers the hard cases of Theorem \ref{theorem:classification}.

\begin{prop}\label{prop:classification_hard}
 Let $\Gamma$ be a  Krom constraint language.
 If $\Gamma$ is neither \daffine nor Horn, then  $\cardminsat(\Gamma)$   is $\THETA{2}$-hard. 
\end{prop}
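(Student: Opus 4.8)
The plan is to reduce the whole statement to the single hard instance already available to us, namely the fact (noted right after Lemma~\ref{lemma:frozen_implementation}, via Theorem~\ref{theorem:cardmaxTwoSAT}) that $\cardminsat(\relOr{2})$ is $\THETA{2}$-hard. By Proposition~\ref{prop:closure_reduction} it therefore suffices to prove that $\relOr{2}\in\cocloneFrnoeq{\Gamma}$ for every Krom constraint language $\Gamma$ that is neither \daffine nor Horn; and since $\relOr{2}$ is irredundant and has no frozen variable, Lemma~\ref{lemma:frozen_implementation} allows us to weaken this to proving $\relOr{2}\in\cocloneFr{\Gamma}$, that is, to exhibiting a frozen implementation of $\relOr{2}$ that is permitted to use equality constraints.

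The first step is to reduce to finitely many cases. By the polymorphism characterisations recalled above, the hypothesis ``$\Gamma$ Krom, not Horn, not \daffine'' translates, on Post's lattice, into $\clonename{D}_2\subseteq\pol{\Gamma}$ while $\wedge\notin\pol{\Gamma}$ and $x\oplus y\oplus z\notin\pol{\Gamma}$; hence $\coclone{\Gamma}$ is one of the Krom co-clones lying neither below $\inv{\clonename{E}_2}$ (Horn) nor below $\inv{\clonename{D}_1}$ (\daffine). Since the operator that actually controls $\cardminsat$ is the finer $\cocloneFr{\cdot}$ rather than $\coclone{\cdot}$, I would then refine this using the classification of partial frozen co-clones inside the Krom fragment \cite{Nordhz09}, obtaining a short list of partial frozen co-clones, grouped into a few families and each given by a finite base $\Gamma'$, such that every $\Gamma$ from the statement satisfies $\cocloneFr{\Gamma}\supseteq\cocloneFr{\Gamma'}$ for one of these bases.

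For each base $\Gamma'$ I would then exhibit the frozen implementation of $\relOr{2}$ explicitly. The recurring device is variable identification: collapsing the arguments of a suitable relation of $\Gamma'$ produces a unary relation fixing a variable, which manufactures frozen ``constants'' $0$ and/or $1$ that can be substituted into the remaining relations to recover $\relOr{2}$ (for instance, a relation expressing ``at least two of the three arguments are true'' fixes its argument once all three copies are identified, exactly as $\mathrm{NAND}$ fixes its argument to $0$). In the few cases near the Horn boundary where this is impossible --- prototypically the partial frozen co-clone generated by $\{\neq,\mathrm{NAND}\}$, in which frozen auxiliary variables can only ever yield constants and $\relOr{2}$ is provably \emph{not} frozen-implementable --- I would instead give a direct, cardinality-preserving log-space reduction to $\cardminsat(\Gamma')$ from a suitable $\THETA{2}$-hard problem, exploiting that a disequality constraint forces exactly one member of a pair to be true, which makes such formulas expressive enough (they encode, e.g., weighted variants of minimum vertex cover, whose membership version is $\THETA{2}$-hard).

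The main obstacle is exactly this last point. The Galois connection through $\pol{\cdot}$ and $\inv{\cdot}$ does not respect model cardinality, so the convenient statement ``$\relOr{2}\in\coclone{\Gamma}$'' is useless for our purposes and one is forced down to the far finer partial frozen co-clone lattice; the delicate part is to isolate precisely those co-clones that admit no frozen implementation of $\relOr{2}$ and to supply for each of them a bespoke reduction preserving cardinality-minimal models. The rest is finite, if somewhat tedious, bookkeeping over Post's lattice and the frozen co-clone classification of \cite{Nordhz09}.
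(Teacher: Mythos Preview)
Your overall strategy matches the paper's: reduce to finitely many co-clone cases via Post's lattice, use the frozen co-clone machinery of \cite{Nordhz09} together with Lemma~\ref{lemma:frozen_implementation} and Proposition~\ref{prop:closure_reduction} wherever possible, and fall back on ad-hoc cardinality-preserving reductions for the remaining cases. The paper carries this out over the five co-clones $\inv{\clonename{D}_2}$, $\inv{\clonename{S}_{00}^2}$, $\inv{\clonename{S}_{01}^2}$, $\inv{\clonename{S}_{02}^2}$, $\inv{\clonename{S}_0^2}$; for three of them ($\inv{\clonename{S}_{00}^2}$, $\inv{\clonename{S}_{02}^2}$, $\inv{\clonename{S}_0^2}$) the frozen co-clone coincides with the co-clone, so $\relOr{2}\in\cocloneFr{\Gamma}$ directly, exactly as you anticipate.

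There are two places where your plan diverges from what actually works. First, your reliance on \cite{Nordhz09} for \emph{all} cases overreaches: the paper explicitly notes that the partial frozen co-clone structure inside $\inv{\clonename{S}_{01}^2}$ is \emph{not} known, so there is no ``short list of bases $\Gamma'$'' to work with there. The paper instead gives a direct, self-contained reduction $\cardminsat(\relOr{2})\le\cardminsat(\Gamma)$ using only that $\Gamma$ is 1-valid dual-Horn with some non-0-valid and some non-Horn relation; this case is on the \emph{dual}-Horn side, not ``near the Horn boundary'' as you suggest, so your intuition about where the exceptions lie is off. Second, for $\inv{\clonename{D}_2}$ (which includes your example $\{\neq,\mathrm{NAND}\}$), the paper does not abandon the frozen framework in favour of an external problem like weighted vertex cover. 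Instead it uses \cite{Nordhz09} to show that a \emph{different} irredundant relation $R_4^p(x_1,x_2,x_3,x_4)=(x_1\lor x_2)\land(x_1\neq x_3)\land(x_2\neq x_4)$ lies in $\cocloneFr{\Gamma}$, hence in $\cocloneFrnoeq{\Gamma}$, and then gives a short reduction $\cardminsat(\relOr{2})\le\cardminsat(R_4^p)$ that stays within Proposition~\ref{prop:closure_reduction}. This ``intermediate relation'' trick is cleaner than importing a new hard problem and is worth noting.
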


\begin{proof}

Observe that while the Galois connection  explained above   does not seem to be
useful to study the complexity of  $\cardminsat(\Gamma)$, nevertheless   the obtained classification obeys the borders among co-clones and the classification will be obtained through
a case study based on Post's lattice.
  Since $\Gamma$ is Krom, $\Gamma\subseteq \inv{\clonename{D}_2}$. Since $\Gamma$ is neither \daffine nor Horn, $\Gamma\subseteq \inv{\clonename{D}_1}$ and $\Gamma\subseteq \inv{\clonename{E}_2}$.
Therefore, according to Post's lattice (see Figure \ref{figure:postlattice}) there are only five cases to consider, either $\coclone{\Gamma}=\inv{\clonename{D}_2}$, 
$\inv{\clonename{S}_{00}^2}$,
 $\inv{\clonename{S}_{01}^2}$,
 $\inv{\clonename{S}_{02}^2}$ 
or $\inv{\clonename{S}_0^2}$.

\paragraph{Case $\coclone{\Gamma}= \inv{\clonename{S}_{00}^2}$ or $\coclone{\Gamma}= \inv{\clonename{S}_{02}^2}$ or  $\coclone{\Gamma}= \inv{\clonename{S}_{0}^2}$.} 
According to \cite[Theorem 15]{Nordhz09}, the three co-clones $\inv{\clonename{S}_{00}^2}$,  $\inv{\clonename{S}_{02}^2}$ and $\inv{\clonename{S}_{0}^2}$ are covered by a single partial frozen co-clone. Therefore, if $\coclone{\Gamma}=\inv{\clonename{S}_{00}^2}$ (respectively, $\coclone{\Gamma}=\inv{\clonename{S}_{02}^2}$, $\coclone{\Gamma}=\inv{\clonename{S}_{0}^2}$), then 
$\cocloneFr{\Gamma}=\coclone{\Gamma}=\inv{\clonename{S}_{00}^2}$ (respectively, $\cocloneFr{\Gamma}=\coclone{\Gamma}=\inv{\clonename{S}_{02}^2}$, $\cocloneFr{\Gamma}=\coclone{\Gamma}=\inv{\clonename{S}_{0}^2}$).  Since these three co-clones contain the relation $\relOr{2}$ (see e.g. \cite{borescvo05}), we obtain that  $\cocloneFr{\Gamma}$ contains $\relOr{2}$. Since the relation $\relOr{2}$ is irredundant and has no frozen variable, it follows from Lemma \ref{lemma:frozen_implementation} that $\relOr{2}\in\cocloneFrnoeq{\Gamma}$. We conclude  using Theorem \ref{theorem:cardmaxTwoSAT} and Proposition \ref{prop:closure_reduction}.

\paragraph{Case $\coclone{\Gamma}= \inv{\clonename{D}_{2}}$.} The co-clone $\inv{\clonename{D}_{2}}$ is not covered by a single frozen co-clone, but the structure of partial frozen co-clones covering this co-clone is known \cite[Theorem 19]{Nordhz09}. In particular it holds  that $\frclosure{\Gamma}\supseteq\Gamma_4^p$, where $\Gamma_4^p$ is a  partial frozen co-clone that contains the relation
$R^p_4$  defined
by $R^p_4(x_1, x_2, x_3, x_4)=(x_1\lor x_2)\land (x_1\neq x_3)\land(x_2\neq x_4)$. Since the relation $R^p_4$  is irredundant and has no frozen variable, by Lemma \ref{lemma:frozen_implementation} it follows that $R^p_4\in\cocloneFrnoeq{\Gamma}$.
We now prove that  $\cardminsat(\relOr{2})\le\cardminsat(R^p_4)$, thus concluding the proof according to Theorem \ref{theorem:cardmaxTwoSAT} and Proposition \ref{prop:closure_reduction}.

 Let
$\varphi=\bigwedge_{(i,j)\in E} \relOr{2}(x_i,x_j)$ be an $\relOr{2}$-formula with
$\var(\varphi)=\{x_1,\ldots, x_n\}$ and $E\subseteq\{1,\ldots , n\}^2$.
 Let us consider the formula $\varphi'$ built over the set of variables 
 \begin{align*}
   \{x_1,\ldots, x_n, x'_1,\ldots, x'_n, x''_1,\ldots, x''_n\},
 \end{align*}
where the $x'_i$'s and the $x''_i$'s are fresh variables, and defined by
$ \varphi'=\bigwedge_{(i,j)\in E} R_4^p(x_i, x_j, x'_i, x'_j)\land R_4^p(x''_i,
x''_j, x'_i, x'_j).$
Observe that in every model of $\varphi'$, for all $i$ we have $x_i=x''_i$ and
$x_i\ne x'_i$. Therefore, it is easily seen that there is a one-to-one
correspondence between the set of models of $\varphi$ and the set of models of
$\varphi'$, preserving the truth values of $x_1,\ldots, x_n,$ and
transforming every model of weight $k$ of $\varphi$ into a model of weight
$n+k$ of $\varphi'$. Therefore, a given atom $x_i$ is true in a
cardinality-minimal model of $\varphi$ if and only if it is true in a
cardinality-minimal model of $\varphi'$, thus concluding the proof.

\paragraph{Case $\coclone{\Gamma}= \inv{\clonename{S}_{01}^2}$.} The co-clone $\inv{\clonename{S}_{01}^2}$ is not covered by a single co-clone either and the structure of the partial frozen co-clones covering this co-clone is not known. Therefore, we have to use another technique. We prove directly that $\cardminsat(\relOr{2})\le \cardminsat(\Gamma)$, thus concluding the proof according to Theorem \ref{theorem:cardmaxTwoSAT}. 

Observe that all relations in $\Gamma$ are 1-valid and dual-Horn, and there exist at least one relation $S$ in $\Gamma$ which is not 0-valid and one relation $R$ that is  not Horn. Since $S$ is 1-valid but not 0-valid,  
$\vec 1\in S$ and $\vec 0\notin S$. Therefore, the $\Gamma$-constraint $C_1(x)=S(x,\ldots , x)$ defines the constant 1.

Consider the constraint $C=R(x_1,\ldots, x_k)$. Since $R$ is non-Horn there exist  $m_1$ and $m_2$ in $R$ such that $m_1\land m_2\notin R$. Since $R$ is 1-valid  and dual Horn, we have $\vec 1\in R$ and  $m_1\lor m_2\in R$. For $i,j\in\{0,1\}$, set $\displaystyle V_{i,j}=\{x\mid x\in V, m_1(x)=i\land m_2(x)=j\}$. Observe that $V_{0,1}\ne\emptyset$ (respectively, $V_{1,0}\ne\emptyset$), otherwise $m_1\land m_2=m_2$ (resp., $m_1\land m_2=m_1$), contradicting the fact that  $m_1\land m_2\notin R$. 
Consider the $\{R\}$-constraint: $M(w,x,y,t)=C[w/V_{0,0},\, x/V_{0,1},\, y/V_{1,0},\, t/V_{1,1}].$  According to the above remark the two variables $x$ and  $y$ effectively occur in this constraint. Let us examine the  set of models of $M$:  it contains $0011$ (since $m_1\in R$), $0101$ (since $m_2\in R$), $0111$ (since $m_1\vee m_2\in R$) and $1111$ (since $R$ is 1-valid), but it does not contain $0001$ (since by assumption $m_1\land m_2\notin R$). We make a case distinction according to whether $1001\in M$ or not. 

Suppose $1001\not\in M$. To every  $\relOr{2}$-formula $\varphi=\bigwedge_{i=1}^m \relOr{2}(x_i, y_i)$ we associate the $\Gamma$-formula
 $\phi'=\bigwedge_{i=1}^m M(\alpha_i, x_i, y_i, t)\land C_1(t)$ where $\alpha_i$ for $ i=1,\ldots m$ and $t$ are fresh variables. Observe that cardinality-minimal models of $\phi$ and  cardinality-minimal models of $\phi'$ coincide with $t=1$ and  $\alpha_i=0$ for $ i=1,\ldots m$, thus concluding the proof.

Suppose now that $1001\in M$. To every  $\relOr{2}$-formula $\varphi=\bigwedge_{i=1}^m \relOr{2}(x_i, y_i)$ with $\vert \var(\varphi)\vert =n$, we associate the $\Gamma$-formula
 $\phi'=\bigwedge_{i=1}^m \bigwedge_{j=1}^{n+1}M(\alpha_i^j, x_i, y_i, t)\land C_1(t)$ where $\alpha_i^j$ for $i=1,\ldots m, j=1,\ldots, n+1$ and $t$ are fresh variables. Observe that minimal models of $\varphi$ can be extended to minimal models of $\phi'$ in setting $t$ to 1 and all $\alpha_i^j$ to 0. Conversely, note that $\phi'$ has a model of  cardinality $n+1$, namely  $m$ such that    $m(t)=1$, $m(x)=1$ for all $x\in\var(\phi)$ and 
$m(\alpha_i^j)=0$ for $i=1,\ldots m, j=1,\ldots, n+1$. Therefore, every cardinality-minimal model  of $\phi'$ has cardinality less than or equal to $n+1$. So, given a cardinality-minimal model $m'$ of $\phi'$, $m'\models (x_i\lor y_i)$ for all $i$, otherwise this would imply $m'(\alpha_i^j)=1$ for all $j$, contradicting the minimality of $m'$. 
Therefore, $m'$ restricted to $\var(\phi)$ is a model of $\phi$, and even a cardinality-minimal model of $\phi$. This concludes the proof.
\end{proof}

\begin{rem}
The complexity classification of $\cardminsat$ in the Krom fragment has been obtained by  means of partial frozen co-clones.
The proof could also have been obtained   using  an even more restricted notion of closure, namely $\coclonenoexistnoeq{.}$, which  allows neither existential variables nor equality constraints, together with the notion of weak bases introduced in \cite{scs08}.  Nevertheless, since the lattice of partial frozen co-clones is rather well described within the Krom fragment (see \cite{Nordhz09}), this was an opportunity to popularize these co-clones, which are of independent interest.
\end{rem}

\section{Conclusion}
\label{sect:conclusion}

%% 25 lines
\begin{table*}[ht!]
\caption{Complexity of the Reasoning Problems (Completeness Results).}
\label{tab:1}
\begin{center}
\begin{tabular}{|l|l|l|l|l|}
\hline
& general case & Horn & Krom & Horn$\cap$Krom \\
\hline
\implication Satoh & 
$\PI 2$ & $\coNP$ &$\coNP$  &$\coNP$  \\
\mc Satoh & 
$\SIGMA 2$ & $\NP$ &$\NP$  &$\NP$  \\
\implication Dalal & 
$\THETA 2$ & $\THETA 2$ & $\THETA 2$ & $\THETA 2$\\
\mc Dalal & 
$\THETA 2$ & $\THETA 2$ & $\THETA 2$ & $\THETA 2$\\
$\leq$-{\sc Relevance}   & 
$\THETA 3$ & $\THETA 2$ & $\THETA 2$ & $\THETA 2$\\
% $\subseteq$-{\sc Relevance}    &  $\SIGMA 2$ & $\NP$ &$\NP$  &$\NP$  \\
\hline
\end{tabular}
\end{center}
\end{table*}

%{\bf max. 0.25 page.}

%\smallskip
%
%\noindent
In this work we have investigated 
how the restriction of  propositional formulas to 
Krom affects the complexity of reasoning problems 
in the AI domain. 
Our results on belief revision and abduction are 
summarized in Table~\ref{tab:1}, where 
the complexity classifications 
for Krom and the combined Horn\,$\cap$\,Krom case 
refer to new results we have provided in the paper.
Having shown that 
the complexity of
problems involving 
cardinality minimality, like Dalal's revision operator or 
$\leq$-{\sc Relevance} in %logic-based 
abduction, 
is often robust to such a restriction 
(even for formulas being %(dual) 
Horn and Krom at the same time),
suggests that further classes within the Krom fragment should 
be considered to identify the exact tractability/intractability frontier.

The problem \cardmin seems to be the key for such investigations, thus
we initiated a deeper study by giving a complete classification of that
problem restricted to Krom within Schaefer's framework.
As a next step, 
%%Moreover, 
%we want to apply the yet further restricted fragments %of the \cardmin
%problem to other problems
%to the problems considered in this paper
%and analyse if they suffice to achieve tractability.
%Furthermore,
we want to study the Krom form (and 
the yet more restricted fragments thereof) in the context of further
%SAT-based 
hard reasoning problems
and analyse which of these fragments suffice to yield tractability.
%can 
%thus 
%be achieved.

%However, we have also identified cases where the restriction to Krom form does
%not help -- 
%in particular when 
%cardinality minimality was required. To get
%a better understanding of analysed in depth a $\THETA 2$ complete variant of the
%SAT 
%problem and identified tractable subcases obtained through further restricting
%the formulas.
%As a next step, we want to extend the study of the Krom fragment to further
%SAT-based reasoning problems. 
%Moreover, we want to apply the yet further restricted fragments of the \cardmin
%problem to other problems
%and analyse if they suffice to achieve tractability.

A complexity classification of $\cardminsat$ in the full propositional logic, which opens the door to many complexity results for reasoning problems  from the AI domain,  is under investigation.
 The lattice of partial frozen co-clones being insufficiently well-known, such a classification will require the use of
weak bases introduced in \cite{scs08}.

\section*{Acknowledgments}
This research has been supported by the Austrian Science Fund (FWF) through the
projects P25518, P25521and Y698, by the OeAD project FR 12/2013, and by the Campus France Amadeus project  29144UC.

%\bibliographystyle{plain}
%\bibliography{krom}

\begin{thebibliography}{26}
\providecommand{\natexlab}[1]{#1}
\providecommand{\url}[1]{\texttt{#1}}
\expandafter\ifx\csname urlstyle\endcsname\relax
  \providecommand{\doi}[1]{doi: #1}\else
  \providecommand{\doi}{doi: \begingroup \urlstyle{rm}\Url}\fi

\bibitem[Ben-Ari(2003)]{BenAri03}
M.~Ben-Ari.
\newblock \emph{Mathematical Logic for Computer Science}.
\newblock Prentice-Hall International, 2003.

\bibitem[Bodnarchuk et~al.(1969)Bodnarchuk, Kalu\v{z}nin, Kotov, and
  Romov]{bokakoro69}
V.~G. Bodnarchuk, L.~A. Kalu\v{z}nin, V.~N. Kotov, and B.~A. Romov.
\newblock Galois theory for {P}ost algebras. {I}, {II}.
\newblock \emph{Cybernetics}, 5:\penalty0 243--252, 531--539, 1969.

\bibitem[B{\"o}hler et~al.(2003)B{\"o}hler, Creignou, Reith, and
  Vollmer]{bocrrevo03}
E.~B{\"o}hler, N.~Creignou, S.~Reith, and H.~Vollmer.
\newblock Playing with {B}oolean blocks, part {I}: {P}ost's lattice with
  applications to complexity theory.
\newblock \emph{ACM-SIGACT Newsletter}, 34\penalty0 (4):\penalty0 38--52, 2003.

\bibitem[B{\"o}hler et~al.(2005)B{\"o}hler, Reith, Schnoor, and
  Vollmer]{borescvo05}
E.~B{\"o}hler, S.~Reith, H.~Schnoor, and H.~Vollmer.
\newblock Bases for {B}oolean co-clones.
\newblock \emph{Information Processing Letters}, 96:\penalty0 59--66, 2005.

\bibitem[Creignou and Vollmer(2008)]{CreignouV08}
N.~Creignou and H.~Vollmer.
\newblock Boolean constraint satisfaction problems: When does {P}ost's lattice
  help?
\newblock In \emph{Complexity of Constraints}, volume 5250 of \emph{Lecture
  Notes in Computer Science}, pages 3--37. Springer, 2008.

\bibitem[Creignou and Zanuttini(2006)]{siamcomp/CreignouZ06}
N.~Creignou and B.~Zanuttini.
\newblock A complete classification of the complexity of propositional
  abduction.
\newblock \emph{SIAM J. Comput.}, 36\penalty0 (1):\penalty0 207--229, 2006.

\bibitem[Dalal(1988)]{Dalal88}
M.~Dalal.
\newblock Investigations into theory of knowledge base revision.
\newblock In \emph{Proc.\ AAAI}, pages 475--479. AAAI Press / The MIT Press,
  1988.

\bibitem[Eiter and Gottlob(1992)]{ai/EiterG92}
T.~Eiter and G.~Gottlob.
\newblock On the complexity of propositional knowledge base revision, updates,
  and counterfactuals.
\newblock \emph{Artif. Intell.}, 57\penalty0 (2-3):\penalty0 227--270, 1992.

\bibitem[Eiter and Gottlob(1995)]{jacm/EiterG95}
T.~Eiter and G.~Gottlob.
\newblock The complexity of logic-based abduction.
\newblock \emph{J. ACM}, 42\penalty0 (1):\penalty0 3--42, 1995.

\bibitem[Eiter and Gottlob(1997)]{DBLP:conf/fct/EiterG97}
T.~Eiter and G.~Gottlob.
\newblock The complexity class $\theta_2^{\mbox{p}}$: Recent results and
  applications in {AI} and modal logic.
\newblock In \emph{Proc.\ FCT}, volume 1279 of \emph{LNCS}, pages 1--18.
  Springer, 1997.
\newblock \doi{10.1007/BFb0036168}.

\bibitem[Gasarch et~al.(1995)Gasarch, Krentel, and Rappoport]{mst/GasarchKR95}
W.~I. Gasarch, M.~W. Krentel, and K.~J. Rappoport.
\newblock {OptP} as the normal behavior of {NP}-complete problems.
\newblock \emph{Mathematical Systems Theory}, 28\penalty0 (6):\penalty0
  487--514, 1995.

\bibitem[Geiger(1968)]{gei68}
D.~Geiger.
\newblock Closed systems of functions and predicates.
\newblock \emph{Pac. J. Math}, 27\penalty0 (1):\penalty0 95--100, 1968.

\bibitem[Hermann and Pichler(2010)]{jcss/HermannP10}
M.~Hermann and R.~Pichler.
\newblock Counting complexity of propositional abduction.
\newblock \emph{J. Comput. Syst. Sci.}, 76\penalty0 (7):\penalty0 634--649,
  2010.

\bibitem[Jeavons(1998)]{jeavons98}
P.~G. Jeavons.
\newblock On the algebraic structure of combinatorial problems.
\newblock \emph{Theor. Comput. Sci.}, 200:\penalty0 185--204, 1998.

\bibitem[Krentel(1988)]{jcss/Krentel88}
M.~W. Krentel.
\newblock The complexity of optimization problems.
\newblock \emph{J. Comput. Syst. Sci.}, 36\penalty0 (3):\penalty0 490--509,
  1988.

\bibitem[Krentel(1992)]{tcs/Krentel92}
M.~W. Krentel.
\newblock Generalizations of {OptP} to the polynomial hierarchy.
\newblock \emph{Theor. Comput. Sci.}, 97\penalty0 (2):\penalty0 183--198, 1992.

\bibitem[Liberatore and Schaerf(2001)]{jcss/LiberatoreS01}
P.~Liberatore and M.~Schaerf.
\newblock Belief revision and update: Complexity of model checking.
\newblock \emph{J. Comput. Syst. Sci.}, 62\penalty0 (1):\penalty0 43--72, 2001.

\bibitem[Lukasiewicz and Malizia(2017)]{DBLP:journals/tcs/LukasiewiczM17}
T.~Lukasiewicz and E.~Malizia.
\newblock A novel characterization of the complexity class
%	\textTheta}\({}^{\mbox{P}}\)\({}_{\mbox{k}}\)
  $\Theta^{P}_{k}$
based on counting and
  comparison.
\newblock \emph{Theor. Comput. Sci.}, 694:\penalty0 21--33, 2017.


\bibitem[Nordh and Zanuttini(2008)]{ai/NordhZ08}
G.~Nordh and B.~Zanuttini.
\newblock What makes propositional abduction tractable.
\newblock \emph{Artif. Intell.}, 172\penalty0 (10):\penalty0 1245--1284, 2008.

\bibitem[Nordh and Zanuttini(2009)]{Nordhz09}
G.~Nordh and B.~Zanuttini.
\newblock Frozen boolean partial co-clones.
\newblock In \emph{Proc.\ ISMVL}, pages 120--125, 2009.

\bibitem[Papadimitriou(1994)]{Papadimitriou94}
C.~H. Papadimitriou.
\newblock \emph{Computational Complexity}.
\newblock Addison-Wesley, 1994.

\bibitem[Post(1941)]{pos41}
E.~L. Post.
\newblock The two-valued iterative systems of mathematical logic.
\newblock \emph{Annals of Mathematical Studies}, 5:\penalty0 1--122, 1941.

\bibitem[Satoh(1988)]{fgcs/Satoh88}
K.~Satoh.
\newblock Nonmonotonic reasoning by minimal belief revision.
\newblock In \emph{Proc.\ FGCS}, pages 455--462, 1988.

\bibitem[Schaefer(1978)]{stoc/Schaefer78}
T.~J. Schaefer.
\newblock The complexity of satisfiability problems.
\newblock In \emph{Proc.\ STOC}, pages 216--226. ACM, 1978.

\bibitem[Schnoor and Schnoor(2008)]{scs08}
H.~Schnoor and I.~Schnoor.
\newblock Partial polymorphisms and constraint satisfaction problems.
\newblock In N.~Creignou, P.~G. Kolaitis, and H.~Vollmer, editors,
  \emph{Complexity of Constraints}, volume 5250, pages 229--254. Springer
  Verlag, Berlin Heidelberg, 2008.

\bibitem[Tseytin(1970)]{tseytin}
G.~Tseytin.
\newblock On the complexity of derivation in propositional calculus.
\newblock In A.~Slisenko, editor, \emph{Studies in Constructive Mathematics and
  Mathematical Logic, Part II, Seminars in Mathematics}, page 115--125.
  Steklov Mathematical Institute, 1970.

\bibitem[Wagner(1988)]{icalp/Wagner88}
K.~Wagner.
\newblock On restricting the access to an {NP-}oracle.
\newblock In \emph{Proc.\ ICALP}, volume 317 of \emph{LNCS}, pages 682--696.
  Springer, 1988.

\end{thebibliography}

\end{document}